\documentclass[10pt,conference]{IEEEtran}

\usepackage{verbatim} 
\usepackage{clrscode3e}
\usepackage[pdftex]{graphicx}
\usepackage{setspace}
\usepackage[cmex10]{amsmath}
\usepackage{amsmath, amsthm, amssymb}
\usepackage{algorithmic}
\usepackage{algorithm}
\usepackage{subfig}
\usepackage{caption}
\usepackage{paralist}
\usepackage{cases}
\usepackage{cite}

\usepackage{tikz}

\usepackage{etoolbox}

\usepackage{color}

\newcommand{\ceiling}[1]{\left\lceil{#1}\right\rceil}
\newcommand{\floor}[1]{\left\lfloor{#1}\right\rfloor}
\newcommand{\setof}[1]{\left\{{#1}\right\}}

\newcommand{\framework}[1]{$\mathbf{k^2U}$}
\newcommand{\frameworkkq}[1]{$\mathbf{k^2Q}$}

 \def\myendproof{{\ \vbox{\hrule\hbox{%
   \vrule height1.3ex\hskip0.8ex\vrule}\hrule }}\par}
 \renewenvironment{proof}{\noindent{\bf Proof. }}{\myendproof}
 \newenvironment{appProof}[1]{\noindent{\bf Proof of
     #1. }}{\myendproof\vskip 0.1in}

 \setboolean{ALC@noend}{true}
\providebool{techreport}
\setbool{techreport}{true}

\newcommand{\citetechreport}[1]{\ifbool{techreport}{}{ in the report \cite{DBLP:journals/corr/abs-1501.07084}}}

\newtheorem{theorem}{Theorem}
\newtheorem{lemma}{Lemma}
\newtheorem{corollary}{Corollary}
\newtheorem{example}{Example}

\newtheorem{definition}{Definition}

\graphicspath{{fig/multiframe/}{fig/arbitrary/}{fig/dag/}} 

\usetikzlibrary{%
  arrows,%
  shapes.misc,
  shapes.arrows,%
  chains,%
  matrix,%
  positioning,
  scopes,%
  decorations.pathmorphing,
  shadows%
}

\pgfdeclarelayer{background}
\pgfdeclarelayer{foreground}
\pgfsetlayers{background,main,foreground}
\tikzstyle{materia}=[draw, fill=white, text width=1.0em, text centered,
  minimum height=4em,drop shadow]
\tikzstyle{practica} = [materia, text width=18em, minimum width=8em,
align =left,
  minimum height=3em, rounded corners, drop shadow]
\tikzstyle{texto} = [above, text width=6em, text centered]
\tikzstyle{linepart} = [draw, thick, color=blue!50, -latex', dashed]
\tikzstyle{line} = [draw, line width = 2pt, color=blue!50, -latex']
\tikzstyle{ur}=[draw, text centered, minimum height=0.01em]

\newcommand{\practica}[2]{node (p#1) [practica]
  {\\{\footnotesize{#2}}}}





\ifbool{techreport}{
\pagestyle{plain}} 
{
\addtolength{\textheight}{14pt}
\pagestyle{empty}}

\title{\textbf{\textrm{k$^2$U}}: A General Framework from $k$-Point Effective Schedulability Analysis to Utilization-Based Tests}

\author{
    Jian-Jia Chen and Wen-Hung Huang\\
    Department of Informatics\\
    TU Dortmund University, Germany
    \and
    Cong Liu\\
    Department of Computer Science\\
    The University of Texas at Dallas
}
\vspace{3mm}


\begin{document}

\maketitle

\ifbool{techreport}{
\thispagestyle{plain}} 
{
\thispagestyle{empty}} 

\begin{abstract}

  To deal with a large variety of workloads in different application domains in real-time embedded systems, a number of expressive task models have been developed.  For each individual task model, researchers tend to develop different types of techniques for deriving schedulability tests with different computation complexity and performance.
In this paper, we present a general schedulability analysis framework, namely the \framework{} framework, that can be potentially applied to analyze a large set of real-time task models under any fixed-priority scheduling algorithm, on both uniprocessor and multiprocessor scheduling. The key to \framework{} is a $k$-point effective schedulability test, which can be viewed as a ``blackbox'' interface. For any task model, if a corresponding $k$-point effective schedulability test can be constructed, then a sufficient utilization-based test can be
automatically derived. We show the generality of \framework{} by applying it to different task models, which results in new and improved tests compared to the state-of-the-art.

\ifbool{techreport}{Analogously, a similar concept by testing
  only $k$ points with a different formulation has been studied by us in
  another framework, called \frameworkkq{}, which provides quadratic
  bounds or utilization bounds based on a different formulation of
  schedulability test.  With the quadratic and hyperbolic forms,
  \frameworkkq{} and \framework{} frameworks can be used to provide
  many quantitive features to be measured, like the total utilization
  bounds, speed-up factors, etc., not only for uniprocessor scheduling
  but also for multiprocessor scheduling.  These frameworks can be
  viewed as a ``blackbox'' interface for schedulability tests and
  response-time analysis.}{}
\end{abstract} 

\section{Introduction}
\label{sec:intro}

Given the emerging trend towards building complex cyber-physical
systems that often integrate external and physical devices, many
real-time and embedded systems are expected to handle a large variety
of workloads. Different formal real-time task models have been
developed to accurately represent these workloads with various
characteristics. Examples include the sporadic task
model~\cite{mok1983fundamental}, the multi-frame task
model~\cite{DBLP:dblp_journals/tse/MokC97}, the self-suspending task
model~\cite{suspension}, the directed-acyclic-graph (DAG) task model,
etc. Many of such formal models have been shown to be expressive
enough to accurately model real systems in practice. For example, the
DAG task model has been used to represent many computation-parallel
multimedia application systems and the self-suspending task model is
suitable to model workloads that may interact with I/O devices. For each of these task models, researchers tend to develop different types of techniques that result in schedulability tests with different computation complexity and performance (e.g., different utilization bounds).

\ifbool{techreport}{Over the years, real-time systems researchers have devoted a significant amount of time and efforts to efficiently analyze different task models. Many successful stories have been told. For many of the above-mentioned task models, efficient scheduling and schedulability analysis techniques have been developed (see \cite{DavisSurvey2011} for a recent survey). Unfortunately, for certain complex models such as the self-suspending task model, existing schedulability tests are rather pessimistic, particularly for the multiprocessor case (e.g., no utilization-based schedulability test exists for globally-scheduled multiprocessor self-suspending task systems). }{}

In this paper, we present \framework{}, a general schedulability analysis framework that is fundamentally based on a $k$-point effective schedulability test under fixed-priority scheduling. The key observation behind our proposed $k$-point test is the following. Traditional fixed-priority schedulability tests often have pseudo-polynomial-time (or even higher) complexity. For example, to verify the schedulability of a (constrained-deadline) task $\tau_k$ under fixed-priority scheduling in uniprocessor systems, the time-demand analysis (TDA) developed in \cite{DBLP:conf/rtss/LehoczkySD89}
can be adopted. That is, if
\begin{equation}
  \label{eq:exact-test-constrained-deadline}
\exists t \mbox{ with } 0 < t \leq D_k {\;\; and \;\;} C_k +
\sum_{\tau_i \in hp(\tau_k)} \ceiling{\frac{t}{T_i}}C_i \leq t,
\end{equation}
then task $\tau_k$ is schedulable under the fixed-priority scheduling algorithm, where $hp(\tau_k)$ is the set of the tasks with higher priority than $\tau_k$, $D_k$, $C_k$, and $T_i$ represent $\tau_k$'s relative deadline, worst-case execution time, and period, respectively. TDA incurs pseudo-polynomial-time complexity to check the time points that lie in $(0, D_k]$ for Eq.~\eqref{eq:exact-test-constrained-deadline}. 

To obtain sufficient schedulability tests under fixed priority
scheduling with reduced time complexity (e.g., polynomial-time), our
conceptual innovation is based on the observations by \emph{testing only a subset of such points} to
\emph{derive the minimum $C_k$ that cannot pass the
  schedulability tests}. This idea is
implemented in the \framework{} framework by providing a general
$k$-point effective schedulability test, which only needs to test $k$
points under \textit{any} fixed-priority scheduling when checking
schedulability of the task with the $k^{th}$ highest priority in the
system. This $k$-point effective schedulability test can be viewed as
a ``blackbox" interface that can result in sufficient
utilization-based tests.
 We show the generality of \framework{} by applying it to analyze several concrete example task models, including the constrained- and arbitrary-deadline sporadic task models, the multi-frame task model, the self-suspending task model, and the DAG task model. Note that \framework{} is not only applicable to uniprocessor systems, but also applicable to multiprocessor systems.

\noindent\textbf{Related Work.}
\ifbool{techreport}{An extensive amount of research has been conducted over the past forty years on verifying the schedulability of the classical sporadic task model in both uniprocessor and multiprocessor systems (see \cite{DavisSurvey2011} for a survey of such results). Much progress has also been made in recent years on analyzing more complex task models that are more expressive.}{}
There have been several results in the literature with respect to utilization-based, e.g., \cite{liu1973scheduling,HanTyan-RTSS97,journals/tc/LeeSP04,DBLP:conf/rtas/WuLZ05,kuo2003efficient,bini2003rate,RTSS14a}, \ifbool{techreport}{and non-utilization-based, e.g., \cite{DBLP:conf/rtss/ChakrabortyKT02,DBLP:conf/ecrts/FisherB05}, }{}
schedulability tests for the sporadic real-time task model and its generalizations in uniprocessor systems. 
\ifbool{techreport}{
The approaches in \cite{DBLP:conf/rtss/ChakrabortyKT02,DBLP:conf/ecrts/FisherB05} convert the stair function $\ceiling{\frac{t}{T_i}}$ in the time-demand analysis into a linear function if $t$ in Eq.~\eqref{eq:exact-test-constrained-deadline} is large enough. The methods in \cite{DBLP:conf/rtss/ChakrabortyKT02,DBLP:conf/ecrts/FisherB05} are completely different from this paper, in which the linear function of task $\tau_i$ starts after $\frac{t}{T_i} \geq k$, in which our method is based on $k$ points, defined individually by $\tau_1, \tau_2, \ldots, \tau_k$.

}{} 
Most of the existing utilization-based schedulability analyses focus on the total utilization bound. That is, if the total utilization of the task system is no more than the derived bound, the task system is schedulable by the scheduling policy. For example, the total utilization bounds derived in \cite{liu1973scheduling,HanTyan-RTSS97,DBLP:dblp_journals/tc/BurchardLOS95} are mainly for rate-monotonic (RM) scheduling, in which the results in \cite{HanTyan-RTSS97} can be extended for arbitrary fixed-priority scheduling. Kuo et al. \cite{kuo2003efficient} further improve the total utilization bound by using the notion of divisibility. Lee et al. \cite{journals/tc/LeeSP04} use linear programming formulations for calculating total utilization bounds
when the period of a task can be selected. Moreover, Wu et al. \cite{DBLP:conf/rtas/WuLZ05} adopt the Network Calculus to analyze the total utilization bounds of several task models.

The novelty of \framework{} comes from a different perspective
from these approaches
\cite{liu1973scheduling,HanTyan-RTSS97,journals/tc/LeeSP04,DBLP:conf/rtas/WuLZ05,kuo2003efficient}. We
do not specifically seek for the total utilization bound. Instead, we
look for the critical value in the specified sufficient schedulability
test while verifying the schedulability of task $\tau_k$. 
A natural schedulability condition to express
the schedulability of task $\tau_k$ is a hyperbolic bound, (to be
shown in Lemma \ref{lemma:framework-constrained}), whereas the
corresponding total utilization bound can be obtained (in Lemmas
\ref{lemma:framework-totalU-constrained} and
\ref{lemma:framework-totalU-exclusive}). 

The hyperbolic forms are the centric features in \framework{} analysis, in which the test by Bini et al. \cite{bini2003rate}  for sporadic real-time tasks and our recent result in \cite{RTSS14a} for bursty-interference analysis are both special cases and simple implications from the \framework{} framework. With the hyperbolic forms, we are then able to provide many interesting observations with respect to the required quantitive features to be measured, like the total utilization bounds, speed-up factors, etc., not only for uniprocessor scheduling but also for multiprocessor scheduling.
For more details, we will provide further explanations at the end of Sec. \ref{sec:framework} after the framework is presented. For the studied task models to demonstrate the applicability of \framework{}, we will summarize some of the latest results on these task models in their corresponding sections.

\noindent\textbf{Contributions.} In this paper, we present a general
schedulability analysis framework, \framework{}, that can be applied
to analyze a number of complex real-time task models, on both
uniprocessors and multiprocessors.
For any
task model, if a corresponding $k$-point effective schedulability test
can be constructed, then a sufficient utilization-based test can be
derived by the \framework{} framework. We show the generality of \framework{} by applying it to several task models, in which the results are better or more general compared to the state-of-the-art:

\begin{enumerate}
\item For uniprocessor constrained-deadline sporadic task systems, the speed-up factor of our obtained schedulability test is 1.76322. This value is the same as the lower bound and upper bound of deadline-monotonic (DM) scheduling shown by Davis et al.~\cite{DBLP:journals/rts/DavisRBB09}. Our result is thus stronger (and requires a much simpler proof), as we show that the same factor holds for a polynomial-time schedulability test (not just the DM scheduler). For uniprocessor arbitrary-deadline sporadic task systems, our obtained utilization-based test works for any fixed-priority scheduling with arbitrary priority-ordering assignment.
\item For multiprocessor DAG task systems under global rate-monotonic (RM) scheduling, the capacity-augmentation
  factor, as defined in ~\cite{Li:ECRTS14} and Sec.~\ref{sec:multiprocessor} in this paper, of our obtained test is
  3.62143. This result is better than the best existing result, which
  is 3.73, given by Li et al.~\cite{Li:ECRTS14}. Our result is also applicable for conditional sporadic DAG task systems \cite{DBLP:conf/ecrts/BaruahBM15}.

\item For multiprocessor self-suspending task systems, we obtain the \textit{first} utilization-based test for global RM.
\item For uniprocessor multi-frame task systems, our obtained utilization bound is superior to the results by Mok and Chen~\cite{DBLP:dblp_journals/tse/MokC97} analytically and Lu et al. \cite{lu2007new} in our simulations. \ifbool{techreport}{The analysis is in Appendix C, whereas the evaluation result is in Appendix D.}{Due to space limitations, the analysis is in Appendix C in the report \cite{DBLP:journals/corr/abs-1501.07084}, whereas the evaluation result is in Appendix D in  \cite{DBLP:journals/corr/abs-1501.07084}.}
\end{enumerate}

Note that the emphasis of this paper is not to show that the resulting tests for different task models by applying the \framework{} framework are better than existing work. Rather, we want to show that the \framework{} framework is general, easy to use, and has relatively low time complexity, but is still able to generate good tests. By demonstrating the applicability of the \framework{} framework to several task models, we believe that this  framework has great potential in analyzing many other complex real-time task models, where the existing analysis approaches are insufficient or cumbersome. To the best of our knowledge, together with \frameworkkq{} to be explained later, these are the first general schedulability analysis frameworks that can be potentially applied to analyze a large set of real-time task models under any fixed-priority scheduling algorithm in both uniprocessor and multiprocessor systems.

\noindent{\bf Comparison to \frameworkkq{}:} 
The concept of testing $k$ points only is also the key in
another framework designed by us, called \frameworkkq{}
\cite{DBLP:journals/corr/abs-k2q}.  Even though \frameworkkq{} and \framework{} share the
same idea by testing and evaluating only $k$ points, they are based on
completely different criteria for testing.  In \framework{}, all the testings and
formulations are based on \emph{only the higher-priority task
  utilizations}. In \frameworkkq{}, the testings are based \emph{not
  only on the higher-priority task utilizations, but also on the
  higher-priority task execution times}.  The above difference
in the formulations results in completely different properties and
mathematical closed-forms.  The
natural schedulability condition of \framework{} is a \emph{hyperbolic form} for testing
the schedulability, whereas the natural schedulability condition of \frameworkkq{} is a
\emph{quadratic form} for testing the schedulability or the response
time of a task.

\emph{If one framework were dominated by another or these two frameworks
were just with minor difference in mathematical formulations, it
wouldn't be necessary to separate and present
them as two different frameworks.} Both frameworks are in fact needed
and have to be applied for different cases.  Due to space limitation,
we can only shortly explain their differences,
advantages, and disadvantages in this paper.  For completeness, another
document has been prepared in
\cite{DBLP:journals/corr/framework-compare} to present the similarity,
the difference and the characteristics of these two frameworks in details.

Since the formulation of \framework{} is more restrictive than
\frameworkkq{}, its applicability is limited by the possibility to
formulate the tests purely by using higher-priority task utilizations
without referring to their execution times. There are cases, in which
formulating the higher-priority interference by using only task
utilizations for \framework{} is troublesome. For such cases, further
introducing the upper bound of the execution time by using
\frameworkkq{} is more precise. 
\ifbool{techreport}{The above cases can be found in (1)
the schedulability tests for arbitrary-deadline sporadic task systems
in uniprocessor scheduling, (2) multiprocessor global fixed-priority
scheduling when adopting the forced-forwarding schedulability test,
etc.\footnote{ c.f. Sec. 5/6 in \cite{DBLP:journals/corr/abs-k2q} for
  the detailed proofs and Sec. 5/6 in
  \cite{DBLP:journals/corr/framework-compare} for the performance
  comparisons.}}{} 
In general, if we can formulate the schedulability
tests into the \framework{} framework by purely using higher-priority
task utilizations, it is also usually possible to formulate it into
the \frameworkkq{} framework by further introducing the task execution
times. In such cases, the same pseudo-polynomial-time (or exponential
time) test is used, and the
utilization bound or speed-up factor analysis 
derived from the \framework{} framework is, in
general, tighter and better.


\ifbool{techreport}{
In a nutshell, with respect to quantitive metrics, like utilization bounds or speedup factor analysis, \framework{} is more precise, whereas \frameworkkq{} is
more general. If the exact (or very precise) schedulability test can
be constructed and the test can be converted into \framework{},
e.g., uniprocessor scheduling for
constrained-deadline task sets, then, adopting \framework{} \emph{may} lead to
tight results. By adopting \frameworkkq{}, we may be able to start
from a more complicated test with exponential-time complexity and
convert it to a linear-time approximation with better results than
\framework{}. Although \framework{} is more restrictive than
\frameworkkq{}, both of them are general enough to cover a range of
applications, ranging from uniprocessor systems to
multiprocessor systems.
For more information and comparisons, please refer to \cite{DBLP:journals/corr/framework-compare}.
}
{
}

\section{Sporadic Task and Scheduling Models}
\label{sec:model}

A sporadic task $\tau_i$ is released repeatedly, with each such 
invocation called a job. The $j^{th}$ job of $\tau_i$, denoted
$\tau_{i,j}$, is released at time $r_{i,j}$ and has an absolute deadline at time
$d_{i,j}$. Each job of any task $\tau_i$ is assumed to have
execution time $C_i$. Here in this paper, whenever we refer to
the execution time of a job, we mean for the worst-case execution time
of the job since all the analyses we use are safe by only considering the worst-case execution time.  Successive jobs of the same task are
required to execute in sequence. Associated with each task $\tau_i$
are a period $T_i$, which specifies the minimum time between two
consecutive job releases of $\tau_i$, and a deadline $D_i$, which
specifies the relative deadline of each such job, i.e.,
$d_{i,j}=r_{i,j}+D_i$. The utilization of a task $\tau_i$ is defined
as $U_i=C_i/T_i$.

A sporadic task system $\tau$ is said to be an implicit-deadline
system if $D_i = T_i$ holds for each $\tau_i$. A sporadic task system
$\tau$ is said to be a constrained-deadline system if $D_i \leq T_i$
holds for each $\tau_i$.  Otherwise, such a sporadic task system
$\tau$ is an arbitrary-deadline system.

A task is said \emph{schedulable} by a scheduling policy if all of its
jobs can finish before their absolute deadlines.  A task system is
said \emph{schedulable} by a scheduling policy if all the tasks in the
task system are schedulable. A \emph{schedulability test} 
expresses sufficient conditions to ensure the feasibility of the
resulting schedule by a scheduling policy.

Throughout the paper, we will focus on fixed-priority preemptive
scheduling. That is, each task is associated with a priority level.
For a uniprocessor system, i.e., except
Sec.~\ref{sec:multiprocessor}, the scheduler always dispatches the
job with the highest priority in the ready queue to be executed. For a
uniprocessor system, it has been shown that RM
scheduling is an optimal fixed-priority scheduling policy for
implicit-deadline systems \cite{liu1973scheduling} and
DM scheduling is an optimal fixed-priority
scheduling policy for constrained-deadline
systems\cite{journals/pe/LeungW82}.

To verify the schedulability of task $\tau_k$ under
fixed-priority scheduling in uniprocessor systems, the time-demand
analysis developed in \cite{DBLP:conf/rtss/LehoczkySD89}
can be adopted, as discussed earlier. That is, if Eq.~\eqref{eq:exact-test-constrained-deadline} holds,
 then task $\tau_k$ is schedulable under the fixed-priority scheduling
algorithm.
For the simplicity of presentation,
we will demonstrate how the framework works using ordinary sporadic real-time task systems 
in Sec.~\ref{sec:framework} and Sec.~\ref{sec:application-FP}. We
will demonstrate more applications with respect to multi-frame tasks
\cite{DBLP:dblp_journals/tse/MokC97} in Appendix C\citetechreport{} 
and with respect to multiprocessor scheduling in
Sec.~\ref{sec:multiprocessor}.

\vspace{-2mm}
\section{Analysis Flow}
\label{sec:flow}

The proposed \framework{} framework only tests the
schedulability of a specific task $\tau_k$, under the assumption
that the higher-priority tasks are already verified to be schedulable
by the given scheduling policy. Therefore, this framework has to be
applied for each of the given tasks. A task system is schedulable by
the given scheduling policy only when all the tasks in the system can
be verified to meet their deadlines.  The results can be extended to
test the schedulability of a task system in linear time complexity or to allow on-line admission control in constant time complexity if the schedulability
condition (or with some more pessimistic simplifications) is
monotonic. Such extensions are provided in Appendix A for some cases.

Therefore, for the rest of this paper, we implicitly assume that all
the higher-priority tasks are already verified to be schedulable by
the scheduling policy.  We will only present the schedulability test
of a certain task $\tau_k$, that is being analyzed, under the
above assumption. For notational brevity, we implicitly assume that there
are $k-1$ tasks, say $\tau_1, \tau_2, \ldots, \tau_{k-1}$ with
higher-priority than task $\tau_k$. Moreover, we only consider the cases when $k \geq 2$, since $k=1$ is pretty trivial.

\section{\framework{} Framework}
\label{sec:framework}

This section presents the basic properties of the \framework{}
framework for testing the schedulability of task $\tau_k$ in a given
set of real-time tasks (depending on the specific models given in
each application as shown later in this paper). We will first provide examples
to explain and define the \emph{$k$-point effective schedulability
  test}. Then, we will provide the fundamental properties of the
corresponding utilization-based tests. Throughout this section, we
will implicitly use sporadic task systems defined in
Sec.~\ref{sec:model} to simplify the presentation. The concrete
applications will be presented in Secs.~\ref{sec:application-FP} - \ref{sec:multiprocessor}.

The $k$-point effective schedulability test is a sufficient
schedulability test that verifies only $k$ time points, defined by the
$k-1$ higher-priority tasks and task $\tau_k$. For example, instead of
testing all the possible $t$ in the range of $0$ and $D_k$ in
Eq.~\eqref{eq:exact-test-constrained-deadline}, we can simply test
only $k$ points. It may seem to be very pessimistic to only test
$k$ points. However, if these $k$ points are \textit{effective},\footnote{As to be clearly illustrated later, the $k$ points can be considered effective if they can define certain extreme cases of task parameters. For example, the ``difficult-to-schedule'' concept first introduced by Liu and Layland~\cite{liu1973scheduling} defines $k$ effective points that are used in Example \ref{example-1}. In their case~\cite{liu1973scheduling}, the selected set of the $k$ points was ``very'' effective because the tested task $\tau_k$ becomes unschedulable if $C_k$ is increased by an arbitrarily small value $\varepsilon$.} the resulting
schedulability test may be already good. We now demonstrate two
examples.

\begin{example}
\label{example-1}
  {\bf Implicit-deadline task systems}: Suppose that the tasks are indexed by
  the periods, i.e., $T_1 \leq \cdots \leq
  T_k$. When $T_k \leq 2T_1$, task $\tau_k$ is schedulable by RM if
  there exists $j \in \setof{1,2,\ldots,k}$ where
\begin{equation}
  \label{eq:precodition-schedulability-sporadic}\small
  C_k + \sum_{i=1}^{k-1} C_i + \sum_{i=1}^{j-1} C_i = C_k + \sum_{i=1}^{k-1} T_i U_i + \sum_{i=1}^{j-1} T_i U_i \leq T_j.   
\end{equation}\normalsize  \endproof
\end{example}
That is, in the above example, it is sufficient to only test $T_1,
T_2, \ldots, T_k$. The case defined in the above example is utilized
by Liu and Layland \cite{liu1973scheduling} for deriving the least
utilization upper bound $69.3\%$ for RM scheduling. We can
make the above example more generalized as follows:

\begin{example}
\label{example-2}
{\bf Implicit-deadline task systems with given ratios of periods}:
Suppose that $f T_i \leq T_k$ for a given integer $f$ with $f \geq 1$
for any higher-priority task $\tau_i$, for all $i=1,2,\ldots,k-1$.
Let $t_i$ be 
$\floor{\frac{T_k}{T_i}}T_i$. Suppose that the $k-1$ higher
priority tasks are indexed such that $t_1 \leq t_2 \leq \cdots \leq
t_{k-1} \leq t_k$, where $t_k$ is defined as $T_k$.  Task $\tau_k$ is
schedulable under RM if there exists $j$ with $1 \leq j \leq k$ such that
\begin{equation}
  \label{eq:precodition-schedulability-sporadic-f}
 C_k + \sum_{i=1}^{k-1} t_i U_i + \sum_{i=1}^{j-1} C_i \leq  C_k +
 \sum_{i=1}^{k-1} t_i U_i + \sum_{i=1}^{j-1} \frac{1}{f} t_i U_i\leq t_j,  
\end{equation}
where the first inequality in
Eq.~\eqref{eq:precodition-schedulability-sporadic-f} is due to the
fact $C_i = T_i U_i \leq \frac{1}{f} t_i U_i$.  That is, in the above example, it is sufficient to only test
$\floor{\frac{T_k}{T_1}}T_1, \floor{\frac{T_k}{T_2}}T_2, \ldots, 
\floor{\frac{T_k}{T_{k-1}}}T_{k-1}, T_k$.
  \endproof
\end{example}

With the above examples, for a given set $\setof{t_1, t_2, \ldots
  t_k}$, we now define the $k$-point effective schedulability test as follows:
\begin{definition}
  \label{def:kpoints}
  A $k$-point effective schedulability test is a sufficient
  schedulability test of a fixed-priority scheduling policy, that verifies the existence of $t_j \in \setof{t_1, t_2, \ldots t_k}$ with $0 < t_1 \leq t_2 \leq \cdots \leq t_k$ such that \begin{equation}
    \label{eq:precodition-schedulability}
    C_k + \sum_{i=1}^{k-1} \alpha_i t_i U_i + \sum_{i=1}^{j-1} \beta_i t_i U_i \leq t_j,
  \end{equation}
  where $C_k > 0$, $\alpha_i > 0$, $U_i > 0$, and $\beta_i >0$ are dependent upon the setting
  of the task models and task $\tau_i$.
\end{definition}
 For Example~\ref{example-1}, the effective values in
$\setof{t_1, t_2, \ldots t_k}$ are $T_1, T_2, \ldots, T_k$, and
$\alpha_i=\beta_i=1$ for each task $\tau_i$. For
Example~\ref{example-2}, the effective values in $\setof{t_1,
  t_2, \ldots t_k}$ are with $\alpha_i=1$ and $\beta_i \leq
\frac{1}{f}$ for each task $\tau_i$.


Moreover, we only consider non-trivial cases,
in which $C_k >0$, $t_k > 0$, $0 < \alpha_i$, $0 < \beta_i$, and $0 < U_i \leq 1$ for $i=1,2,\ldots,k-1$.
The definition of the $k$-point last-release schedulability test 
$C_k + \sum_{i=1}^{k-1} \alpha_i t_i U_i + \sum_{i=1}^{j-1}
\beta_i t_i U_i \leq t_j$ 
in Definition \ref{def:kpoints} only slightly differs from the
test 
$C_k +
\sum_{i=1}^{k-1} \alpha_i t_i U_i + \sum_{i=1}^{j-1} \beta_i C_i \leq
t_j$
in the \frameworkkq{} framework \cite{DBLP:journals/corr/abs-k2q}.
However, since the tests are different, they are used for different
situations.

With these $k$ points, we are able to define the corresponding
coefficients $\alpha_i$ and $\beta_i$ in the $k$-point effective
schedulability test of a scheduling algorithm. The elegance of the
\framework{} framework is to use only the parameters $\alpha_i$ and
$\beta_i$ to analyze whether task $\tau_k$ can pass the
schedulability test. Therefore, the \framework{} framework provides
corresponding utilization-based tests \emph{automatically} if the $k$-point effective
schedulability test and the corresponding parameters $\alpha_i$ and
$\beta_i$ can be defined, which will be further demonstrated in
the following sections with several applications.

We are going to present the properties
resulting from the $k$-point effective schedulability test under given
$\alpha_i$ and $\beta_i$.  In the following lemmas, we are going to
seek the extreme cases for these $k$ testing points under the given
setting of utilizations and the defined coefficients $\alpha_i$ and
$\beta_i$.  To make the notations clear, these extreme testing points
are denoted as $t_i^*$ for the rest of this paper.  The procedure will
derive $k-1$ extreme testing points, denoted as $t_1^*, t_2^*, \ldots,
t_{k-1}^*$, whereas $t_k^*$ is defined as $t_k$ plus a slack variable (to be defined in the proof of Lemma~\ref{lemma:framework-constrained}) for notational
brevity.
Lemmas~\ref{lemma:framework-constrained} to \ref{lemma:framework-totalU-exclusive} are useful to analyze the utilization bound, the hyperbolic bound, etc., for given scheduling strategies, when 
$\alpha$ and $\beta$ can be easily defined based on the scheduling policy, with
$0 < t_k$ and
$0 < \alpha_i \leq \alpha$, and $0 < \beta_i \leq \beta$ for any
$i=1,2,\ldots,k-1$.

\begin{lemma}
\label{lemma:framework-constrained}
For a given $k$-point effective schedulability test of a scheduling 
algorithm, defined in
Definition~\ref{def:kpoints},
in which $0 < t_k$ and $0 < \alpha_i \leq \alpha$, and $0 < \beta_i \leq \beta$ for any
$i=1,2,\ldots,k-1$, task $\tau_k$ is schedulable by the scheduling
algorithm if the following condition holds
\begin{equation}
\label{eq:schedulability-constrained}
\frac{C_k}{t_k} \leq \frac{\frac{\alpha}{\beta}+1}{\prod_{j=1}^{k-1} (\beta U_j + 1)} - \frac{\alpha}{\beta}.
\end{equation}
\end{lemma}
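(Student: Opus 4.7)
My plan is to prove the contrapositive: assume that for every $j \in \{1, 2, \ldots, k\}$ the $k$-point condition of Definition~\ref{def:kpoints} \emph{fails}, i.e., $C_k + \sum_{i=1}^{k-1}\alpha_i t_i U_i + \sum_{i=1}^{j-1} \beta_i t_i U_i > t_j$, and derive that $C_k/t_k$ must strictly exceed the hyperbolic expression in Eq.~\eqref{eq:schedulability-constrained}. A preliminary monotonicity observation lets me replace every $\alpha_i$ by $\alpha$ and every $\beta_i$ by $\beta$: enlarging these coefficients only inflates the left-hand side, so the strict failures survive, and it suffices to work with the uniform ``worst-case'' constants. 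I will also carry out the argument in the regime where $\alpha + \beta - \alpha P > 0$ with $P := \prod_{i=1}^{k-1}(1 + \beta U_i)$; otherwise the right-hand side of Eq.~\eqref{eq:schedulability-constrained} is non-positive and the conclusion is trivial because $C_k > 0$.

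Next I will set up the key recursion. Let $X_j := C_k + \alpha\sum_{i=1}^{k-1}t_i U_i + \beta\sum_{i=1}^{j-1}t_i U_i$, so the failure assumption reads $X_j > t_j$ for every $j$. Since $X_{j+1} - X_j = \beta t_j U_j$ and $t_j < X_j$, I obtain $X_{j+1} < X_j(1+\beta U_j)$; iterating this produces $X_j < X_1\prod_{l=1}^{j-1}(1+\beta U_l)$ for every $j$, and in particular $X_k < X_1 P$. Combining with $X_k > t_k$ yields the crucial lower bound $X_1 > t_k/P$. Using the same cascade at the level of the $t_i$'s, together with the telescoping identity $\beta\sum_{i=1}^{k-1}\prod_{l=1}^{i-1}(1+\beta U_l)\,U_i = P - 1$, I also obtain the upper bound $S := \sum_{i=1}^{k-1}t_i U_i < X_1(P-1)/\beta$.

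With these two bounds in hand the conclusion is a direct calculation. From $C_k = X_1 - \alpha S$ and the upper bound on $S$ I get $C_k > X_1(\alpha + \beta - \alpha P)/\beta$; substituting the lower bound on $X_1$ (valid because the factor $\alpha + \beta - \alpha P$ is positive by the standing assumption) yields
\begin{equation*}
\frac{C_k}{t_k} > \frac{\alpha+\beta}{\beta P} - \frac{\alpha}{\beta} = \frac{\alpha/\beta + 1}{P} - \frac{\alpha}{\beta},
\end{equation*}
which is exactly the contrapositive of the lemma and therefore proves it.

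The step I expect to require the most care is the bookkeeping around strict versus non-strict inequalities as the recursion cascades, together with the rigorous identification of the ``extreme'' points $t_i^*$ alluded to in the paper's preamble with the equality case $X_j = t_j$ of my construction, in which one obtains $t_i^* = t_1^*\prod_{l=1}^{i-1}(1+\beta U_l)$ and $C_k/t_k^*$ reduces to the hyperbolic bound exactly. The slack that augments $t_k$ to $t_k^*$ is there precisely to absorb the strict-inequality gap when converting the contrapositive into that extremal construction, so I anticipate this matching to be notational rather than a new technical difficulty.
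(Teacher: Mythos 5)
Your proposal is correct, and it reaches the paper's bound by a genuinely different route. The paper proves the same contrapositive by casting the failed inequalities as a linear program minimizing $C_k^*$, invoking the extreme point theorem, classifying extreme points via a partition $({\bf T}_1,{\bf T}_2)$ of the higher-priority tasks, showing the all-tight extreme point (${\bf T}_1=\emptyset$) dominates all others, and finally arguing that the slack $s$ augmenting $t_k^*$ is optimally $0$. You instead avoid all optimization machinery: the recursion $X_{j+1}=X_j+\beta t_j U_j < X_j(1+\beta U_j)$, iterated and combined with the telescoping identity $\beta\sum_{i=1}^{k-1}U_i\prod_{l<i}(1+\beta U_l)=P-1$, extracts the two bounds $X_1 > t_k/P$ and $S < X_1(P-1)/\beta$ directly from the failure assumption, and the algebra $C_k = X_1-\alpha S$ then yields the strict negation of Eq.~\eqref{eq:schedulability-constrained}; your preliminary replacement of $\alpha_i,\beta_i$ by $\alpha,\beta$ and your handling of the vacuous case $\alpha+\beta\leq\alpha P$ match the paper exactly, but you sidestep its replacement of $>$ by $\geq$ (infimum versus minimum), its relaxation of the ordering constraints $t_{j+1}\geq t_j$ to $t_j\geq 0$, and the entire slack-variable discussion, since your strict inequalities propagate cleanly (using $U_i>0$, $\beta>0$, and $k\geq 2$). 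What the paper's LP formulation buys is reusability and sharpness information: essentially the same extreme-point analysis is recycled for Lemma~\ref{lemma:framework-general} with task-dependent $\alpha_i,\beta_i$, where the dominating extreme point is not obvious and requires an exchange argument, and it exhibits the extremal instance $t_i^*=t_1^*\prod_{l<i}(1+\beta U_l)$ witnessing that the hyperbolic bound is tight for the given parameters, which your one-directional estimate does not produce. Your closing worry is unfounded for your own argument: the extreme points $t_i^*$ and the slack exist only in the paper's LP formulation, and your derivation never needs to match them --- your proof is complete as sketched.
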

\begin{proof}
  If $\frac{\frac{\alpha}{\beta}+1}{\prod_{j=1}^{k-1} (\beta U_j + 1)}
  - \frac{\alpha}{\beta} \leq 0$, the condition in
  Eq.~\eqref{eq:schedulability-constrained} never holds since $C_k > 0$, and the
  statement is vacuously true.  We focus on the  case $\frac{\frac{\alpha}{\beta}+1}{\prod_{j=1}^{k-1}
    (\beta U_j + 1)} - \frac{\alpha}{\beta} > 0$.

  We prove this lemma by showing that the condition in
  Eq.~\eqref{eq:schedulability-constrained} leads to the satisfactions of the
  schedulability condition listed in
  Eq.~(\ref{eq:precodition-schedulability}) by using contrapositive.
  By taking the negation of the schedulability condition in
  Eq.~(\ref{eq:precodition-schedulability}), we know that if task
  $\tau_k$ is \emph{not schedulable} by the scheduling policy, then
  for each $j=1,2,\ldots, k$
  \begin{equation}
    \label{eq:lp-init-constraints}
    C_k + \alpha \sum_{i=1}^{k-1} t_i U_i + \beta\sum_{i=1}^{j-1} t_i U_i \geq 
    C_k + \sum_{i=1}^{k-1} \alpha_i t_i U_i + \sum_{i=1}^{j-1} \beta_i t_i U_i > t_j, 
  \end{equation}
  due to $0 < \alpha_i \leq \alpha$, and $0 < \beta_i \leq \beta$ for any
$i=1,2,\ldots,k-1$
  To enforce the condition in
  Eq.~\eqref{eq:lp-init-constraints}, we are going to
  show that $C_k$ must have some lower bound. Therefore, if $C_k$ is
  no more than this lower bound, then task $\tau_k$ is schedulable by the
  scheduling policy. 

  For the rest of the proof, we replace $>$ with $\geq$ in
  Eq.~\eqref{eq:lp-init-constraints}, as the infimum and the minimum
  are the same when presenting the inequality with $\geq$.  Moreover,
  we also relax the problem by replacing the constraint $t_{j+1} \geq
  t_j$ with $t_j \geq 0$ for $j=1,2,\ldots,k-1$. Therefore, the
  unschedulability for satisfying Eq.~\eqref{eq:lp-init-constraints}
  implies that $C_k > C_k^*$, where $C_k^*$ is
  defined in the following optimization problem:
  \begin{subequations}\label{eq:lp-init}
\small  \begin{align}
    \mbox{min\;\;} & C_k^* \label{eq:precodition-schedulability-objective}\\
    \mbox{s.t.\;\;} &     C_k^* + \sum_{i=1}^{k-1} \alpha t_i^* U_i + \sum_{i=1}^{j-1} \beta t_i^* U_i \geq t_j^* &\forall j=1,\ldots, k-1,     \label{eq:precodition-schedulability-negation-0}\\
    &    t_j^* \geq 0\;&\forall j=1,\ldots, k-1,     \label{eq:precodition-schedulability-negation-1}\\
     &     C_k^* + \sum_{i=1}^{k-1} (\alpha +\beta) t_i^* U_i  \geq t_k,     \label{eq:precodition-schedulability-negation-2}
  \end{align}    
  \end{subequations}
  where $t^*_1, t^*_2, \ldots, t^*_{k-1}$ and $C_k^*$ are variables; and $\alpha$, $\beta$ are constants. 

  Let $s \geq 0$ be a slack variable such that $C_k^* +
  \sum_{i=1}^{k-1} (\alpha +\beta) t_i^* U_i = t_k+s$.  Therefore, $C_k^* = 
   t_k+s -\sum_{i=1}^{k-1} (\alpha +\beta) t_i^* U_i$. By replacing $C_k^*$ in Eq.~\eqref{eq:precodition-schedulability-negation-0}, we have
 \begin{align}
   & t_k+s - (\sum_{i=1}^{k-1} \alpha t_i^*U_i + \sum_{i=1}^{k-1} \beta t_i^*U_i) + \sum_{i=1}^{k-1} \alpha t_i^*U_i + \sum_{i=1}^{j-1} \beta t_i^*U_i\nonumber\\
    =\; & t_k+s- \sum_{i=j}^{k-1} \beta t_i^* U_i 
    \geq t_j^*, \;\;\;\forall j=1,\ldots, k-1. \label{eq:precodition-schedulability-negation}
  \end{align}
  For notational brevity, let $t_k^*$ be $t_k+s$.
 Therefore, the linear programming in Eq.~\eqref{eq:lp-init} can be rewritten as follows:
\begin{subequations}
    \label{eq:lp-framework-constrained}
  \begin{align}
    \mbox{min } &  
    t_k^* -\sum_{i=1}^{k-1} (\alpha +\beta) t_i^* U_i\\
    \mbox{s.t.} \;\;&
t_k^*-\beta \sum_{i=j}^{k-1} t_i^* U_i \geq t_j^*,
      & \forall 1 \leq j \leq k - 1 \label{eq:lp-framework-constrained-constraints}\\
&t_j^* \geq 0 
      & \forall 1 \leq j \leq k - 1 \label{eq:lp-framework-constrained-boundaryconstraints}\\
&t_k^* \geq t_k \label{eq:lp-framework-constrained-slack-constraints}
  \end{align}    
  \end{subequations}  

  The remaining proof is to solve the above linear programming to obtain the minimum $C_k^*$ if $\frac{\frac{\alpha}{\beta}+1}{\prod_{j=1}^{k-1} (\beta U_j + 1)} - \frac{\alpha}{\beta} > 0$. 
Our proof strategy is to solve the linear programming analytically as a function of $t_k^*$. This can be imagined as if $t_k^*$ is given. At the end, we will prove the optimality by considering all possible $t_k^* \geq t_k$.
  This involves three steps:
  \begin{itemize}
  \item Step 1: we analyze certain properties of optimal solutions based on the extreme point theorem for linear programming \cite{luenberger2008linear} under the assumption that $t_k^*$ is given as a constant, i.e., $s$ is known.
  \item Step 2: we present a specific solution in an \emph{extreme point}, as a function of $t_k^*$.
  \item Step 3: we prove that the above extreme point solution gives the minimum $C_k^*$  if $\frac{\frac{\alpha}{\beta}+1}{\prod_{j=1}^{k-1} (\beta U_j + 1)} - \frac{\alpha}{\beta} > 0$.
  \end{itemize}

  
  {\bf [Step 1:]} In this step, we assume that
  $s$ is given,  i.e., $t_k^*$ is specified as a constant. The original linear
  programming in Eq.~(\ref{eq:lp-framework-constrained}) has $k$
  variables and $2(k-1)+1$ constraints. After specifying the value
  $t_k^*$ as a given constant, the new linear programming without the
  constraint in
  Eq.~\eqref{eq:lp-framework-constrained-slack-constraints} has only
  $k-1$ variables and $2(k-1)$ constraints.  Thus, according to the
  extreme point theorem for linear programming
  \cite{luenberger2008linear}, the linear constraints form a
  polyhedron of feasible solutions. The extreme point theorem states
  that either there is no feasible solution or one of the extreme
  points in the polyhedron is an optimal solution when the objective
  of the linear programming is finite. To satisfy
  Eqs.~\eqref{eq:lp-framework-constrained-constraints}
  and~\eqref{eq:lp-framework-constrained-boundaryconstraints}, we know
  that $t_j^* \leq t_k^*$ for $j=1,2,\ldots,k-1$, due to $t_i^* \geq
  0$, $0 < \beta$, $0 < \alpha$ and $U_i > 0$
  for $i=1,2, \ldots,k-1$. As a result, the objective of the above
  linear programming is $t_k^* -\sum_{i=1}^{k-1} (\alpha +\beta) t_i^* U_i \geq t_k^* -\sum_{i=1}^{k-1} (\alpha +\beta) t_k^* U_i$, which is finite (as a function of $t_k^*$, $\alpha$, and $\beta$) under the assumption $0 < \beta$, $0 < \alpha$ and $0 < \sum_{i=1}^{k-1}U_i \leq 1$.

  According to the extreme point theorem, one of the extreme points is
  the optimal solution of Eq.~\eqref{eq:lp-framework-constrained} for a given $t_k^*$.
  There are $k-1$ variables with $2(k-1)$ constraints in
  Eq.~\eqref{eq:lp-framework-constrained} for a given $t_k^*$. An extreme point must have
  at least $k-1$ \emph{active} constraints in
  Eqs.~\eqref{eq:lp-framework-constrained-constraints} and \eqref{eq:lp-framework-constrained-boundaryconstraints},  in which
  their $\geq$ are set to equality $=$.

We now prove that an extreme point solution is feasible for Eq.~\eqref{eq:lp-framework-constrained} by setting $t_j^*$ either to $0$ or to $t_k^*-\beta \sum_{i=j}^{k-1} t_i^* U_i = t_j^* > 0$ for $j=1,2,\ldots,k-1$.
Suppose for contradiction that there exists a task $\tau_h$ with $0 = t_h^* = t_k^*-\beta \sum_{i=h}^{k-1} t_i^* U_i$.
Let $\omega$ be the 
index of the next task after $\tau_h$ with $t_{\omega}^* > 0$ in this extreme point solution, i.e., $t_{h}^* = t_{h+1}^* = \cdots= t_{\omega-1}^*=0$. If $t_i^*=0$ for $h \leq i \leq k-1$, then $\omega$ is set to $k$ and $t_{\omega}^*$ is $t_k^*$. Therefore, the conditions $t_k^*-\beta \sum_{i=h}^{k-1} t_i^* U_i = t_h^*=0$ and  $t_{h+1}^* = \cdots= t_{\omega-1}^*=0$ imply the contradiction that $0=t_h^* = t_k^*-\beta \sum_{i=h}^{k-1} t_i^* U_i = t_k^*-\beta \sum_{i=\omega}^{k-1} t_i^* U_i \geq t_{\omega}^* > 0$ when $\omega \leq k-1$ or $0 = t_h^* =t_k^*-\sum_{i=h}^{k-1} t_i^* U_i  =t_k^*>0$ when $\omega=k$. 

By the above analysis and the pigeonhole principle, a feasible extreme point solution of Eq.~\eqref{eq:lp-framework-constrained} can be represented by two sets ${\bf T}_1$ and ${\bf T}_2$ of the $k-1$ higher-priority tasks, in which $t_j^* = 0$ if $\tau_j$ is in ${\bf T}_1$ and $t_k^*-\beta \sum_{i=j}^{k-1} t_i^* U_i = t_j^* > 0$ if task $\tau_j$ is in ${\bf T}_2$. With the above discussions, we have ${\bf T}_1\cup{\bf T}_2=\setof{\tau_1, \tau_2, \ldots, \tau_{k-1}}$ and ${\bf T}_1\cap {\bf T}_2 = \emptyset$. 

{\bf [Step 2:]}
For a given $t_k^*$, one special extreme point solution is to put $t_k^* - \beta \sum_{i=j}^{k-1} t_i^* U_i = t_j^*$ for every $j=1,2,\ldots,k-1$, i.e., ${\bf T}_1$ as an empty set. Therefore,
\begin{equation}
\label{eq:periodrelation}
\forall 1 \leq i \leq k - 1, \quad t_{i+1}^* - t_i^* = \beta t_i^* U_i,
\end{equation}
which implies that
\begin{equation}
\label{eq:2ndperiodrelation}
\dfrac{t_{i+1}^*}{t_i^*} = \beta U_i + 1.
\end{equation}
  Moreover, 
\begin{equation}
\label{eq:3rdperiodrelation}
 \dfrac{t_i^*}{t_k^*}= \prod_{j=i}^{k-1}\frac{t_j^*}{t_{j+1}^*} = \frac{1}{\prod_{j=i}^{k-1} (\beta U_j + 1)}.
\end{equation}
The above extreme point solution is always feasible in the linear programming of Eq.~\eqref{eq:lp-framework-constrained} since  
\begin{equation}
  0 < \dfrac{t_i^*}{t_k^*},  \;\;\;\forall i=1,2,\ldots,k-1.
\end{equation}
By Eq.~\eqref{eq:periodrelation}, we have
  \begin{equation}
\label{eq:alpha+beta}
    (\alpha
    +\beta)U_i t_i^* = (t_{i+1}^*-t_i^*) (\frac{\alpha}{\beta}+1).
  \end{equation}
  Therefore,  in this extreme point solution, the objective function of Eq.~\eqref{eq:lp-framework-constrained} is 
  \begin{align}
    &t_k^* - \sum_{i=1}^{k-1} (\alpha
    +\beta)U_it_i^*= t_k^* - (t_k^*-t_1^*) (\frac{\alpha}{\beta}+1)\nonumber\\
 = & t_k^*\left(\frac{t_1^*}{t_k^*}(\frac{\alpha}{\beta}+1)-\frac{\alpha}{\beta}\right) \overset{1}{=}
 t_k^*\left(\frac{\frac{\alpha}{\beta}+1}{\prod_{j=1}^{k-1} (\beta U_j + 1)}-\frac{\alpha}{\beta}\right),
  \end{align}
where the last equality $\overset{1}{=}$ is due to Eq.~\eqref{eq:3rdperiodrelation} when $i$ is $1$.

{\bf [Step 3:]} From Step 1, a feasible extreme point solution is with ${\bf T}_1\cup{\bf T}_2=\setof{\tau_1, \tau_2, \ldots, \tau_{k-1}}$ and ${\bf T}_1\cap {\bf T}_2 = \emptyset$.
 As a result, we can simply drop all the tasks in ${\bf T}_1$ and use the remaining tasks in ${\bf T}_2$ by adopting the same procedures from Eq.~\eqref{eq:periodrelation} to Eq.~\eqref{eq:alpha+beta} in Step 2. The resulting objective function of this extreme point solution for the linear programming in
    Eq.~\eqref{eq:lp-framework-constrained} is $t_k^*(\frac{\frac{\alpha}{\beta}+1}{\prod_{\tau_j \in {\bf T}_2} (\beta U_j + 1)} - \frac{\alpha}{\beta}) \geq t_k^*(\frac{\frac{\alpha}{\beta}+1}{\prod_{j=1}^{k-1} (\beta U_j + 1)}-\frac{\alpha}{\beta})$ due to the fact that $\prod_{\tau_j \in {\bf T}_2} (\beta U_j + 1) \leq \prod_{j=1}^{k-1} (\beta U_j + 1)$.

    Therefore, for a given $s \geq 0$, i.e., $t_k^* \geq t_k$, all the other extreme points of
    Eq.~\eqref{eq:lp-framework-constrained} were dominated by the one
    specified in Step 2.  
  By the above analysis, if   $\frac{\frac{\alpha}{\beta}+1}{\prod_{j=1}^{k-1} (\beta U_j +
      1)}-\frac{\alpha}{\beta} > 0$, we know that
    $(t_k+s)(\frac{\frac{\alpha}{\beta}+1}{\prod_{j=1}^{k-1} (\beta
      U_j + 1)}-\frac{\alpha}{\beta})$ is an increasing function of
    $s$, in which the minimum happens when $s$ is $0$. As a result, we
    reach the conclusion of the lemma.
\end{proof}

We now provide two extended lemmas based on
Lemma~\ref{lemma:framework-constrained}, used for given $\alpha$ and
$\beta$ when $\alpha_i \leq \alpha$ and $\beta_i \leq \beta$ for
$i=1,2,\ldots,k-1$. Their proofs are in Appendix B.

\begin{lemma}
\label{lemma:framework-totalU-constrained}
For a given $k$-point effective schedulability test of a scheduling
algorithm, defined in
Definition~\ref{def:kpoints},
in which $0 < t_k$ and $0 < \alpha_i \leq \alpha$ and $0 < \beta_i \leq \beta$ for any
$i=1,2,\ldots,k-1$, task $\tau_k$ is schedulable by the scheduling
algorithm if 
\begin{equation}
\label{eq:schedulability-totalU-constrained}
\frac{C_k}{t_k} + \sum_{i=1}^{k-1}U_i \leq \begin{cases}
  1, & (\alpha+\beta)^{\frac{1}{k}} < 1\\
  (k-1)\frac{\left((1+\frac{\beta}{\alpha})^{\frac{1}{k-1}}-1\right)}{\beta}, & (\alpha+\beta)^{\frac{1}{k}} < \alpha\\
\frac{(k-1)((\alpha+\beta)^{\frac{1}{k}}-1)+((\alpha+\beta)^{\frac{1}{k}}-\alpha)}{\beta}  & \mbox{ otherwise}.
\end{cases}
\end{equation}
\end{lemma}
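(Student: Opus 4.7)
The plan is to derive Lemma~\ref{lemma:framework-totalU-constrained} as a direct corollary of Lemma~\ref{lemma:framework-constrained}. Rewriting that lemma's hyperbolic condition in the equivalent product form
\[
\bigl(\tfrac{C_k}{t_k} + \tfrac{\alpha}{\beta}\bigr)\prod_{j=1}^{k-1}(\beta U_j + 1) \;\leq\; \tfrac{\alpha+\beta}{\beta},
\]
I would look for the largest $V$ such that $\tfrac{C_k}{t_k} + \sum_{i=1}^{k-1} U_i \leq V$ forces this inequality for every admissible configuration of the $U_i$'s and $C_k/t_k$. Since, for a fixed sum $s=\sum_j U_j$, the AM-GM inequality gives $\prod(\beta U_j+1) \leq (1+\beta s/(k-1))^{k-1}$ with equality iff all $U_j$ are equal, the worst case reduces to the single-variable function
\[
F(x;V) \;:=\; \left(x + \tfrac{\alpha}{\beta}\right)\left(1 + \tfrac{\beta(V-x)}{k-1}\right)^{k-1}, \qquad x = C_k/t_k \in [0,V],
\]
and the question becomes: what is the largest $V$ for which $F(x;V) \leq (\alpha+\beta)/\beta$ throughout $[0,V]$?

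A short differentiation shows that $\partial F/\partial x$ has the same sign as $1 + \beta(V-x)/(k-1) - \beta x - \alpha$, an expression strictly decreasing in $x$. Hence $F$ has at most one interior critical point in $(0,V)$, necessarily a maximum, and the three cases in the lemma correspond exactly to its location. When $\alpha+\beta<1$, the bracket stays nonnegative on $[0,1]$, so $F$ is increasing and attains its maximum at $x=V=1$ with $F(1;1)=(\alpha+\beta)/\beta$, giving the bound $V=1$. When $1\leq (\alpha+\beta)^{1/k}<\alpha$, the hypothesis is equivalent to $(1+\beta/\alpha)^{1/(k-1)}<\alpha$, which makes the bracket already negative at $x=0$; $F$ is then decreasing and the maximum sits at $x=0$. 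Setting $F(0;V)=(\alpha/\beta)(1+\beta V/(k-1))^{k-1}$ equal to $(\alpha+\beta)/\beta$ solves for $V=(k-1)((1+\beta/\alpha)^{1/(k-1)}-1)/\beta$, the second-case bound. Otherwise, the interior critical point $s^\ast = (\beta V+\alpha-1)(k-1)/(\beta k)$ lies in $[0,V]$ and is the maximizer; substituting it back and simplifying yields $F(V-s^\ast;V) = (k-1+\alpha+\beta V)^k/(\beta k^k)$, and equating this to $(\alpha+\beta)/\beta$ gives $V=(k(\alpha+\beta)^{1/k}-(k-1)-\alpha)/\beta$, the third-case bound.

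The main technical obstacle is the case analysis itself: I need to verify that the chosen $V$ in each case is precisely the threshold above which the hyperbolic condition fails, and that the three algebraic conditions in the lemma partition the relevant parameter regime correctly. The algebra in the third case is routine once the substitution of $s^\ast$ into $F$ is carried out and the expression collapses to a clean $k$-th power, but the second case requires the nontrivial observation (via the Bernoulli-type inequality $(1+y)^{1/(k-1)} \leq 1 + y/(k-1)$, valid for $y\geq 0$ and $k\geq 2$) that the second-case bound is at most $1/\alpha \leq 1$. This is what keeps the ``$F$ decreasing'' argument inside the feasible range $[0,V]$ and rules out $x=V$ as a competing candidate for the maximum, thereby cleanly separating Case~2 from Cases~1 and~3.
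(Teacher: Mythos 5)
Your proposal is correct, and it arrives at the same extremal configuration as the paper while packaging the optimization differently. The paper's proof (in Appendix B) is a sketched non-linear program: it minimizes $\frac{C_k}{t_k}+\sum_{i=1}^{k-1}U_i$ subject to the \emph{negation} of the hyperbolic bound of Lemma~\ref{lemma:framework-constrained}, symmetrizes via $(1+\beta U_1)(1+\beta U_2)\leq\bigl(1+\beta\tfrac{U_1+U_2}{2}\bigr)^2$ to force $U_1=\cdots=U_{k-1}$, and then applies Lagrange multipliers, recovering the first two cases of Eq.~\eqref{eq:schedulability-totalU-constrained} as KKT boundary cases ($U_1=0$ when $(\alpha+\beta)^{1/k}<1$, and $C_k/t_k=0$ when $(\alpha+\beta)^{1/k}<\alpha$). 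You take the dual view: fix the budget $V$, symmetrize by the same AM--GM step, and maximize the one-variable function $F(x;V)$ through an explicit sign analysis of $\partial F/\partial x$; your three cases (maximum at $x=V$, at $x=0$, or at the interior critical point) correspond exactly to the paper's KKT cases, and your interior stationary point $x^\ast=((\alpha+\beta)^{1/k}-\alpha)/\beta$ with common utilization $((\alpha+\beta)^{1/k}-1)/\beta$ is precisely the paper's Lagrange solution. What your route buys is completeness and elementarity: the paper's argument is explicitly ``by sketch,'' whereas your observation that the bracket $1+\beta(V-x)/(k-1)-\beta x-\alpha$ is strictly decreasing in $x$ certifies a \emph{unique} maximum and hence global optimality without invoking KKT machinery, and your collapse of $F$ at the critical point to $(k-1+\alpha+\beta V)^k/(\beta k^k)$ turns the third-case bound into a one-line solve. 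Two small points to tighten in a final write-up: verify explicitly that in the third case $s^\ast\in[0,V]$ --- it does, since $s^\ast\geq 0$ iff $(\alpha+\beta)^{1/k}\geq 1$ and $V-s^\ast\geq 0$ iff $(\alpha+\beta)^{1/k}\geq\alpha$, which is exactly the complement of the first two cases --- and note that your Bernoulli check $V\leq 1/\alpha$ in Case~2, while a reassuring consistency check, is not actually needed once $F$ is shown decreasing on $[0,V]$, because the maximum then sits at $x=0$ regardless.
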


\begin{lemma}
\label{lemma:framework-totalU-exclusive}
For a given $k$-point effective schedulability test of a scheduling
algorithm, defined in
Definition~\ref{def:kpoints},
in which $0 < t_k$ and $0 < \alpha_i \leq \alpha$ and $0 < \beta_i \leq \beta$ for any
$i=1,2,\ldots,k-1$,
task $\tau_k$ is schedulable by the scheduling
algorithm if 
\begin{equation}
\label{eq:schedulability-totalU-exclusive}
\beta \sum_{i=1}^{k-1}U_i \leq \ln(\frac{\frac{\alpha}{\beta}+1}{\frac{C_k}{t_k}+\frac{\alpha}{\beta}}).
\end{equation}
\end{lemma}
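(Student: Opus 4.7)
The plan is to reduce the utilization-based condition in Eq.~\eqref{eq:schedulability-totalU-exclusive} directly to the hyperbolic condition already established in Lemma~\ref{lemma:framework-constrained}, via the elementary inequality $\ln(1+x)\le x$ for $x\ge 0$. Since Lemma~\ref{lemma:framework-constrained} has already handled the heavy lifting (the linear-programming / extreme-point argument), no further optimization is needed here; only an algebraic transformation combined with a standard concavity bound.

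First I would rearrange the conclusion of Lemma~\ref{lemma:framework-constrained}. Multiplying through by $\prod_{j=1}^{k-1}(\beta U_j+1)$ and moving $\alpha/\beta$ to the left, the schedulability condition is equivalent to
\[
\prod_{j=1}^{k-1}(\beta U_j+1)\;\le\; \frac{\frac{\alpha}{\beta}+1}{\frac{C_k}{t_k}+\frac{\alpha}{\beta}}.
\]
Next I would take natural logarithms of both sides. To do this safely I must first verify that the right-hand side is strictly positive; but this is guaranteed whenever the hypothesis of the lemma holds nontrivially, since $\beta\sum_{i=1}^{k-1}U_i\ge 0$ forces $\ln\bigl(\tfrac{\alpha/\beta+1}{C_k/t_k+\alpha/\beta}\bigr)\ge 0$, i.e. the ratio is at least $1$ (the degenerate case $\sum U_i=0$ is trivial). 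After taking logs, the schedulability is implied by
\[
\sum_{j=1}^{k-1}\ln(\beta U_j+1)\;\le\;\ln\!\left(\frac{\frac{\alpha}{\beta}+1}{\frac{C_k}{t_k}+\frac{\alpha}{\beta}}\right).
\]

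Finally I would apply the standard inequality $\ln(1+x)\le x$ for all $x\ge 0$ (valid since $\beta U_j>0$) term-by-term, obtaining $\sum_{j=1}^{k-1}\ln(\beta U_j+1)\le \beta\sum_{j=1}^{k-1}U_j$. Chaining this with the previous display shows that the hypothesis $\beta\sum_{i=1}^{k-1}U_i\le \ln\!\bigl(\tfrac{\alpha/\beta+1}{C_k/t_k+\alpha/\beta}\bigr)$ of Eq.~\eqref{eq:schedulability-totalU-exclusive} suffices to imply the hyperbolic schedulability condition of Lemma~\ref{lemma:framework-constrained}, and hence the schedulability of $\tau_k$.

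There is essentially no serious obstacle here: the bulk of the work is already in Lemma~\ref{lemma:framework-constrained}. The only points requiring a little care are (i) confirming positivity of the log's argument so that taking logarithms is legitimate, and (ii) noting that the inequality $\ln(1+x)\le x$ is tight only at $x=0$, which explains why the exclusive (exponential) form of this lemma is looser than the hyperbolic form but in return replaces the product by a single linear sum of utilizations—exactly the feature that makes it convenient for deriving total-utilization and speed-up bounds in the applications that follow.
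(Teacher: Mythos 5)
Your proof is correct, and it takes a genuinely different route from the paper's. The paper proves Lemma~\ref{lemma:framework-totalU-exclusive} by an optimization argument: as in the proof of Lemma~\ref{lemma:framework-totalU-constrained}, it considers the infimum of the total utilization subject to the \emph{failure} of the hyperbolic condition, i.e.\ subject to $\prod_{j=1}^{k-1}(\beta U_j+1) > \frac{\alpha/\beta+1}{C_k/t_k+\alpha/\beta}$, shows via a Lagrange-multiplier computation that this infimum occurs when all $k-1$ utilizations are equal, and extracts the logarithmic bound from the resulting symmetric expression (in effect the $k\to\infty$ limit of the finite-$k$ bound $(k-1)(R^{1/(k-1)}-1)/\beta$ with $R=\frac{\alpha/\beta+1}{C_k/t_k+\alpha/\beta}$). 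You bypass the optimization entirely: from the hypothesis $\beta\sum_i U_i \le \ln R$ and the elementary bound $\ln(1+x)\le x$ (equivalently $1+x\le e^x$) you get $\prod_{j=1}^{k-1}(\beta U_j+1) \le e^{\beta\sum_j U_j} \le R$, which is exactly the hyperbolic condition \eqref{eq:schedulability-constrained} of Lemma~\ref{lemma:framework-constrained}, hence schedulability. Your chain of inequalities is valid; in fact positivity of $R$ is automatic, since $C_k/t_k+\alpha/\beta>0$ and $\alpha/\beta+1>0$ under the lemma's standing assumptions, so taking logarithms requires no case analysis and your positivity discussion is more cautious than necessary. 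As for what each approach buys: yours is shorter, fully rigorous (the paper's proof is only a sketch deferring to the Lagrange procedure of the previous lemma), and makes transparent that the exponential form is nothing more than the hyperbolic bound weakened term-by-term via $1+x\le e^x$; the paper's optimization view, on the other hand, identifies the extremal configuration (all utilizations equal), which ties Lemma~\ref{lemma:framework-totalU-exclusive} to the finite-$k$ bound of Lemma~\ref{lemma:framework-totalU-constrained} and shows the logarithmic bound is asymptotically tight as $k\to\infty$ --- the same phenomenon you observe from the other side when you note that $\ln(1+x)\le x$ is tight only at $x=0$.
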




We also construct the following Lemma~\ref{lemma:framework-general},  c.f. Appendix B for the proof, as the most powerful method for a concrete task system. 
Throughout the paper, we will not build theorems and corollaries based on 
Lemma~\ref{lemma:framework-general}\ifbool{techreport}{ but we will
  evaluate how it performs in the experimental section.}{, the
  performance evaluation can be found in the report \cite{DBLP:journals/corr/abs-1501.07084}.}

\begin{lemma}
\label{lemma:framework-general}
For a given $k$-point effective schedulability test of a fixed-priority scheduling
algorithm, defined in
Definition~\ref{def:kpoints}, task $\tau_k$ is schedulable by the scheduling algorithm,
in which $0 < t_k$ and $0 < \alpha_i$ and $0 < \beta_i$ for any
$i=1,2,\ldots,k-1$, 
 if
the following condition holds
\begin{equation}
\label{eq:schedulability-general}
0 < \frac{C_k}{t_k} \leq 1 -  \sum_{i=1}^{k-1}  \frac{U_i(\alpha_i
  +\beta_i)}{\prod_{j=i}^{k-1} (\beta_jU_j + 1)}.
\end{equation}
\end{lemma}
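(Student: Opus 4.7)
The plan is to adapt the linear-programming argument from the proof of Lemma~\ref{lemma:framework-constrained}, but to retain the individual coefficients $\alpha_i,\beta_i$ instead of collapsing them to uniform bounds $\alpha,\beta$. Taking the contrapositive, if $\tau_k$ fails the $k$-point test, then for every $j=1,\ldots,k$, $C_k+\sum_{i=1}^{k-1}\alpha_i t_i U_i+\sum_{i=1}^{j-1}\beta_i t_i U_i > t_j$. Introducing a slack $s\ge 0$, setting $t_k^*:=t_k+s$, and eliminating $C_k^*$ via the $k$-th inequality exactly as in the derivation of Eq.~\eqref{eq:lp-framework-constrained} produces the linear program $\min\, t_k^*-\sum_{i=1}^{k-1}(\alpha_i+\beta_i)U_i t_i^*$ subject to $t_j^*\le t_k^*-\sum_{i=j}^{k-1}\beta_i U_i t_i^*$ and $t_j^*\ge 0$ for $j=1,\ldots,k-1$, and $t_k^*\ge t_k$.

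I would then examine the natural candidate extreme point, obtained by setting all $k-1$ upper-bound constraints tight. Backward induction from $j=k-1$ solves the resulting triangular system to give $t_j^*=t_k^*/\prod_{l=j}^{k-1}(1+\beta_l U_l)$, and substituting into the objective yields $V_1 := t_k^*\bigl(1-\sum_{i=1}^{k-1}\tfrac{U_i(\alpha_i+\beta_i)}{\prod_{j=i}^{k-1}(1+\beta_j U_j)}\bigr)$, matching the right-hand side of Eq.~\eqref{eq:schedulability-general} at $t_k^*=t_k$.

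The main obstacle is showing that $V_1$ is actually the LP minimum, not just one extreme-point value; the ``drop a task, shrink the denominator'' trick of Step~3 in Lemma~\ref{lemma:framework-constrained} does not transfer cleanly once $\alpha_i/\beta_i$ varies with $i$. I would argue via LP weak duality on the fixed-$t_k^*$ subproblem. Letting $S_i:=\sum_{j\le i}y_j$ accumulate the dual multipliers of the first $k-1$ primal constraints, the dual becomes $\min t_k^*S_{k-1}$ under $S_0=0$, the forward recursion $(1+\beta_i U_i)S_i\ge S_{i-1}+(\alpha_i+\beta_i)U_i$, and the monotonicity $S_i\ge S_{i-1}$ forced by $y_i\ge 0$. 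The recursion-with-equality solution gives exactly $S_{k-1}^*=1-V_1/t_k^*$, and the monotonicity check is where the hypothesis $V_1>0$ enters: a short induction shows that $S_i^*\ge 1$ at any step would force $S_{i+1}^*>1$ (because $\alpha_{i+1}U_{i+1}>0$ makes $(\alpha_{i+1}+\beta_{i+1})U_{i+1}>\beta_{i+1}U_{i+1}$), so $S_{k-1}^*<1$ forces $S_i^*<1\le 1+\alpha_{i+1}/\beta_{i+1}$ throughout, which is the exact condition $S_{i-1}\le 1+\alpha_i/\beta_i$ under which the equality recursion automatically satisfies $S_i^*\ge S_{i-1}^*$. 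Hence $(y_i^*)$ is dual feasible with objective $t_k^*S_{k-1}^*$; weak duality yields $\min C_k^*\ge V_1$, and since $V_1$ is non-decreasing in $t_k^*\ge t_k$ the overall LP minimum is attained at $t_k^*=t_k$, so any $C_k$ at or below the stated bound contradicts the contrapositive assumption.
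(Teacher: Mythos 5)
Your proof is correct, and while it follows the paper's LP reduction step for step (contrapositive, slack variable $s$, elimination of $C_k^*$, the same all-constraints-tight extreme point yielding $V_1$), it settles the crucial optimality question by a genuinely different device. The paper's proof of Lemma~\ref{lemma:framework-general} works primally on the LP in Eq.~\eqref{eq:lp-framework-general-vary-alpha}: it characterizes extreme points by sets ${\bf T}_1,{\bf T}_2$ and runs an exchange argument, moving the first task $\tau_\ell$ with $t_\ell^*=0$ from ${\bf T}_1$ to ${\bf T}_2$ and showing the objective strictly decreases whenever $\sum_{i=1}^{\ell-1} U_i(\alpha_i+\beta_i)/\prod_{j=i}^{\ell-1}(\beta_jU_j+1)<1$ (its Case 2), while if that partial sum reaches $1$ the right-hand side of Eq.~\eqref{eq:schedulability-general} is already nonpositive and the lemma is vacuous (its Case 1). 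Your weak-duality certificate replaces this: the dual recursion $(1+\beta_iU_i)S_i\geq S_{i-1}+(\alpha_i+\beta_i)U_i$ is derived correctly, its equality solution telescopes to $S_{k-1}^*=\sum_{i=1}^{k-1}U_i(\alpha_i+\beta_i)/\prod_{j=i}^{k-1}(\beta_jU_j+1)$, and your forward induction (that $S_i^*\geq 1$ propagates because $\alpha_{i+1}U_{i+1}>0$) is precisely the dual shadow of the paper's Case 1/Case 2 dichotomy: non-vacuousness ($S_{k-1}^*<1$) forces $S_{i-1}^*<1\leq 1+\alpha_i/\beta_i$, i.e., $y_i^*\geq 0$. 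What your route buys: weak duality lower-bounds \emph{every} feasible point at once, so you need neither the extreme point theorem nor the pigeonhole description of extreme points nor the task-moving perturbation --- indeed, exhibiting the primal extreme point attaining $V_1$ is optional, since the bound $C_k^*\geq t_k^*(1-S_{k-1}^*)\geq t_k(1-S_{k-1}^*)$ already completes the contrapositive. What the paper's route buys: it is elementary (no dual construction) and reuses Steps 1--3 of Lemma~\ref{lemma:framework-constrained} nearly verbatim, keeping the two proofs structurally parallel.
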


\noindent{\bf{\large Remarks and how to use the framework}:}
After presenting the \framework{} framework, here, we explain how to use the \framework{} framework and summarize how we
plan to demonstrate its applicability in several task models in the following sections.
The \framework{} framework relies on the users to index the tasks properly and define $\alpha_i$ and $\beta_i$ as constants for $i=1,2,\ldots,k-1$ based on Eq.~\eqref{eq:precodition-schedulability}. The set $\setof{t_1, t_2, \ldots, t_{k-1}}$ in Definition~\ref{def:kpoints} is used only
for the users to define those constants, where $t_k$ is usually defined to be the interval length of the original schedulability test, e.g., $D_k$ in Eq.~\eqref{eq:exact-test-constrained-deadline}.
Therefore, the \framework{} framework can only be applicable when $\alpha_i$ and $\beta_i$ are well-defined. These constants
depend on the task models and the task parameters.

The choice of good parameters $\alpha_i$ and $\beta_i$ affects the
quality of the resulting schedulability bounds in Lemmas~\ref{lemma:framework-constrained} to \ref{lemma:framework-totalU-exclusive}.
However, deriving the \emph{good}  settings of $\alpha_i$ and $\beta_i$ is actually not the focus of this paper. The framework does not care how the parameters $\alpha_i$ and $\beta_i$ are obtained. The framework simply derives the bounds according to the given parameters $\alpha_i$ and $\beta_i$, regardless of the settings of $\alpha_i$ and $\beta_i$. The correctness of the settings of $\alpha_i$ and $\beta_i$ is not verified by the framework. 

The ignorance of the settings of $\alpha_i$ and $\beta_i$ actually leads to the elegance and the generality of the framework, which works as long as Eq.~\eqref{eq:precodition-schedulability} can be successfully constructed for the sufficient schedulability test of task $\tau_k$ in a fixed-priority scheduling policy.
With the availability of the \framework{} framework, the hyperbolic bounds or utilization bounds can be automatically derived by adopting  Lemmas \ref{lemma:framework-constrained} to \ref{lemma:framework-totalU-exclusive} as long as the safe upper bounds $\alpha$ and $\beta$  to cover all the possible settings of $\alpha_i$ and $\beta_i$ for the schedulability test in Eq.~\eqref{eq:precodition-schedulability} can be derived, regardless of the task model or the platforms.

The other approaches in \cite{journals/tc/LeeSP04,DBLP:dblp_journals/tc/BurchardLOS95,HanTyan-RTSS97} also have similar observations by testing only several time points in the TDA schedulability analysis based on Eq.~\eqref{eq:exact-test-constrained-deadline} in their problem formulations. Specifically, the problem formulations in \cite{DBLP:dblp_journals/tc/BurchardLOS95,HanTyan-RTSS97} are based on non-linear programming by selecting several good points under certain constraints. Moreover, the linear-programming problem formulation in \cite{journals/tc/LeeSP04} considers $U_i$ as variables and $t_i$ as constants and solves the corresponding linear programming analytically.
However, as these approaches in 
\cite{journals/tc/LeeSP04,DBLP:dblp_journals/tc/BurchardLOS95,HanTyan-RTSS97} seek for the total utilization bounds, they have limited applications and are less flexible. For example, they are typically not applicable directly when considering sporadic real-time tasks with arbitrary deadlines or multiprocessor systems. Here, we are more flexible in the  \framework{} framework. 
For task $\tau_i$, after $\alpha_i$ and $\beta_i$ or their safe upper bounds $\alpha$ and $\beta$ are derived, we completely drop out the dependency to the periods and models inside \framework{}.

\ifbool{techreport}{
\begin{figure*}[t]
} {
\begin{figure}[t]
}
	\begin{center}
\ifbool{techreport}{
      \begin{tikzpicture}[scale=1.2,transform shape]
} {
      \begin{tikzpicture}[scale=0.75,transform shape]
}
        \path \practica {1}{\underline{\bf Demonstrated Applications:}
          \begin{tabular}{ll}
          Sec. 5.1:& Constrained-deadline sporadic tasks\\
          Sec. 5.2:& Arbitrary-deadline sporadic tasks\\
          App. C\ifbool{techreport}{}{ \cite{DBLP:journals/corr/abs-1501.07084}}:& Multiframe tasks\\
          Sec. 6.1:& Multiprocessor DAG\\
          Sec. 6.2:& Multiprocessor self-suspension\\            
          \end{tabular}
        };
        \path (p1.west)+(1.8,-3.2) node(p2)[materia, rounded rectangle,text width=6em]{{\bf $U_i, \forall i < k$\\$\alpha_i, \forall i<k$\\$\beta_i, \forall i <k$\\$C_k, t_k$}}; 
        \path (p2.north)+(1.5,0.45) node[text width=16em]{Derive parameters\\by \underline{Definition 1}};
        \path (p2.east)+(2,0) node(p3)[practica,fill=blue!20,text width=5em,text centered]{\large{\bf \framework{}\\ framework}};

        \path (p3.east)+(2.5,+3) node(p4)[practica,fill=green!30,minimum width=6em,text width=5em,text centered]{Hyperbolic bound}; 
        \path (p4.south)+(0.8,-1) node(p5)[practica,fill=green!30,minimum width=6em,text width=5em,text centered]{Other utilization bounds}; 
        \path (p5.south)+(0,-1) node(p6)[practica,fill=green!30,minimum width=6em,text width=5em,text centered]{Extreme points test}; 
        
        \path [line, ->] (p1.west) -- +(-0.3,0) node[black, rotate=90, yshift=0.3cm, xshift=-1.6cm]{\footnotesize{define $t_i, \forall i < k$ and order $k-1$ tasks}} -- + (-0.3, -3.2) -- (p2.west);
        \path [line, ->] (p2.east) -- (p3.west);
        \path [line, ->] (p3.east)+(0,0.3) -- node[rotate=60,yshift=0.3cm,black]{Lemma 1} (p4.west);
        \path [line, ->] (p3.east)+(0,0) -- node[rotate=35,yshift=0.3cm,black]{Lemmas 2\&3} (p5.west);
        \path [line, ->] (p3.east)+(0,-0.3) -- node[rotate=5,yshift=0.3cm,black]{Lemma 4} (p6.west);
      \end{tikzpicture}    
	\end{center}
\vspace{-2mm}
\caption{The \framework{} framework. }
\label{fig:framework}
\ifbool{techreport}{
\end{figure*}
} {
\end{figure}
}

We will demonstrate the applicability and generality of \framework{} by using the most-adopted sporadic real-time task model in Sec.~\ref{sec:application-FP}, multi-frame tasks in Appendix C\citetechreport{}
and multiprocessor scheduling in Sec.~\ref{sec:multiprocessor}, as illustrated in Figure~\ref{fig:framework}.
In all these cases, we can find reasonable settings of $\alpha$ and $\beta$ to provide better results or new results for schedulability tests, with respect to the utilization bounds, speed-up factors, or capacity augmentation factors, compared to the literature.  More specifically, (after certain reorganizations), we will greedily set $t_i$ as $\floor{\frac{t_k}{T_i}}T_i$ in all of the studied cases.\footnote{Setting $t_i$ as $\floor{\frac{t_k}{T_i}}T_i$ is actually the same in \cite{journals/tc/LeeSP04} for the sporadic real-time task model with implicit deadlines and the multi-frame task model when $T_i$ is given and $U_i$ is considered as a variable.}
Table \ref{tab:alpha-beta} summarizes the $\alpha_i$ and $\beta_i$ parameters derived in this paper, as well as an earlier result by Liu and Chen in \cite{RTSS14a} for self-suspending task models and deferrable servers.

\begin{table*}[t]
\renewcommand{\arraystretch}{1.5}
  \centering
  \scalebox{0.8}{
  \begin{tabular}{|p{7cm}|c|c|p{4cm}|}
    \hline
    \hline
    Model & $\alpha_i$ & $\beta_i$ & c.f.\\ 
    \hline
    \hline
  Uniprocessor Sporadic Tasks& $\alpha_i=1$ & $0 < \beta_i \leq 1$ & Theorem~\ref{theorem:sporadic-general} and Corollary~\ref{col:arbitrary-general}\\
    \hline
   Multiprocessor Global RM for Different Models & $0 < \alpha_i \leq
   \frac{2}{M}$ & $0 < \beta_i\leq \frac{1}{M}$ &
   Theorems~\ref{thm:multiprocessor-DAG} -
   \ref{thm:multiprocessor-suspension}\\
    \hline
  Uniprocessor  Multi-frame Tasks& $0 < \alpha_i \leq 1$ & $0 < \beta_i\leq \frac{\phi_i(2)-\phi_i(1)}{\phi_i(1)}$ & \ifbool{techreport}{Theorem~\ref{thm:multiframe}}{Theorem 7 in \cite{DBLP:journals/corr/abs-1501.07084}}\\

    \hline
   Uniprocessor Self-Suspending Tasks& $0 < \alpha_i\leq 2$ & $0 < \beta_i\leq 1$ & Theorems 5 and 6 in \cite{RTSS14a}, implicitly\\
    \hline
  \end{tabular}
}
  \caption{\small The applicability of \framework{} for different task models: their $\alpha_i$ and $\beta_i$ parameters, where $M$ is the number of processors in global RM scheduling and  $\phi_i(\ell)$ is the maximum of the sum of the execution time of
    any $\ell$ consecutive frames of task $\tau_i$ in the multi-frame model.}
  \label{tab:alpha-beta}
\end{table*}

\section{Applications for Fixed-Priority Scheduling}
\label{sec:application-FP}

This section provides demonstrations on how to use the \framework{}
framework to derive efficient schedulability analysis for sporadic
task systems in uniprocessor systems. We will consider
constrained-deadline systems first in
Sec.~\ref{sec:application-constrained}
and explain how to extend to
arbitrary-deadline systems in Sec.~\ref{sec:application-arbitrary}. For the rest of this section, we will implicitly assume $C_k > 0$.

\subsection{Constrained-Deadline Systems}
\label{sec:application-constrained}

For a specified fixed-priority scheduling algorithm, let $hp(\tau_k)$ be the set
of tasks with higher priority than $\tau_k$. We now classify the task
set $hp(\tau_k)$ into two subsets:
\begin{itemize}
\item $hp_1(\tau_k)$ consists of the higher-priority tasks with periods
  smaller than $D_k$.
\item $hp_2(\tau_k)$ consists of the higher-priority tasks with periods
  larger than or equal to $D_k$.
\end{itemize}

For any $0 < t \leq D_k$,
we know that a safe upper bound on the interference due to higher-priority tasks is given by
\[
\sum_{\tau_i \in hp(\tau_k)} \ceiling{\frac{t}{T_i}}C_i = \sum_{\tau_i \in hp_2(\tau_k)} C_i + \sum_{\tau_i \in hp_1(\tau_k)} \ceiling{\frac{t}{T_i}}C_i .
\]
As a result, the schedulability test in
Eq.~\eqref{eq:exact-test-constrained-deadline}  is equivalent to the
verification of the existence of $0 < t \leq D_k$ such that
\begin{equation}
  \label{eq:exact-test-constrained-deadline-2}
 C_k + \sum_{\tau_i \in hp_2(\tau_k)} C_i + \sum_{\tau_i \in hp_1(\tau_k)} \ceiling{\frac{t}{T_i}}C_i \leq t.  
\end{equation}
We can then create a virtual sporadic task $\tau_k'$ with execution
time $C_k'=C_k + \sum_{\tau_i \in hp_2(\tau_k)} C_i$, relative
deadline $D_k'=D_k$, and period $T_k'=D_k$. It is
clear that the schedulability test to verify the schedulability of
task $\tau_k'$ under the interference of the higher-priority tasks
$hp_1(\tau_k)$ is the same as that of task $\tau_k$ under the
interference of the higher-priority tasks $hp(\tau_k)$.

Therefore, with the above analysis, we can use the \framework{}
framework in Sec.~\ref{sec:framework} as in the following theorem.

\begin{theorem}
\label{theorem:sporadic-general}
Task $\tau_k$ in a sporadic task system with constrained deadlines is
schedulable by the fixed-priority scheduling algorithm if
\begin{equation}
\label{eq:schedulability-sporadic-any-a}
(\frac{C_k'}{D_k}+1) \prod_{\tau_j \in hp_1(\tau_k)} (U_j + 1)\leq 2
\end{equation}
or
\begin{equation}
\label{eq:schedulability-sporadic-any-b}
\frac{C_k'}{D_k}+  \sum_{\tau_j \in hp_1(\tau_k)}U_j\leq  k(2^{\frac{1}{k}}-1).
\end{equation}
\end{theorem}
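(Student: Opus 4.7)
The plan is to instantiate the \framework{} framework on the virtual task $\tau_k'$ (with execution time $C_k' = C_k + \sum_{\tau_i \in hp_2(\tau_k)} C_i$, period and relative deadline both equal to $D_k$) against the higher-priority interferers in $hp_1(\tau_k)$ only. The reduction in \eqref{eq:exact-test-constrained-deadline-2} already justifies this substitution, so what remains is to repackage the ordinary TDA condition for $\tau_k'$ into the $k$-point effective form of Definition~\ref{def:kpoints}, identify the constants $\alpha_i$ and $\beta_i$, and then invoke Lemmas~\ref{lemma:framework-constrained} and \ref{lemma:framework-totalU-constrained}.

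For the testing points I would take $t_i := \lfloor D_k / T_i \rfloor T_i$ for each $\tau_i \in hp_1(\tau_k)$, relabel the tasks in $hp_1(\tau_k)$ so that $t_1 \leq t_2 \leq \cdots \leq t_{k-1}$, and set $t_k := D_k$. The central inequality to prove is
\begin{equation*}
\sum_{\tau_i \in hp_1(\tau_k)} \left\lceil \frac{t_j}{T_i} \right\rceil C_i \;\leq\; \sum_{i=1}^{k-1} t_i U_i \;+\; \sum_{i=1}^{j-1} C_i \qquad \text{for each } j=1,\ldots,k,
\end{equation*}
which splits naturally along the ordering: for $i \geq j$ the monotonicity $t_j \leq t_i$ combined with the integrality of $t_i/T_i = \lfloor D_k/T_i \rfloor$ yields $\lceil t_j/T_i \rceil \leq t_i/T_i$, hence $\lceil t_j/T_i \rceil C_i \leq t_i U_i$; for $i < j$ the elementary bound $\lceil t_j/T_i \rceil \leq \lfloor t_j/T_i \rfloor + 1 \leq t_i/T_i + 1$ contributes one additional $C_i$ per such task. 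Writing $C_i = (T_i/t_i)\, t_i U_i$, this delivers exactly Definition~\ref{def:kpoints} with $\alpha_i = 1$ and $\beta_i = 1/\lfloor D_k/T_i \rfloor$; the hypothesis $\tau_i \in hp_1(\tau_k)$, i.e., $T_i < D_k$, guarantees $\lfloor D_k/T_i \rfloor \geq 1$, so $\beta_i \leq 1$ is well-defined.

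Taking the safe uniform bounds $\alpha = \beta = 1$, both conclusions drop out mechanically. Lemma~\ref{lemma:framework-constrained} specializes to $C_k'/D_k \leq 2/\prod_{\tau_j \in hp_1(\tau_k)}(U_j+1) - 1$, which is a one-line rearrangement of \eqref{eq:schedulability-sporadic-any-a}. For Lemma~\ref{lemma:framework-totalU-constrained}, $(\alpha+\beta)^{1/k} = 2^{1/k} \geq 1 = \alpha$ (recall $k \geq 2$) rules out the first two cases of the piecewise bound and places us in the third, where the formula collapses to $k(2^{1/k}-1)$, yielding \eqref{eq:schedulability-sporadic-any-b}. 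The main (and essentially only) nontrivial step is the case-split interference bound above, and it is precisely there that the separation of $hp(\tau_k)$ into $hp_1(\tau_k)$ and $hp_2(\tau_k)$ earns its keep: absorbing the $hp_2$ terms into $C_k'$ keeps all remaining $\beta_i$ below one, which is what lets us use the constants $\alpha = \beta = 1$ when invoking the framework.
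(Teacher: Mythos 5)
Your proposal is correct and follows essentially the same route as the paper's own proof: the same virtual-task reduction absorbing $hp_2(\tau_k)$ into $C_k'$, the same testing points $t_i = \floor{D_k/T_i}T_i$ ordered non-decreasingly with $t_k = D_k$, the same case-split ceiling bound giving $\alpha_i = 1$ and $\beta_i = T_i/t_i \leq 1$, and the same invocation of Lemmas~\ref{lemma:framework-constrained} and~\ref{lemma:framework-totalU-constrained} with $\alpha = \beta = 1$. If anything, your write-up spells out the two ceiling estimates (via integrality of $t_i/T_i$ and monotonicity of the floor) slightly more explicitly than the paper does, but there is no substantive difference.
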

\begin{proof}
For notational brevity, suppose that there are $k-1$ tasks in
$hp_1(\tau_k)$.\footnote{When $hp_2(\tau_k)$ is not empty, there are $k-1-|hp_2(\tau_k)|$ tasks in $hp_1(\tau_k)$.
To be notationally precise, we can denote the number of tasks in $hp_1(\tau_k)$ by a new symbol $k^*$. Since  $k^* \leq k$, we know $k^*(2^{\frac{1}{k^*}}-1) \geq  k(2^{\frac{1}{k}}-1)$ as a safe bound in Eq.~\eqref{eq:schedulability-sporadic-any-b}. However, this may make the notations too complicated. We have decided to keep it as $k-1$ for the sake of notational brevity. }  
Now, we index the higher-priority tasks in
$hp_1(\tau_k)$ to form the corresponding $\tau_1, \tau_2, \ldots,
\tau_{k-1}$. The $k-1$ higher-priority tasks in $hp_1(\tau_k)$ are
ordered to ensure that the arrival times, i.e., $\floor{\frac{D_k}{T_i}}T_i$,  of the last jobs no later than $D_k$
are in a non-decreasing order.  That is, with the above indexing of
the higher-priority tasks in $hp_1(\tau_k)$, we have
$\floor{\frac{D_k}{T_i}}T_i \leq
\floor{\frac{D_k}{T_{i+1}}}T_{i+1}$ for $i=1,2,\ldots,k-2$.
Now, we set $t_i$ as $\floor{\frac{D_k}{T_i}}T_i$ for
$i=1,2,\ldots,k-1$, and $t_k$ as $D_k$. Due to the fact that $T_i \leq D_k$
for
$i=1,2,\ldots,k-1$, we know that $t_i > 0$.

Therefore, 
for a given $t_j$ with $j=1,2,\ldots,k$, the demand requested up to time $t_j$ is at most  
\begin{align}
  & C_k + \sum_{\tau_i \in hp_2(\tau_k)} C_i + \sum_{\tau_i \in hp_1(\tau_k)} \ceiling{\frac{t_j}{T_i}}C_i\nonumber\\
  = & C_k' +  \sum_{i=1}^{k-1} \ceiling{\frac{t_j}{T_i}}C_i
  \leq   C_k' + \sum_{i=1}^{k-1} \frac{t_i}{T_i}C_i +  \sum_{i=1}^{j-1} C_i,\nonumber
\end{align}
where the inequality comes from the indexing policy defined above,
i.e., $\ceiling{\frac{t_j}{T_i}} \leq \frac{t_i}{T_i}+1$ if $j
> i$  and $\ceiling{\frac{t_j}{T_i}} \leq
\frac{t_i}{T_i}$ if $j \leq i$.
Since $T_i < D_k$ for any task $\tau_i$ in $hp_1(\tau_k)$, we know
  that $t_i \geq T_i$. 
Instead of testing all the $t$ values in
Eq.~(\ref{eq:exact-test-constrained-deadline-2}), we only apply the
test for these $k$ different $t_i$ values, which is equivalent to the
test of the existence of $t_j$ such that
\begin{equation}
\label{eq:k-point-sporadic}
  C_k'
  + \sum_{i=1}^{k-1} t_i U_i + \sum_{i=1}^{j-1}  \frac{T_i}{t_i}  t_i U_i \leq t_j.
\end{equation}
Therefore, we reach the schedulability in the form of Eq.~\eqref{eq:precodition-schedulability} under the setting of $\alpha_i$ to $1$ and $\beta_i$ to $\frac{T_i}{t_i} \leq 1$ (due to $t_i \geq T_i$), for $i=1,2,\ldots,k-1$.  By taking $\beta_i \leq 1$ and $\alpha_i = 1$ for
  $i=1,2,\ldots,k-1$ in Lemmas~\ref{lemma:framework-constrained}
  and~\ref{lemma:framework-totalU-constrained}, we reach the
  conclusion.
\end{proof}


When RM priority ordering is applied for an implicit-deadline
task system, $C_k'$ is equal to $C_k$ and $\frac{C_k'}{D_k}$ is equal
to $U_k$. For such a case, the condition in
Eq.~\eqref{eq:schedulability-sporadic-any-a} is the same as the
hyperbolic bound provided in \cite{bini2003rate}, and the condition in
Eq.~\eqref{eq:schedulability-sporadic-any-b} is the same as least
utilization upper bound in \cite{liu1973scheduling}.

The schedulability test in
Theorem~\ref{theorem:sporadic-general} may seem pessimistic at first glance. We evaluate the quality of the
schedulability test in Theorem~\ref{theorem:sporadic-general} by
quantifying the speed-up factor with respect to the optimal schedule
(i.e., EDF scheduling in such a case). We show that the speed-up
factor of the schedulability test in
Eq.~\eqref{eq:schedulability-sporadic-any-a} is $1.76322$, which is
the same as the lower bound and upper bound of DM as shown in
\cite{DBLP:journals/rts/DavisRBB09}. The speed-up factor of DM,
regardless of the schedulability tests, obtained by Davis et
al. \cite{DBLP:journals/rts/DavisRBB09} is the same as our result. Our
result is thus stronger, as we show that the factor already holds when
using the schedulability test in Eq.~\eqref{eq:schedulability-sporadic-any-a} in the following theorem.

\begin{theorem}
  \label{thm:speedup-DM}
  The speed-up factor of the schedulability test in
  Eq.~\eqref{eq:schedulability-sporadic-any-a} under DM scheduling for
  constrained-deadline tasks is $1.76322$ with respect to EDF.
\end{theorem}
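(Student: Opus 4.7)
The plan is to establish the tight speed-up factor by showing both directions: every task set schedulable by EDF on a unit-speed processor is accepted by the test in Eq.~(\ref{eq:schedulability-sporadic-any-a}) when all execution times are divided by $\sigma = 1.76322$ (the upper bound), and no smaller $\sigma$ suffices uniformly (the lower bound).

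The lower bound is essentially free. The worst-case task family constructed by Davis et al.~\cite{DBLP:journals/rts/DavisRBB09} witnesses that DM itself requires $\sigma \geq 1.76322$. Since Eq.~(\ref{eq:schedulability-sporadic-any-a}) is only a sufficient condition for DM-schedulability, every instance that defeats DM also defeats the test, so the same family transfers over without modification.

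For the upper bound I would argue as follows. Fix a task $\tau_k$ and assume the task set is EDF-feasible on the unit-speed processor. Invoke the processor-demand necessary condition of EDF at $t = D_k$ to obtain
\begin{equation*}
C_k + \sum_{\tau_i \in hp_2(\tau_k)} C_i + \sum_{\tau_j \in hp_1(\tau_k)} \left(\left\lfloor\tfrac{D_k-D_j}{T_j}\right\rfloor + 1\right)C_j \;\leq\; D_k.
\end{equation*}
Writing $x = C_k'/D_k$, this yields a linear constraint coupling $x$ and the utilizations $U_j$ of the tasks in $hp_1(\tau_k)$; an additional necessary condition is the total utilization bound $\sum_i U_i \leq 1$. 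Next, substitute the scaled parameters ($C_i \mapsto C_i/\sigma$, hence $U_i \mapsto U_i/\sigma$) into Eq.~(\ref{eq:schedulability-sporadic-any-a}) and bound the product via $\ln(1+y) \leq y$, reducing the target inequality to $(x/\sigma + 1)\exp(S/\sigma) \leq 2$, where $S = \sum_{\tau_j \in hp_1(\tau_k)} U_j$. Finally, maximize this left-hand side over all $(x, S)$ admissible under the constraints extracted above. By log-concavity the product is worst in the many-small-tasks limit, so the optimization collapses to a two-scalar problem; its first-order condition yields a transcendental equation whose positive root is the Omega constant $\Omega$ satisfying $\Omega e^{\Omega} = 1$, giving the critical speed-up $\sigma^\ast = 1/\Omega \approx 1.76322$.

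The main obstacle will be the extremal analysis in the last step: capturing precisely how the presence of tasks in $hp_2(\tau_k)$ couples $C_k'$ to the EDF-feasibility bound, so that the resulting optimization produces the Omega constant rather than the weaker factor $1/\ln 2 \approx 1.4427$ one would get from using only $\sum U_i \leq 1$ in isolation (as in the implicit-deadline analysis). The discreteness introduced by the floor function in the processor-demand condition also requires some care, but it washes out in the continuous (many-task) limit where the extremum is attained, so the continuous analysis remains tight.
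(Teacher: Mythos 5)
Your lower-bound transfer from Davis et al.\ is fine (the test is sufficient for DM, so any instance defeating DM defeats the test), and you have identified the right constant ($\Omega e^{\Omega}=1$, $\sigma^\ast = 1/\Omega$). The genuine gap is in the upper-bound optimization step. The only information you retain from EDF feasibility is the demand condition at the \emph{single} point $t = D_k$ together with $\sum_i U_i \leq 1$. The first of these reads $x + \sum_{\tau_j \in hp_1(\tau_k)} \frac{t_j}{D_k} U_j \leq 1$ with $t_j = \floor{D_k/T_j}T_j$, and the coefficients $t_j/D_k$ can be driven down to $1/2$ by an adversary choosing $T_j = D_j$ slightly above $D_k/2$. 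Concretely, take $C_k' = D_k/2$ (with $T_k$ large so $U_k \approx 0$) and many small higher-priority tasks with $T_j = D_j \approx D_k/2$ and $S = \sum_j U_j \rightarrow 1$: both of your extracted constraints hold (with near equality), yet at $\sigma = 1.76322$ the scaled test gives $(x/\sigma + 1)\prod_j(1+U_j/\sigma) \rightarrow \left(1+\tfrac{1}{2\sigma}\right)e^{1/\sigma} \approx 1.284 \times 1.763 \approx 2.26 > 2$. So the maximum of your two-scalar problem is \emph{not} $\leq 2$, and its first-order condition does not produce the Omega constant; the reduction is lossy exactly where it matters. (The theorem survives because this configuration is EDF-infeasible: just past $D_k$ the second jobs of the $hp_1$ tasks fall due and the demand jumps to about $1.5\,D_k$ --- dbf information at points other than $D_k$, which your reduction discards. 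This also shows the discreteness does not ``wash out''; the period geometry is the crux, not a limit artifact.)

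The paper's proof avoids this trap by arguing contrapositively and by never treating the demand constraint as linear in $(x,S)$ with configuration-independent coefficients. Assuming the test in Eq.~\eqref{eq:schedulability-sporadic-any-a} \emph{fails}, it reuses the extreme-point/LP machinery from the proof of Lemma~\ref{lemma:framework-constrained} (with $\alpha_i=\beta_i=1$): among all configurations violating the $k$-point test, the minimum of the demand proxy $C_k^* + \sum_i t_i^* U_i$ is attained at the cascade $t_{i+1}^*/t_i^* = 1+U_i$, which yields $\frac{C_k' + \sum_i dbf_i(D_k)}{D_k} > \frac{1}{\prod_{j=1}^{k-1}(1+U_j)} = \frac{1+x}{2}$ --- a \emph{multiplicative} coupling between demand and utilizations --- and pairs this with $\sum_j U_j > \ln\frac{2}{1+x}$ from Lemma~\ref{lemma:framework-totalU-exclusive}. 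Since one bound increases and the other decreases in $x$, the worst case is their intersection $y = \ln(1/y)$, i.e., $ye^y = 1$, so either the demand at $D_k$ or the total utilization exceeds $\Omega = 1/1.76322$, defeating EDF at speed $1/1.76322$. To repair your direct route you would need to impose EDF's demand condition at (at least) the $k$ cascade points rather than only at $D_k$, at which point you are essentially re-deriving the paper's LP argument.
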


The proof of Theorem~\ref{thm:speedup-DM} in the Appendix B, which is much simpler than the proof in \cite{DBLP:journals/rts/DavisRBB09}, can also be considered as an alternative proof of the speed-up factor of DM.

\begin{corollary}
  \label{col:implicit-general-ratio}
  Suppose that $f \cdot T_i \leq D_k$ for any higher priority task
  $\tau_i$ in $hp_1(\tau_k)$, where $f$ is a positive integer.  Task $\tau_k$
  in a constrained-deadline sporadic task system is schedulable by the fixed-priority
  scheduling algorithm if
\begin{equation}
\label{eq:schedulability-sporadic-implicit-ratio-a}
(\frac{C_k'}{f \cdot D_k}+1) \prod_{\tau_j \in hp_1(\tau_k)} (\frac{U_j}{f} + 1)\leq \frac{f+1}{f}
\end{equation}
or
\begin{equation}
  \label{eq:schedulability-sporadic-implicit-ratio-b}
\frac{C_k'}{D_k}+  \sum_{\tau_j \in hp_1(\tau_k)}U_j\leq f k\left(\left({\frac{f+1}{f}}\right)^{\frac{1}{k}}-1\right).
\end{equation}
\end{corollary}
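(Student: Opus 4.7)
The plan is to mirror the proof of Theorem~\ref{theorem:sporadic-general} almost verbatim, observing that the extra hypothesis $f \cdot T_i \le D_k$ strengthens the bound on $\beta_i$ from $\beta_i \le 1$ to $\beta_i \le \tfrac{1}{f}$, after which everything reduces to a clean application of Lemmas~\ref{lemma:framework-constrained} and~\ref{lemma:framework-totalU-constrained}.

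First I would reuse the reduction machinery from Theorem~\ref{theorem:sporadic-general}: split $hp(\tau_k)$ into $hp_1(\tau_k)$ and $hp_2(\tau_k)$, absorb the tasks of $hp_2(\tau_k)$ into the virtual execution time $C_k' = C_k + \sum_{\tau_i\in hp_2(\tau_k)} C_i$, and consider the $k$ test points $t_i = \lfloor D_k/T_i\rfloor T_i$ for $\tau_i\in hp_1(\tau_k)$ together with $t_k = D_k$, indexed so that the $t_i$'s are non-decreasing. The same counting argument as in Theorem~\ref{theorem:sporadic-general} then yields a $k$-point effective schedulability test of the form of Eq.~\eqref{eq:precodition-schedulability} with $\alpha_i = 1$ and $\beta_i = T_i / t_i$.

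The new ingredient is the estimate of $\beta_i$. Since the corollary assumes $f \cdot T_i \le D_k$ for every $\tau_i \in hp_1(\tau_k)$, we have $\lfloor D_k/T_i \rfloor \ge f$, hence $t_i \ge f T_i$ and therefore $\beta_i = T_i/t_i \le 1/f$. So we may take $\alpha = 1$ and $\beta = 1/f$ as uniform upper bounds in the framework lemmas.

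Plugging $\alpha = 1$, $\beta = 1/f$ into Lemma~\ref{lemma:framework-constrained} gives $\tfrac{C_k'}{D_k} \le \tfrac{f+1}{\prod_{\tau_j\in hp_1(\tau_k)}(U_j/f + 1)} - f$, which after dividing through by $f$ and rearranging becomes exactly Eq.~\eqref{eq:schedulability-sporadic-implicit-ratio-a}. For the utilization-style bound I would invoke Lemma~\ref{lemma:framework-totalU-constrained}: since $(\alpha+\beta)^{1/k} = ((f+1)/f)^{1/k} \ge 1 = \alpha$, only the third branch applies, and its right-hand side simplifies via $(k-1)(x-1) + (x-\alpha) = k(x-1)$ with $x = ((f+1)/f)^{1/k}$ to $f k\bigl(((f+1)/f)^{1/k} - 1\bigr)$, matching Eq.~\eqref{eq:schedulability-sporadic-implicit-ratio-b}. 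The only place where care is needed is in the case analysis of Lemma~\ref{lemma:framework-totalU-constrained} and in verifying the algebraic simplification into the stated closed form; both are routine but easy to mis-transcribe, so that is the step I would double-check most carefully.
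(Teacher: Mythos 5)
Your proposal is correct and follows essentially the same route as the paper's own proof: both reuse the construction from Theorem~\ref{theorem:sporadic-general}, observe that $f \cdot T_i \leq D_k$ gives $\frac{t_i}{T_i} = \floor{\frac{D_k}{T_i}} \geq f$ and hence $\beta_i \leq \frac{1}{f}$ with $\alpha_i = 1$, and then apply Lemma~\ref{lemma:framework-constrained} (with $\frac{\alpha}{\beta} = f$, rearranging and dividing by $f$ to get Eq.~\eqref{eq:schedulability-sporadic-implicit-ratio-a}) and Lemma~\ref{lemma:framework-totalU-constrained} for Eq.~\eqref{eq:schedulability-sporadic-implicit-ratio-b}. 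Your explicit verification that only the third branch of Lemma~\ref{lemma:framework-totalU-constrained} applies and that $(k-1)(x-1)+(x-1) = k(x-1)$ with $x = \left(\frac{f+1}{f}\right)^{\frac{1}{k}}$ simply fills in algebra the paper leaves implicit in its phrase ``by using a similar argument.''
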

\begin{proof}
  This is based on the same proof as Theorem~\ref{theorem:sporadic-general}, by taking the fact that $\frac{t_i}{T_i} = \frac{\floor{\frac{D_k}{T_i}}T_i}{T_i} \geq f$. 
  In the $k$-point effective schedulability test, we can set
  $\alpha_i$ to $1$, $\beta_i = \frac{T_i}{t_i} \leq \frac{1}{f}$.
  Therefore, we have $\alpha_i \leq \alpha=1$,
  $\beta_i\leq\beta=\frac{1}{f}$, and $\frac{\alpha}{\beta}$ is $f$.
  By adopting
  Lemma~\ref{lemma:framework-constrained}, we know that task $\tau_k$
  is schedulable under the scheduling policy if
\ifbool{techreport}{
\[}{$}
 (\frac{C_k'}{D_k} + f) \prod_{{\tau_j \in hp_1(\tau_k)}} (\frac{U_j}{f} + 1)\leq f+1,
\ifbool{techreport}{
\]}{$}
  which is the same as the condition in
  Eq.~\eqref{eq:schedulability-sporadic-implicit-ratio-a} by dividing
  both sides by $f$. By using a similar argument and applying
  Lemma~\ref{lemma:framework-totalU-constrained}, we can reach the
  condition in Eq.~\eqref{eq:schedulability-sporadic-implicit-ratio-b}. 
\end{proof}
\ifbool{techreport}{
Note that the right-hand side of
Eq.~\eqref{eq:schedulability-sporadic-implicit-ratio-b} converges to $f
\ln(\frac{f+1}{f})$ when $k$ goes to $\infty$.}{}




\subsection{Arbitrary-Deadline Systems}
\label{sec:application-arbitrary}

We now further explore how to use the proposed framework to perform
the schedulability analysis for arbitrary-deadline task sets.
The
exact schedulability analysis for arbitrary-deadline task sets under
fixed-priority scheduling has been developed in
\cite{DBLP:conf/rtss/Lehoczky90}. The schedulability analysis is to
use a \emph{busy-window} concept to evaluate the worst-case response
time. That is, we release all the higher-priority tasks together with
task $\tau_k$ at time $0$ and all the subsequent jobs are released as
early as possible by respecting to the minimum inter-arrival time. The
busy window finishes when a job of task $\tau_k$ finishes before the
next release of a job of task $\tau_k$. It has been shown in
\cite{DBLP:conf/rtss/Lehoczky90} that the worst-case response time of
task $\tau_k$ can be found in one of the jobs of task $\tau_k$ in the
busy window.

Therefore, a simpler sufficient schedulability test for a task
$\tau_k$ is to test whether the length of the busy window is within
$D_k$.  If so, all invocations of task $\tau_k$ released in the busy
window can finish before their relative deadline. Such an observation
has also been adopted in \cite{conf:/rtns09/Davis}. Therefore, a
sufficient test is to verify whether\footnote{This analysis is pretty pessimistic. 
But, as our objective in this paper is to show the applicability and generality of \framework{}, 
how to use \framework{} to get tighter analysis for arbitrary-deadline task systems is not our focus in this paper. Some evaluations can be found in \cite{DBLP:journals/corr/framework-compare}.  }
\begin{equation}
  \label{eq:sufficient-test-arbitrary-deadline}
\exists t \mbox{ with } 0 < t \leq D_k {\;\; and \;\;} \ceiling{\frac{t}{T_k}}C_k +
\sum_{\tau_i \in hp(\tau_k)} \ceiling{\frac{t}{T_i}}C_i \leq t. 
\end{equation}
If the condition in Eq.~\eqref{eq:sufficient-test-arbitrary-deadline} holds,
it implies that the busy window (when considering task $\tau_k$) is no
more than $D_k$, and, hence, task $\tau_k$ has worst-case response
time no more than $D_k$.

If $D_k \leq T_k$, the analysis in
Sec.~\ref{sec:application-constrained} can be directly applied. If
$D_k > T_k$, we need to consider the length of the busy-window for
task $\tau_k$ as shown above. For the rest of this section, we will
focus on the case $D_k > T_k$.
We can rewrite
Eq.~\eqref{eq:sufficient-test-arbitrary-deadline} to use a more
pessimistic condition by releasing the workload
$\ceiling{\frac{D_k}{T_k}}C_k$ at time $0$. That is, if
\begin{equation}
  \label{eq:sufficient-test-arbitrary-deadline-pessimistic}
\exists t \mbox{ with } 0 < t \leq D_k {\;\; and \;\;} \ceiling{\frac{D_k}{T_k}}C_k +
\sum_{\tau_i \in hp(\tau_k)} \ceiling{\frac{t}{T_i}}C_i \leq t, 
\end{equation}
then, the length of the busy window for task $\tau_k$ is no more than
$D_k$. Again, similar to the strategy
we use in Sec.~\ref{sec:application-constrained}, we classify the
tasks in $\tau_i \in hp(\tau_k)$ into two sets $hp_1(\tau_k)$ and
$hp_2(\tau_k)$ with the same definition.

Similarly, we can then create a virtual sporadic task $\tau_k'$ with
execution time $C_k' = \ceiling{\frac{D_k}{T_k}}C_k + \sum_{\tau_i \in
  hp_2(\tau_k)} C_i$, relative deadline $D_k'=D_k$, and period $T_k'=D_k$.
For notational brevity, suppose that there are $k-1$ tasks in
$hp_1(\tau_k)$.  Now, we index the higher-priority tasks in
$hp_1(\tau_k)$ to form the corresponding $\tau_1, \tau_2, \ldots,
\tau_{k-1}$.  In the above definition of the busy window concept,
$\floor{\frac{D_k}{T_i}}T_i$ is the arrival time of the last job
of task $\tau_i$ released no later than $D_k$. The $k-1$ higher-priority
tasks in $hp_1(\tau_k)$ are ordered to ensure that the
arrival times of the last jobs before $D_k$ are in a non-decreasing
order.  Moreover, $t_k$ is the specified testing point $D_k$.  Instead
of testing all the $t$ values in
Eq.~\eqref{eq:sufficient-test-arbitrary-deadline-pessimistic}, we only
apply the test for these $k$ different $t_i$ values, which is
equivalent to the test of the existence of $t_j$ such that
Eq.~\eqref{eq:k-point-sporadic} holds, where $\alpha_i \leq 1$ and
$\beta_i = \frac{T_i}{t_i} \leq 1$ for $i=1,2,\ldots,k-1$, similar to the proof of Theorem~\ref{theorem:sporadic-general}.
\ifbool{techreport}{

Therefore, we can then use the \framework{} framework, i.e., Lemmas
\ref{lemma:framework-constrained} to
\ref{lemma:framework-totalU-exclusive} to test the schedulability of
task $\tau_k$. }{}
The following corollary comes from a similar argument as in
Sec.~\ref{sec:application-constrained}.
\begin{corollary}
\label{col:arbitrary-general}
Task $\tau_k$ in a sporadic arbitrary-deadline task system is schedulable by the
fixed-priority scheduling algorithm if
Eq.~\eqref{eq:schedulability-sporadic-any-a} or
\eqref{eq:schedulability-sporadic-any-b} holds,
in which there are $k-1$ higher priority tasks in $hp_1(\tau_k)$ and $C_k'$ is defined as $\ceiling{\frac{D_k}{T_k}}C_k + \sum_{\tau_i
  \in hp_2(\tau_k)} C_i$.
\end{corollary}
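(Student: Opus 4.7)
The plan is to mirror the proof of Theorem~\ref{theorem:sporadic-general} almost verbatim, since the author has already done most of the reduction work in the text immediately preceding the corollary. The key observation is that Eq.~\eqref{eq:sufficient-test-arbitrary-deadline-pessimistic} has the same structural form as Eq.~\eqref{eq:exact-test-constrained-deadline-2} once we replace the execution-time term: instead of $C_k$ at time $0$ we release $\ceiling{D_k/T_k}C_k$ at time $0$, and the rest (the $hp_2(\tau_k)$ contribution being absorbed as a constant, and the $hp_1(\tau_k)$ contribution being $\sum \ceiling{t/T_i} C_i$) is identical. Thus the virtual task $\tau_k'$ with $C_k'=\ceiling{D_k/T_k}C_k + \sum_{\tau_i \in hp_2(\tau_k)} C_i$, $D_k'=T_k'=D_k$ converts the arbitrary-deadline test of $\tau_k$ into a constrained-deadline test of $\tau_k'$ against only the tasks in $hp_1(\tau_k)$.

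First I would state the reduction explicitly: verifying Eq.~\eqref{eq:sufficient-test-arbitrary-deadline-pessimistic} is equivalent to verifying
\[
\exists\, t \in (0,D_k] \quad\text{with}\quad C_k' + \sum_{\tau_i \in hp_1(\tau_k)} \ceiling{\tfrac{t}{T_i}}C_i \leq t,
\]
which in turn is sufficient (and pessimistic) for busy-window length $\leq D_k$, and hence sufficient for schedulability. Next I would index the $k-1$ tasks of $hp_1(\tau_k)$ so that $\floor{D_k/T_i}T_i$ is non-decreasing in $i$, set $t_i = \floor{D_k/T_i}T_i$ for $i<k$ and $t_k=D_k$, and test Eq.~\eqref{eq:sufficient-test-arbitrary-deadline-pessimistic} only at these $k$ points. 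Because each $\tau_i \in hp_1(\tau_k)$ has $T_i < D_k$, we have $t_i \geq T_i > 0$, and the indexing gives $\ceiling{t_j/T_i} \leq t_i/T_i$ when $j \leq i$ and $\ceiling{t_j/T_i} \leq t_i/T_i + 1$ when $j > i$. Plugging these bounds in reproduces Eq.~\eqref{eq:k-point-sporadic}, which fits Definition~\ref{def:kpoints} with $\alpha_i = 1$ and $\beta_i = T_i/t_i \leq 1$.

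Finally, applying Lemma~\ref{lemma:framework-constrained} with $\alpha=\beta=1$ yields Eq.~\eqref{eq:schedulability-sporadic-any-a}, and applying Lemma~\ref{lemma:framework-totalU-constrained} with the same constants yields Eq.~\eqref{eq:schedulability-sporadic-any-b}. The only real point that requires a careful sentence is the very first step, namely justifying that inflating $\ceiling{t/T_k}C_k$ to $\ceiling{D_k/T_k}C_k$ at $t=0$ is a sound (though pessimistic) relaxation and that a busy window bounded by $D_k$ implies all jobs of $\tau_k$ in that window meet their deadline; everything after that is an immediate copy of the constrained-deadline proof. I expect no real technical obstacle, since the framework has already been shown to handle exactly the coefficient setting $(\alpha_i,\beta_i)=(1,T_i/t_i)$ produced here.
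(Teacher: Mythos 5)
Your proposal is correct and takes essentially the same route as the paper: the text preceding the corollary performs exactly the reduction you describe (the pessimistic release of $\ceiling{\frac{D_k}{T_k}}C_k$ at time $0$, the virtual task $\tau_k'$ absorbing $hp_2(\tau_k)$, the indexing by $\floor{\frac{D_k}{T_i}}T_i$ yielding Eq.~\eqref{eq:k-point-sporadic} with $\alpha_i = 1$ and $\beta_i = \frac{T_i}{t_i} \leq 1$), and the corollary then follows, as in Theorem~\ref{theorem:sporadic-general}, by applying Lemmas~\ref{lemma:framework-constrained} and~\ref{lemma:framework-totalU-constrained} with $\alpha=\beta=1$. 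Your careful sentence on the busy-window soundness (inflating to $\ceiling{\frac{D_k}{T_k}}C_k$ and noting that a busy window of length at most $D_k$ bounds the response time of every job of $\tau_k$ in it) is precisely the justification the paper gives via \cite{DBLP:conf/rtss/Lehoczky90}, so there is no gap.
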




\begin{corollary}
  \label{col:utilization-RM-arbitrary}
  Suppose that $f \cdot T_i \leq D_k$ for any higher task $\tau_i$ in the
  task system and $f\cdot T_k \leq D_k$, where $f$ is a positive integer.  Task $\tau_k$
  in a sporadic task system is schedulable by using RM, i.e., $T_i \leq T_{i+1}$, if
\begin{equation}
\label{eq:schedulability-arbitrary-rm-ratio-a}
(\frac{U_k}{f}+1) \prod_{j=1}^{k-1} (\frac{U_j}{f} + 1)\leq \frac{f+1}{f}
\end{equation}
or
\begin{equation}
  \label{eq:schedulability-arbitrary-rm-ratio-b}
\sum_{j=1}^{k}U_j \leq  f k\left(\left({\frac{f+1}{f}}\right)^{\frac{1}{k}}-1\right).
\end{equation}
\end{corollary}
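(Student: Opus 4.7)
The plan is to mirror the proof of Corollary~\ref{col:implicit-general-ratio} but applied to the arbitrary-deadline busy-window test of Sec.~\ref{sec:application-arbitrary}, with one crucial modification: rather than absorb the self-interference $\ceiling{D_k/T_k}C_k$ into a virtual execution time $C_k'$ (which would yield a term $C_k'/(fD_k)$ strictly larger than $U_k/f$), I will treat $\tau_k$ on the same footing as the other $k-1$ higher-priority tasks so that all $k$ utilizations enter the product symmetrically as $U_i/f+1$. The hypothesis $fT_k\leq D_k$ is precisely what makes this symmetric treatment possible.

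First, since RM orders $T_1\leq\cdots\leq T_k$ and $fT_k\leq D_k$ forces $T_i\leq T_k\leq D_k/f$ for every $i<k$, we have $fT_i\leq D_k$ and $hp_2(\tau_k)=\emptyset$. The test of Eq.~\eqref{eq:sufficient-test-arbitrary-deadline} becomes $\sum_{i=1}^{k}\ceiling{t/T_i}C_i\leq t$ for some $t\in(0,D_k]$, with the sum symmetric across all $k$ tasks. I then set $t_i=\floor{D_k/T_i}T_i$ for every $i=1,\dots,k$ (including $\tau_k$ itself) and relabel the $k$ tasks so that $t_1\leq t_2\leq\cdots\leq t_k$; this relabeling costs nothing since the test is symmetric in $\{\tau_1,\dots,\tau_k\}$.

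At each test point $t_j$, the bounds $\ceiling{t_j/T_i}\leq t_i/T_i+1$ for $i<j$ (because $t_j\leq D_k<t_i+T_i$) and $\ceiling{t_j/T_i}\leq t_i/T_i$ for $i\geq j$ (because $t_i$ is a multiple of $T_i$ and $t_j\leq t_i$), together with $C_i=T_iU_i\leq (1/f)\,t_iU_i$ (from $t_i\geq fT_i$), reduce the test to
\[
t_k U_k \;+\; \sum_{i=1}^{k-1} 1\cdot t_i U_i \;+\; \sum_{i=1}^{j-1}\tfrac{1}{f}\,t_i U_i \;\leq\; t_j,
\]
which is exactly a $k$-point effective schedulability test in the sense of Definition~\ref{def:kpoints} with $t_kU_k$ playing the role of $C_k$, and $\alpha_i=1$, $\beta_i=1/f$ for $i=1,\dots,k-1$. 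Applying Lemma~\ref{lemma:framework-constrained} with $\alpha=1$ and $\beta=1/f$ yields $U_k\leq (f+1)/\prod_{j=1}^{k-1}(U_j/f+1)-f$, which after rearrangement---and using that $\prod_{i=1}^{k}(U_i/f+1)$ is symmetric in $i$, so the relabeling is cosmetic---produces Eq.~\eqref{eq:schedulability-arbitrary-rm-ratio-a}. Applying Lemma~\ref{lemma:framework-totalU-constrained} with the same $\alpha,\beta$ (its third case applies since $(1+1/f)^{1/k}\geq 1=\alpha$) and noting $C_k/t_k+\sum_{i=1}^{k-1}U_i=\sum_{i=1}^{k}U_i$ yields Eq.~\eqref{eq:schedulability-arbitrary-rm-ratio-b}.

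The main obstacle I expect is justifying the symmetric placement: one must argue that $\tau_k$'s self-interference $\ceiling{t/T_k}C_k$ may legitimately occupy the same $\alpha_it_iU_i+\beta_it_iU_i$ slot as a higher-priority task, rather than inflating $C_k'$. This symmetric placement is what delivers the clean $U_k/f+1$ factor instead of the weaker $\ceiling{D_k/T_k}C_k/(fD_k)+1$ that a direct appeal to Corollary~\ref{col:implicit-general-ratio} (via the virtual task $\tau_k'$) would yield. The assumption $fT_k\leq D_k$ is precisely what guarantees $T_k/t_k\leq 1/f$, so $\tau_k$ obeys the same $\beta$-bound as the others; without it the symmetric reduction to Lemma~\ref{lemma:framework-constrained} would not go through.
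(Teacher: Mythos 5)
Your proof is correct, and the reduction checks out in every detail: $t_i=\floor{D_k/T_i}T_i\geq fT_i>0$ for all $k$ tasks under the hypotheses, the bounds $\ceiling{t_j/T_i}\leq t_i/T_i+1$ for $i<j$ (since $t_j\leq D_k<t_i+T_i$) and $\ceiling{t_j/T_i}\leq t_i/T_i$ for $i\geq j$ hold as stated, and Lemma~\ref{lemma:framework-constrained} and Lemma~\ref{lemma:framework-totalU-constrained} with $\alpha=1$, $\beta=1/f$ yield exactly Eq.~\eqref{eq:schedulability-arbitrary-rm-ratio-a} and Eq.~\eqref{eq:schedulability-arbitrary-rm-ratio-b} once the symmetry of the final expressions absorbs the relabeling. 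However, your route genuinely differs from the paper's in how the self-interference $\ceiling{t/T_k}C_k$ is handled. The paper keeps the asymmetric virtual-task formulation of Sec.~\ref{sec:application-arbitrary} and simply redefines $C_k'=f\cdot C_k$, observes $\frac{C_k'}{f\cdot t_k}\leq\frac{U_k}{f}$, and then invokes the two lemmas over the $k-1$ higher-priority tasks exactly as in Corollary~\ref{col:implicit-general-ratio}; implicitly this confines the testing points to $t\leq fT_k$, where $\ceiling{t/T_k}C_k\leq fC_k$ is a valid bound (over all of $(0,D_k]$ it is not, since $\ceiling{D_k/T_k}$ can exceed $f$ when $D_k>fT_k$). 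You instead keep the full window $(0,D_k]$, treat all $k$ tasks symmetrically with points $\floor{D_k/T_i}T_i$, and let whichever task has the largest point occupy the constant slot of Definition~\ref{def:kpoints} --- a different decomposition feeding the same key lemmas. What each buys: the paper's version is a three-line appeal to the earlier corollary, but it leaves the window restriction tacit; yours is longer but more self-contained and arguably more careful, since it needs no window restriction, your relabeling correctly handles the fact that $\floor{D_k/T_i}T_i$ is not monotone in $T_i$ (so the constant slot need not be the original $\tau_k$, and the framework's indifference to the meaning of its ``$C_k$'' term makes this legitimate), and your opening remark about why routing through Corollary~\ref{col:arbitrary-general}'s $C_k'=\ceiling{D_k/T_k}C_k$ would only give a weaker factor explains precisely why the paper redefines $C_k'$ rather than reusing it. One cosmetic slip: with $f=1$ a higher-priority task may have $T_i=D_k$, so $hp_2(\tau_k)$ need not be empty; this is immaterial because your argument never actually uses the $hp_1/hp_2$ split, only $0<t_i\leq D_k$ and $t_i\geq fT_i$.
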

\begin{proof}
  Based on the above assumption $f \cdot T_i \leq D_k$ for any higher task $\tau_i$ in the
  task system, $f\cdot T_k \leq D_k$, and the rate
  monotonic scheduling, we can safely set $\alpha_i$ to $1$, $\beta_i$
  to $\frac{1}{f}$, and $C_k'$ to $f \cdot C_k$. Note that $\frac{C_k'}{f\cdot t_k} \leq \frac{U_k}{f}$ in this case. Therefore, by
  adopting Lemma~\ref{lemma:framework-constrained} and
  Lemma~\ref{lemma:framework-totalU-constrained}, we reach the
  conclusion, as in the proof of Corollary~\ref{col:implicit-general-ratio}.
\end{proof}



\section{Application for Multiprocessor Scheduling}
\label{sec:multiprocessor}

It may seem, at first glance, that the \framework{} framework only
works for uniprocessor systems. We demonstrate in this section how to
use the framework in multiprocessor global RM scheduling when considering
implicit-deadline, DAG, and self-suspending task systems. The methodology can also be extended to
handle constrained-deadline systems.

In multiprocessor global scheduling, we consider that the system has
$M$ identical processors, each with the same
computation power. Moreover, there is a global queue and a global
scheduler to dispatch the jobs. We consider only global RM scheduling,
in which the priority of the tasks is defined based on RM. At any
time, the $M$-highest-priority jobs in the ready queue are dispatched
and executed on these $M$ processors. 

Global RM in general does not have good utilization bounds.
 However, if we constrain the total utilization
$\sum_{\tau_i} \frac{C_i}{M T_i} \leq \frac{1}{b}$
and the maximum utilization $\max_{\tau_i} \frac{C_i}{T_i} \leq
\frac{1}{b}$, it is possible to provide the schedulability guarantee
of global RM by setting $b$ to $3-\frac{1}{M}$ \cite{DBLP:conf/rtss/AnderssonBJ01,DBLP:conf/rtss/Baker03,DBLP:conf/opodis/BertognaCL05}. Such a factor $b$ has
been recently named as a \emph{capacity augmentation factor} \cite{Li:ECRTS14}.

We will use the following
time-demand function $W_i(t)$ for the simple sufficient schedulability analysis:
\begin{equation}
  \label{eq:W_i-multiprocessor}
W_i(t) = (\ceiling{\frac{t}{T_i}}-1)C_i + 2C_i,
\end{equation}
which can be imagined as if the \emph{carry-in} job is fully carried into the analysis interval \cite{DBLP:conf/rtss/GuanSYY09}.
That is, we allow the first release of task $\tau_i$ to be inflated by a
factor $2$, whereas the other jobs of task $\tau_i$ have the same
execution time $C_i$.
Again, we consider testing the schedulability of task $\tau_k$ under
global RM, in which there are $k-1$ higher-priority tasks $\tau_1,
\tau_2, \ldots, \tau_{k-1}$.  We have the following schedulability
condition for global RM.
\begin{lemma}
\label{lemma:gRM-sufficient}
Task $\tau_k$ is schedulable under
  global RM on $M$ identical processors, if
  \begin{equation}
    \label{eq:gRM-sufficient}
\exists t \mbox{ with } 0 < t \leq T_k {\;\; and \;\;} C_k +\sum_{i=1}^{k-1} 
\frac{W_i(t)}{M} \leq t, 
  \end{equation}
  where $W_i(t)$ is defined in Eq.~\eqref{eq:W_i-multiprocessor}.
\end{lemma}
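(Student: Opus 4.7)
The plan is to argue by contrapositive: assume that task $\tau_k$ misses its deadline under global RM on $M$ processors and show that for \emph{every} $t$ with $0 < t \leq T_k$ we have $C_k + \sum_{i=1}^{k-1} W_i(t)/M > t$, thereby contradicting the assumed existence of a witness $t$.

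First I would fix the first job $J_k$ of $\tau_k$ that misses its deadline; let $r$ be its release time, so its absolute deadline is $r + T_k$. For an arbitrary $t \in (0, T_k]$, I consider the window $[r, r+t]$ and let $x$ be the amount of execution $J_k$ receives inside this window. Since $J_k$ does not finish by $r + T_k$, it certainly has not finished by $r+t$, and thus $x < C_k$. Because global RM is work-conserving and $\tau_1,\ldots,\tau_{k-1}$ have strictly higher RM-priority than $\tau_k$, whenever $J_k$ is ready but not executing, all $M$ processors must be busy running higher-priority jobs. Therefore the total higher-priority work executed in $[r, r+t]$ is at least $M(t-x) > M(t - C_k)$.

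Next I would derive an upper bound on the same quantity using $W_i(t)$. The workload of $\tau_i$ contributing to $[r, r+t]$ is maximized by the classical carry-in alignment, in which a job of $\tau_i$ is released immediately before $r$ and carries its entire execution into the window, followed by subsequent jobs released as early as the period $T_i$ allows. Inflating the carry-in job to $2C_i$ (absorbing the at-most-$C_i$ residual together with one full job released at or near $r$) and counting the remaining $\lceil t/T_i \rceil - 1$ body releases of size $C_i$ yields precisely $W_i(t) = (\lceil t/T_i \rceil - 1)C_i + 2C_i$ as a safe upper bound. Summing over $i=1,\ldots,k-1$ and combining with the lower bound above gives $M(t - C_k) < \sum_{i=1}^{k-1} W_i(t)$, i.e., $t < C_k + \sum_{i=1}^{k-1} W_i(t)/M$, for every $t \in (0, T_k]$. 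This contradicts the hypothesis, completing the proof.

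The main obstacle is the rigorous justification that $W_i(t)$ upper-bounds the actual workload of $\tau_i$ executed inside any window of length $t$, uniformly over release patterns respecting the minimum inter-arrival time $T_i$. For this I would either invoke the standard carry-in workload analysis already used in the global scheduling literature (e.g., the argument underlying Guan et al.~\cite{DBLP:conf/rtss/GuanSYY09}, which the paper cites in the paragraph defining $W_i(t)$) or sketch a self-contained induction on the number of releases falling in the window. The rest of the argument is the routine processor-occupancy observation for global fixed-priority scheduling and does not interact with the \framework{} machinery.
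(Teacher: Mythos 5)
Your proof is correct and is essentially the same argument the paper relies on: the paper's own ``proof'' of this lemma is a one-line citation to Sec.~3.2 of Guan et al.~\cite{DBLP:conf/rtss/GuanSYY09}, and your contrapositive structure (processor-occupancy lower bound $M(t-x)$ on higher-priority work, carry-in workload upper bound $W_i(t)$ per task) is precisely a reconstruction of that cited argument --- indeed it is the same pattern the paper itself spells out for the sibling Lemmas~\ref{lemma:gRM-DAG-sufficient} and~\ref{lemma:gRM-suspension-sufficient} in Appendix~B. The only step worth making explicit is that bounding the carry-in of each $\tau_i$ by a single residual job of size at most $C_i$ uses the paper's standing assumption (Sec.~\ref{sec:flow}) that all higher-priority tasks are already verified schedulable, so each $\tau_i$ has at most one active job at the start of the window.
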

\begin{proof}
  This has been shown in Sec. 3.2 (The Basic Multiprocessor Case)
  in \cite{DBLP:conf/rtss/GuanSYY09}.
\end{proof}

\begin{theorem}
\label{thm:multiprocessor-GRM}
Task $\tau_k$ in a sporadic implicit-deadline task system is
schedulable by global RM on $M$ processors if
\begin{equation}
\label{eq:schedulability-GRM}
 (\frac{C_k}{T_k}+2)\prod_{j=1}^{k-1} (\frac{U_j}{M} + 1)\leq 3,
\end{equation}
or
\begin{equation}
\label{eq:schedulability-augmentation-GRM}
\sum_{j=1}^{k-1} \frac{U_j}{M}\leq \ln{\frac{3}{\frac{C_k}{T_k}+2}}.
\end{equation}
\end{theorem}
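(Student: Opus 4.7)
The plan is to reduce Eq.~\eqref{eq:gRM-sufficient} in Lemma~\ref{lemma:gRM-sufficient} to an instance of the $k$-point effective schedulability test in Definition~\ref{def:kpoints}, extract safe upper bounds $\alpha$ and $\beta$, and then apply Lemmas~\ref{lemma:framework-constrained} and \ref{lemma:framework-totalU-exclusive}. I mimic the recipe already used in the proof of Theorem~\ref{theorem:sporadic-general}: set $t_i = \floor{\frac{T_k}{T_i}}T_i$ for $i=1,\ldots,k-1$ and $t_k = T_k$, and reindex the $k-1$ higher-priority tasks so that the values $t_1 \le t_2 \le \cdots \le t_{k-1} \le t_k$ are non-decreasing. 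Because we are in RM and every $\tau_i \in hp(\tau_k)$ has $T_i \le T_k$, each $t_i \ge T_i > 0$, so the test is non-degenerate.

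Next I rewrite $W_i(t) = \ceiling{t/T_i}C_i + C_i$ from Eq.~\eqref{eq:W_i-multiprocessor} and evaluate it at $t = t_j$ using the same ceiling bound as in Theorem~\ref{theorem:sporadic-general}: $\ceiling{t_j/T_i} \le t_i/T_i$ when $j \le i$ and $\ceiling{t_j/T_i} \le t_i/T_i + 1$ when $j > i$. Combined with $C_i = T_i U_i \le t_i U_i$, this yields
\begin{equation*}
\frac{W_i(t_j)}{M} \le \frac{2\,t_i U_i}{M} + \mathbf{1}[j>i]\,\frac{t_i U_i}{M}.
\end{equation*}
Substituting into Eq.~\eqref{eq:gRM-sufficient} exhibits the $k$-point effective test in the form of Eq.~\eqref{eq:precodition-schedulability} with $\alpha_i = 2/M$ (present for every $i$) and $\beta_i = 1/M$ (extra term only when $i < j$). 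Thus the safe upper bounds are $\alpha = 2/M$ and $\beta = 1/M$, matching the entry for multiprocessor global RM in Table~\ref{tab:alpha-beta}.

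Finally I invoke the blackbox interface. With $\alpha/\beta = 2$, Lemma~\ref{lemma:framework-constrained} gives
\begin{equation*}
\frac{C_k}{T_k} \le \frac{3}{\prod_{j=1}^{k-1}(U_j/M + 1)} - 2,
\end{equation*}
which rearranges to Eq.~\eqref{eq:schedulability-GRM}; and Lemma~\ref{lemma:framework-totalU-exclusive} gives $\frac{1}{M}\sum_{j=1}^{k-1} U_j \le \ln\!\bigl(3 / (C_k/T_k + 2)\bigr)$, which is exactly Eq.~\eqref{eq:schedulability-augmentation-GRM}. The only technical step requiring care is the first one: verifying that the ceiling inequality is correctly paired with the chosen ordering of $t_i$ so that the $\mathbf{1}[j>i]$ split is valid, and checking that the inflation factor $2$ in $W_i$ is absorbed entirely into $\alpha_i$ (using $T_i \le t_i$) rather than leaking a constant term that would spoil the test's form. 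Everything else follows mechanically from the framework.
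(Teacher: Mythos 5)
Your proposal is correct and takes essentially the same route as the paper's own proof, which likewise sets $t_i = \floor{\frac{T_k}{T_i}}T_i$, reindexes so the $t_i$ are non-decreasing, observes that testing only these $k$ points in Lemma~\ref{lemma:gRM-sufficient} yields a $k$-point effective test with $\alpha_i \leq \frac{2}{M}$ and $\beta_i \leq \frac{1}{M}$, and then invokes Lemmas~\ref{lemma:framework-constrained} and~\ref{lemma:framework-totalU-exclusive}. Your only addition is to spell out what the paper leaves implicit: the case split $\ceiling{t_j/T_i} \leq t_i/T_i$ for $j \leq i$ versus $t_i/T_i + 1$ for $j > i$, and the absorption of the carry-in term via $C_i = T_iU_i \leq t_iU_i$, both of which check out.
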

\begin{proof}
  Let $t_i$ be $\floor{\frac{T_k}{T_i}}T_i$ for $i=1,2,\ldots,k$, and
  reindex the tasks such that $t_1 \leq t_2 \leq \ldots \leq t_k$.  By
  testing only these $k$ points in the schedulability test in
  \eqref{eq:gRM-sufficient} results in a $k$-point effective
  schedulability test with $\alpha_i \leq \frac{2}{M}$ and $\beta_i
  \leq \frac{1}{M}$. Therefore, we can adopt the \framework{}
  framework. By Lemma~\ref{lemma:framework-constrained} and
  Lemma~\ref{lemma:framework-totalU-exclusive}, we have concluded the
  proof.
\end{proof}

Note that Theorem~\ref{thm:multiprocessor-GRM}  is not superior to the known analysis for 
sporadic task systems \cite{DBLP:conf/rtss/AnderssonBJ01,DBLP:conf/rtss/Baker03,DBLP:conf/opodis/BertognaCL05},  as the schedulability condition in Lemma~\ref{lemma:gRM-sufficient} is too pessimistic. This is only used as the basis  to analyze a sporadic task set
with DAG tasks in Sec.~\ref{sec:global-DAG} and self-suspending tasks in Sec.~\ref{sec:global-SSS} when adopting global RM, in which we will demonstrate similar structures as used in Theorem~\ref{thm:multiprocessor-GRM}. We also demonstrate a tighter test in Appendix F\citetechreport{}
for improving the schedulability test of global RM for sporadic tasks.

\subsection{Global RM for DAG Task Systems}
\label{sec:global-DAG}

For multiprocessor scheduling, the DAG task model has been recently studied \cite{DBLP:conf/ecrts/BonifaciMSW13}. The utilization-based analysis can be found
 in \cite{Li:ECRTS14} and \cite{DBLP:conf/ecrts/BonifaciMSW13}.
Each task $\tau_i\in {\bf T}$ in a DAG task system is a parallel task.
Each task is characterized by its execution pattern, defined by a set
of directed acyclic graphs (DAGs). The execution time of a job of task
$\tau_i$ is one of the DAGs. Each node (subtask) in a DAG represents a
sequence of instructions (a thread) and each edge represents a
dependency between nodes.  A node (subtask) is \emph{ready} to be
executed when all its predecessors have been executed.  We will only
consider two parameters related to the execution pattern of task
$\tau_i$:
\begin{compactitem}
\item \emph{total execution time (or work)} $C_i$ of task $\tau_i$: This is
  the summation of the execution times of all the subtasks
  of task $\tau_i$ among all the DAGs of task $\tau_i$.  
\item \emph{critical-path length} $\Psi_i$ of task $\tau_i$: This is the length of
  the critical path among the given DAGs, in which each node is
  characterized by the execution time of the corresponding
  subtask of task $\tau_i$.  
\end{compactitem}
The analysis is based on the two given parameters $C_i$ and
$\Psi_i$. Therefore, we can also allow flexible DAG structures. That
is, jobs of a task may have different DAG structures, under the total
execution time constraint $C_i$ and the critical path length
constraint $\Psi_i$. Therefore, the model can also be applied for conditional sporadic DAG task systems \cite{DBLP:conf/ecrts/BaruahBM15}.
With the above definition, we have the following lemma, in which the proof is 
Appendix B\citetechreport{}.

\begin{lemma}
  \label{lemma:gRM-DAG-sufficient}
  Task $\tau_k$ in a sporadic DAG system with implicit deadlines is schedulable under global RM on $M$ identical
  processors, if
  \begin{equation}
    \label{eq:gRM-DAGsufficient}
\exists t \mbox{ with } 0 < t \leq T_k {\;\; and \;\;} \Psi_k+\frac{C_k-\Psi_k}{M}+
\sum_{i=1}^{k-1} 
\frac{W_i(t)}{M}\leq t,     
  \end{equation}
  where $W_i(t)$ is defined in Eq.~\eqref{eq:W_i-multiprocessor}.
 \end{lemma}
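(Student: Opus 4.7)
The plan is to combine Graham's classical span-plus-work argument for greedy DAG scheduling with the carry-in interference bound already used in Lemma~\ref{lemma:gRM-sufficient}, producing a response-time bound for $\tau_k$ of the form
\[
R_k \;\leq\; \Psi_k + \frac{C_k-\Psi_k}{M} + \sum_{i=1}^{k-1}\frac{W_i(R_k)}{M},
\]
and then deducing schedulability from the hypothesis. Concretely, I would first reduce to bounding the worst-case response time $R_k$ of an arbitrary job of $\tau_k$ released at time $0$, with all $k-1$ higher-priority tasks assumed schedulable and producing their worst-case interference pattern inside $[0,R_k]$.

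\textbf{Graham-style decomposition.} Fix the job of $\tau_k$ and partition the interval $[0,R_k]$ into two sets of instants: the set $\mathcal{A}$ of instants at which some subtask lying on a longest remaining path of $\tau_k$'s DAG is executing, and its complement $\mathcal{B}$ of instants at which no such critical-path subtask is executing. Writing $A=|\mathcal{A}|$ and $B=|\mathcal{B}|$ so that $A+B=R_k$, the sequential execution of the critical path gives $A\le \Psi_k$. The key structural observation is that during every instant in $\mathcal{B}$ all $M$ processors must be busy: greedy global RM would otherwise dispatch the next ready critical-path subtask to an idle processor, contradicting membership in $\mathcal{B}$. Thus, throughout $\mathcal{B}$, each processor is either executing a non-critical-path subtask of $\tau_k$ or executing a higher-priority task.

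\textbf{Accounting for processor-time.} Summing processor-time over $\mathcal{B}$, the non-critical-path work of $\tau_k$ (total $C_k-\Psi_k$) and the higher-priority work executed inside $[0,R_k]$ must together contribute at least $M\cdot B$ processor-seconds. Bounding the total higher-priority processor-time over $[0,R_k]$ by $\sum_{i=1}^{k-1} W_i(R_k)$, which is precisely the carry-in-inflated time-demand bound that underlies Lemma~\ref{lemma:gRM-sufficient} and is proved in Sec.~3.2 of~\cite{DBLP:conf/rtss/GuanSYY09}, we get $M\cdot B \le (C_k-\Psi_k) + \sum_{i=1}^{k-1} W_i(R_k)$. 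Combining with $A\le \Psi_k$ and $R_k = A+B$ yields the displayed response-time bound.

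\textbf{Finishing and main obstacle.} Since $W_i(\cdot)$ is non-decreasing, if some $t\in(0,T_k]$ satisfies the stated inequality, a standard monotone fixed-point argument (identical in spirit to the one used for Eq.~\eqref{eq:exact-test-constrained-deadline}) shows $R_k\le t\le T_k$, so $\tau_k$ meets its deadline. The main technical obstacle is rigorously justifying the ``all $M$ processors busy throughout $\mathcal{B}$'' claim in the presence of preemptions: one has to argue that even when higher-priority jobs repeatedly preempt $\tau_k$'s subtasks, work conservation with respect to the remaining ready subtasks of $\tau_k$ still forces full processor occupancy at every $\mathcal{B}$-instant, since otherwise the (ready) next critical-path subtask would be dispatched. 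This is exactly the ingredient that couples Graham's span bound to the carry-in interference bound in the form required by the lemma.
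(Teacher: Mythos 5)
Your proposal is, at its core, the paper's own argument: the paper also combines a Graham-style span bound with the carry-in workload bound $W_i(t)$ of Eq.~\eqref{eq:W_i-multiprocessor}, just packaged contrapositively. It supposes a job of $\tau_k$ misses its deadline at $a+D_k$, lets $X$ be the amount of time in $(a,a+D_k]$ during which at least one processor is not executing any job, observes $X\leq \Psi_k$ (the dual of your ``all $M$ processors busy throughout $\mathcal{B}$'' claim), bounds the higher-priority work in $(a,a+t]$ by $W_i(t)$ exactly as you do, and concludes that the inequality in Eq.~\eqref{eq:gRM-DAGsufficient} must fail for every $t\in(0,T_k]$.

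Two steps of your write-up need repair, one of them substantive. First, the finishing ``monotone fixed-point argument'' is not valid as stated: from the inequality recurrence $R_k\leq f(R_k)$ with $f(x)=\Psi_k+\frac{C_k-\Psi_k}{M}+\sum_{i=1}^{k-1}\frac{W_i(x)}{M}$, together with $f(t)\leq t$ for some $t\leq T_k$, one cannot conclude $R_k\leq t$ --- take $f(x)=x$, which satisfies both hypotheses for arbitrary $R_k$; the recurrence only certifies that $R_k$ lies below the graph of $f$ somewhere, not at the least such point, and $f$ is a staircase that can re-cross the diagonal above $t$. The correct finish, which your $\mathcal{A}/\mathcal{B}$ decomposition already delivers and which is precisely the paper's contrapositive, is to apply the accounting to the window $(0,t]$ for the witness $t$ itself: if the job were unfinished at time $t$, then $t=A+B$ with $A\leq\Psi_k$ and $MB\leq (C_k-A)+\sum_{i=1}^{k-1}W_i(t)$, which contradicts the assumed test at $t$; no fixed-point reasoning is needed. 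Second, two accounting details: the $\mathcal{B}$-work of $\tau_k$ should be bounded by $C_k-A$ rather than by $C_k-\Psi_k$ (the latter can be exceeded when $A<\Psi_k$), after which $R_k\leq A\left(1-\frac{1}{M}\right)+\frac{C_k+\sum_i W_i}{M}$ and $A\leq\Psi_k$ give the stated bound since $1-\frac{1}{M}\geq 0$; and when several longest remaining paths are tied, an instant at which only one of them advances does not shrink the span, so $\mathcal{A}$ should be defined via instants at which every longest remaining path advances --- equivalently, work with the not-all-busy instants as the paper does, where greediness guarantees every ready path head executes.
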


\begin{theorem}
\label{thm:multiprocessor-DAG}
Task $\tau_k$ in a sporadic DAG system with implicit deadlines is schedulable by global RM on
$M$ processors if
\begin{equation}
\label{eq:schedulability-DAG}
 (\frac{\Psi_k}{T_k}+2) \prod_{j=1}^{k} (\frac{U_j}{M} + 1)\leq 3
\end{equation}
or
\begin{equation}
\label{eq:schedulability-DAG-augmentation-GRM}
\sum_{j=1}^{k} \frac{U_j}{M}\leq \ln{\frac{3}{\frac{\Psi_k}{T_k}+2}}.
\end{equation}
\end{theorem}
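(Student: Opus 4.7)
The plan is to start from Lemma~\ref{lemma:gRM-DAG-sufficient}, which says task $\tau_k$ is schedulable if there exists $t\in(0,T_k]$ with $\Psi_k+\frac{C_k-\Psi_k}{M}+\sum_{i=1}^{k-1}\frac{W_i(t)}{M}\le t$. Because $C_k-\Psi_k\le C_k=U_k T_k$, I would first relax this to the (slightly more pessimistic) sufficient condition
\begin{equation*}
\Psi_k+\frac{U_k T_k}{M}+\sum_{i=1}^{k-1}\frac{W_i(t)}{M}\le t.
\end{equation*}
The gain from this step is that the term $\frac{U_k T_k}{M}$ now has exactly the shape of a higher-priority interference of the form $\frac{1}{M}\,t_k U_k$, so $\tau_k$ itself can be folded into the list of ``higher-priority contributors'' expected by Definition~\ref{def:kpoints}.

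Next I would set up the $k$-point test with $k$ (not $k-1$) interference terms: for the original higher-priority tasks $\tau_1,\ldots,\tau_{k-1}$, take $t_i=\lfloor T_k/T_i\rfloor T_i$ reindexed in non-decreasing order, and then append a $k$-th ``virtual'' higher-priority contribution representing $\tau_k$ itself with $t_k=T_k$. The test task plays the role of the original $C_k$ in Definition~\ref{def:kpoints}, but with execution demand $\Psi_k$ and interval length $T_k$. For the first $k-1$ tasks, the identical argument used in Theorem~\ref{thm:multiprocessor-GRM} — based on $W_i(t)=\lceil t/T_i\rceil C_i+C_i$ and $\lceil t_j/T_i\rceil\le t_i/T_i$ when $j\le i$ and $\le t_i/T_i+1$ when $j>i$, combined with $C_i\le t_i U_i$ — yields $\alpha_i\le 2/M$ and $\beta_i\le 1/M$. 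For the appended ``$\tau_k$ contributor'', its contribution is $\frac{U_k T_k}{M}=\frac{1}{M}\,t_k U_k$, independent of $j$; thus $\alpha_k=\frac{1}{M}$ suffices, and I would safely overestimate $\beta_k$ by $\frac{1}{M}$ as well so that uniform bounds $\alpha=2/M$ and $\beta=1/M$ cover all $k$ terms.

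With this $(k{+}1)$-point formulation in hand (i.e., $k$ higher-priority-style terms and the test task), the conditions fall out immediately from the framework lemmas. Applying Lemma~\ref{lemma:framework-constrained} with $\alpha=2/M$, $\beta=1/M$, and a product now running over all $k$ contributors gives
\begin{equation*}
\frac{\Psi_k}{T_k}\le\frac{3}{\prod_{j=1}^{k}(U_j/M+1)}-2,
\end{equation*}
which rearranges to Eq.~\eqref{eq:schedulability-DAG}. Applying Lemma~\ref{lemma:framework-totalU-exclusive} with the same constants produces Eq.~\eqref{eq:schedulability-DAG-augmentation-GRM} directly.

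The only real subtlety — the place I expect a careful reader to pause — is the legitimacy of treating $\tau_k$ as an additional ``higher-priority'' contributor in the framework, since Definition~\ref{def:kpoints} is stated for $k-1$ higher-priority tasks plus a test task. I would justify this by noting that (i) the bound $\tfrac{C_k-\Psi_k}{M}\le\tfrac{1}{M}t_k U_k$ exactly matches the template $\alpha_k t_k U_k$ with $t_k=T_k$, (ii) the ordering $t_1\le\cdots\le t_{k-1}\le t_k=T_k$ required by the framework is preserved, and (iii) the proof of Lemma~\ref{lemma:framework-constrained} only uses the algebraic structure of the inequality and the uniform bounds $\alpha,\beta$, so increasing the number of higher-priority-style terms from $k-1$ to $k$ simply changes the upper limit of the product/sum. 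Everything else is a routine substitution.
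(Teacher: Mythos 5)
Your proof is correct, and it reaches both conditions by a route that differs from the paper's in one substantive step: how the factor $(\frac{U_k}{M}+1)$ gets absorbed into the product. The paper applies Lemma~\ref{lemma:framework-constrained} exactly as stated, keeping only the $k-1$ higher-priority tasks as contributors and letting $\Psi_k + \frac{C_k-\Psi_k}{M}$ play the role of the test demand; this yields the intermediate condition $(\frac{\Psi_k + \frac{C_k-\Psi_k}{M}}{T_k}+2)\prod_{j=1}^{k-1}(\frac{U_j}{M}+1)\leq 3$ (Eq.~\eqref{eq:schedulability-DAG-weaker}), and Eq.~\eqref{eq:schedulability-DAG} is then shown to imply it via the algebraic bound $\frac{\Psi_k+\frac{C_k}{M}}{T_k}+2 \leq (\frac{\Psi_k}{T_k}+2)(\frac{U_k}{M}+1)$. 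You instead relax $C_k-\Psi_k\le C_k = U_kT_k$ \emph{before} invoking the framework and fold $\tau_k$ itself in as a $k$-th contributor with $\alpha_k=\beta_k=\frac{1}{M}$. This hinges on the observation---which you correctly identify as the crux and justify---that Lemma~\ref{lemma:framework-constrained} is purely algebraic and parametric in the number of contributors, so nothing breaks when the product and sum run to $k$ rather than $k-1$: the tied points $t_k=T_k$ (contributor) and $T_k$ (test interval) are admissible because the ordering constraint in Definition~\ref{def:kpoints} is non-strict, and overestimating $\beta_k$ by $\frac{1}{M}$ only adds pessimism to an already sufficient test. The trade-off: the paper's route stays strictly within the framework as stated and retains the slightly tighter intermediate condition Eq.~\eqref{eq:schedulability-DAG-weaker}, which your early relaxation of the $-\frac{\Psi_k}{M}$ term forfeits; in exchange, your formulation derives Eq.~\eqref{eq:schedulability-DAG-augmentation-GRM} as a literal instance of Lemma~\ref{lemma:framework-totalU-exclusive} over $k$ contributors, which is cleaner than the paper's appeal to ``the same procedure as in Lemma~\ref{lemma:framework-totalU-exclusive}.'' Both routes produce exactly the stated conditions, so the difference is one of decomposition, not of strength.
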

\begin{proof}
Based on Lemma~\ref{lemma:gRM-DAG-sufficient}, which is very similar
to Lemma~\ref{lemma:gRM-sufficient}, we can perform a
similar transformation as in Theorem~\ref{thm:multiprocessor-GRM}, in
which $\alpha_i \leq
  \frac{2}{M}$ and $\beta_i \leq \frac{1}{M}$. By adopting
  Lemma~\ref{lemma:framework-constrained}, we know that if 
\begin{equation}
\label{eq:schedulability-DAG-weaker}
 (\frac{\Psi_k + \frac{C_k-\Psi_k}{M}}{T_k}+2) \prod_{j=1}^{k-1} (\frac{U_j}{M} + 1)\leq 3,
\end{equation}
then task $\tau_k$ is schedulable. Due to the fact that $C_k-\Psi_k
\leq C_k$, we know that $(\frac{\Psi_k + \frac{C_k-\Psi_k}{M}}{T_k}+2)
\leq (\frac{\Psi_k + \frac{C_k}{M}}{T_k}+2) \leq
(\frac{\Psi_k}{T_k}+2) \cdot (\frac{U_k}{M} + 1)$. Therefore, if the
condition in Eq.~\eqref{eq:schedulability-DAG} holds, the condition in
Eq.~\eqref{eq:schedulability-DAG-weaker} also holds, which implies the
schedulability. With the result in Eq.~\eqref{eq:schedulability-DAG},
we can use the same procedure as in
Lemma~\ref{lemma:framework-totalU-exclusive} to obtain
Eq.~\eqref{eq:schedulability-DAG-augmentation-GRM}.
\end{proof}

\begin{corollary}
  The capacity augmentation factor of global RM for a sporadic DAG system with implicit deadlines is $3.62143$.
\end{corollary}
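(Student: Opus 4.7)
The plan is to instantiate Theorem~\ref{thm:multiprocessor-DAG}, specifically the exponential-form bound in Eq.~\eqref{eq:schedulability-DAG-augmentation-GRM}, under the hypotheses of capacity augmentation. Recall that capacity augmentation factor $b$ for global RM means that whenever a task set satisfies the standard necessary feasibility conditions at processor speed $1/b$, namely $\sum_j U_j \leq M/b$ and $\Psi_j/T_j \leq 1/b$ for every task $\tau_j$, the task set is schedulable by global RM on $M$ unit-speed processors. So I fix any task $\tau_k$ and assume $\sum_{j=1}^k U_j / M \leq 1/b$ together with $\Psi_k/T_k \leq 1/b$; it then suffices to establish that the corresponding condition of Theorem~\ref{thm:multiprocessor-DAG} holds.

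Next I would apply Eq.~\eqref{eq:schedulability-DAG-augmentation-GRM}, which guarantees schedulability of $\tau_k$ whenever $\sum_{j=1}^k U_j/M \leq \ln\bigl(3/(\Psi_k/T_k + 2)\bigr)$. The right-hand side is monotonically decreasing in $\Psi_k/T_k$, and the left-hand side is bounded above by $1/b$, so it suffices to choose $b$ such that $1/b \leq \ln\bigl(3/(1/b + 2)\bigr)$; equivalently $(2 + 1/b)\, e^{1/b} \leq 3$. Hence the smallest admissible $b$ is the one for which the inequality is tight.

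Finally I would solve the transcendental equation $g(x) := (2 + x)\,e^x = 3$ for $x = 1/b$. Since $g$ is strictly increasing on $[0,\infty)$ with $g(0) = 2 < 3$ and $g(1) = 3e > 3$, a unique positive root exists in $(0,1)$; numerically it is $x \approx 0.27614$, giving $b = 1/x \approx 3.62143$, as claimed. The one subtlety is verifying that the joint worst case is obtained by simultaneously setting $\Psi_k/T_k$ and $\sum_j U_j/M$ to their upper bounds $1/b$; this follows directly from the joint monotonicity of both sides of Eq.~\eqref{eq:schedulability-DAG-augmentation-GRM}, so apart from the numerical root-finding there is no real obstacle.
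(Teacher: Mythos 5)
Your proof is correct and follows essentially the same route as the paper: instantiate Eq.~\eqref{eq:schedulability-DAG-augmentation-GRM} under the capacity-augmentation hypotheses $\sum_{j} U_j/M \leq 1/b$ and $\Psi_k/T_k \leq 1/b$, and numerically solve the resulting fixed-point condition $x = \ln\bigl(3/(2+x)\bigr)$ (your equivalent form $(2+x)e^x = 3$) to obtain $b \approx 3.62143$. You in fact state the root more carefully than the paper's own proof, which says the equation ``holds when $x \approx 3.62143$'' where it means $b = 1/x \approx 3.62143$ with the actual root $x \approx 0.27614$.
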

\begin{proof}
  Suppose that $\sum_{\tau_i} \frac{C_i}{M T_i} \leq \frac{1}{b}$ and
  $\frac{\Psi_k}{T_k} \leq \max_{\tau_i} \frac{\Psi_i}{T_i} \leq
  \frac{1}{b}$.  Therefore, by
  Eq.~\eqref{eq:schedulability-DAG-augmentation-GRM}, we can guarantee
  the schedulability of task $\tau_k$ if $\frac{1}{b} \leq
  \ln{\frac{3}{2+\frac{1}{b}}}$. This is equivalent to solving $x =
  \ln{\frac{3}{2+x}}$, which holds when $x\approx
  3.62143$ by solving the equation numerically. Therefore,
  we reach the conclusion of the capacity augmentation factor
  $3.62143$.
\end{proof}

\subsection{Global RM for Self-Suspending Tasks}
\label{sec:global-SSS}

The self-suspending task model extends the sporadic task model by allowing tasks to suspend themselves.
An overview of work on scheduling self-suspending task systems can be
found in \cite{RTSS14a}.
\ifbool{techreport}{In \cite{RTSS14a}, a general interference-based analysis framework was developed that can be applied to derive sufficient utilization-based tests for self-suspending task systems on uniprocessors.}{}

Similar to sporadic tasks, a  self-suspending task releases jobs sporadically. Jobs alternate between computation and suspension phases. We assume that each job of $\tau_i$ executes for at most $C_i$ time units (across all of its execution phases) and suspends for at most $S_i$ time units (across all of its suspension phases). We assume that $C_i+S_i \leq T_i$ for any task $\tau_i \in \tau$; for otherwise deadlines would be missed. The self-suspending model is general: we place no restrictions on the number of phases per-job and how these phases interleave (a job can even begin or end with a suspension phase). Different jobs belong to the same task can also have different phase-interleaving patterns.
For many applications, such a general self-suspending model is needed due to the unpredictable nature of I/O operations.
We have the following lemma, in which the proof is in Appendix B\citetechreport{}.
\begin{lemma}
  \label{lemma:gRM-suspension-sufficient}
  Task $\tau_k$ in a  self-suspending system with implicit deadlines  is schedulable under global RM on $M$ identical
  processors, if
  \begin{equation}
    \label{eq:gRM-suspension-sufficient}
\exists t \mbox{ with } 0 < t \leq T_k {\;\; and \;\;} C_k+S_k +\sum_{i=1}^{k-1} 
\frac{W_i(t)}{M} \leq t, 
  \end{equation}
  where $W_i(t)$ is defined in Eq.~\eqref{eq:W_i-multiprocessor}.
\end{lemma}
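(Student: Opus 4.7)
The plan is to mirror the structure of Lemma~\ref{lemma:gRM-sufficient} (which is just cited to Guan et al.~\cite{DBLP:conf/rtss/GuanSYY09}) and Lemma~\ref{lemma:gRM-DAG-sufficient}, and to push the argument through by contrapositive. Assume that $\tau_k$ misses its deadline, and let $J_k$ be the first job of $\tau_k$ to do so, released at some time $r_k$ with absolute deadline $d_k = r_k + T_k$. I will show that for every $t$ with $0 < t \leq T_k$, the strict inequality $C_k + S_k + \sum_{i=1}^{k-1} W_i(t)/M > t$ holds, which is the negation of the lemma's hypothesis.

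The key step is to decompose the interval $[r_k, r_k + t]$ of length $t$ into three disjoint parts according to what $J_k$ is doing: (i) time during which $J_k$ is executing on some processor, (ii) time during which $J_k$ is self-suspended, and (iii) time during which $J_k$ is ready but not executing, i.e., all $M$ processors are busy running strictly higher-priority work. Because $J_k$ misses its deadline at $d_k$, it certainly has not finished by time $r_k + t$, so the amount of (i) is strictly less than $C_k$. The amount of (ii) is at most $S_k$, since $S_k$ is the total suspension budget across all suspension phases of a single job of $\tau_k$. For part (iii), during every such instant all $M$ processors are occupied by higher-priority jobs, so the total length of (iii) is at most $\frac{1}{M}$ times the total higher-priority workload that can be present in a window of length $t$ under a global scheduler.

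The standard carry-in accounting of Guan et al.~\cite{DBLP:conf/rtss/GuanSYY09} then gives that the higher-priority workload contributed by each $\tau_i$ with $i < k$ in any window of length $t$ is upper-bounded by $W_i(t) = (\lceil t/T_i \rceil - 1)C_i + 2C_i$ (the ``$+2C_i$'' inflates the first, carry-in job by a factor of two). Summing these bounds and dividing by $M$ gives that the length of part (iii) is at most $\sum_{i=1}^{k-1} W_i(t)/M$. Adding the three parts, which must sum to exactly $t$, yields $t < C_k + S_k + \sum_{i=1}^{k-1} W_i(t)/M$ for every $t \in (0, T_k]$, which is the required contrapositive.

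The one nontrivial step is justifying that (ii) is bounded by $S_k$ rather than some larger quantity that accounts for multiple jobs; this is immediate here because we are restricting attention to a single job $J_k$ inside $[r_k, d_k]$ of length exactly $T_k$, and the assumption $C_k + S_k \leq T_k$ ensures the job's own phases are well-defined inside this window. The reusable observation is that, compared with Lemma~\ref{lemma:gRM-sufficient}, the only change is to replace $C_k$ by $C_k + S_k$ on the left-hand side: the higher-priority-interference bookkeeping through $W_i(t)$ is entirely unaffected by the fact that $\tau_k$ suspends. This is also exactly the same structural modification that Lemma~\ref{lemma:gRM-DAG-sufficient} makes (where $C_k$ is replaced by $\Psi_k + (C_k-\Psi_k)/M$), so the same proof skeleton carries over with minimal change.
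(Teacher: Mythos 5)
Your proof is correct, but it takes a genuinely different route from the paper's. The paper also argues by contrapositive, yet through a \emph{task-set transformation}: it constructs a modified set $\tau'$ in which all suspensions of $\tau_k$ are converted into computation, so that $\tau_k$ becomes an ordinary sporadic task with execution time $C_k+S_k$; it then shows that unschedulability transfers from $\tau$ to $\tau'$ --- the key step being that in $(a, a+D_k]$ the processors can idle or run lower-priority work for at most $S_k$ time units, since whenever the missing job is ready and unfinished all $M$ processors are occupied by higher-priority jobs --- and finally applies the body/carry-in/carry-out workload accounting to $\tau'$ to obtain Eq.~\eqref{eq:gRM-suspension-sufficient}. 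You instead work directly on the original schedule, partitioning $[r_k, r_k+t]$ by the state of the deadline-missing job into execution time (strictly less than $C_k$), suspension time (at most $S_k$), and blocked time (at most $\sum_{i=1}^{k-1} W_i(t)/M$, because all $M$ processors run higher-priority work at such instants), and summing the three parts to $t$. Both arguments rest on exactly the same two quantitative ingredients --- the per-job suspension budget $S_k$ and the carry-in-inflated workload bound $W_i(t)$ of Eq.~\eqref{eq:W_i-multiprocessor} --- and the paper's transfer step implicitly contains your state-partition observation. What your route buys is a shorter, self-contained argument that avoids the somewhat delicate ``consider the same pattern for the other jobs'' reasoning the paper needs to argue that $\tau'$ is also unschedulable; what the paper's route buys is uniformity, reducing the suspension case to the already-established sporadic skeleton of Lemma~\ref{lemma:gRM-sufficient} and matching the suspension-as-computation technique of \cite{RTSS14a} on which the section builds. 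One point worth making explicit in your write-up: bounding each higher-priority $\tau_i$ by a \emph{single} carry-in job inside $W_i(t)$ (even though the $\tau_i$ may themselves suspend) uses the paper's standing assumption from Sec.~\ref{sec:flow} that the $k-1$ higher-priority tasks are already verified schedulable, so each has at most one unfinished job at $r_k$.
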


\begin{theorem}
\label{thm:multiprocessor-suspension}
Task $\tau_k$ in a sporadic self-suspending system with implicit deadlines  is schedulable by global RM on
$M$ processors if
\begin{equation}
\label{eq:schedulability-suspension}
(\frac{C_k+S_k}{T_k}+2) \prod_{j=1}^{k-1} (\frac{U_j}{M} + 1)\leq 3.
\end{equation}
\end{theorem}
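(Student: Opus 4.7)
The plan is to mirror the proof structure of Theorem~\ref{thm:multiprocessor-GRM} and Theorem~\ref{thm:multiprocessor-DAG}, using Lemma~\ref{lemma:gRM-suspension-sufficient} as the starting sufficient schedulability condition. Observe that Eq.~\eqref{eq:gRM-suspension-sufficient} has essentially the same structure as Eq.~\eqref{eq:gRM-sufficient}, the only difference being that the constant ``own workload'' term for task $\tau_k$ on the left-hand side is $C_k + S_k$ instead of $C_k$. Thus the self-suspension time $S_k$ can be absorbed into an inflated constant contribution of $\tau_k$, while the higher-priority interference $\sum_{i=1}^{k-1} W_i(t)/M$ is treated exactly as before.

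First I would set $t_i = \floor{T_k/T_i}T_i$ for $i=1,2,\ldots,k-1$ and $t_k = T_k$, and reindex the $k-1$ higher-priority tasks so that $t_1 \leq t_2 \leq \cdots \leq t_{k-1} \leq t_k$. Testing Eq.~\eqref{eq:gRM-suspension-sufficient} at these $k$ points, and using the bound
\[
W_i(t_j) = \bigl(\ceiling{t_j/T_i}-1\bigr)C_i + 2C_i \leq t_i U_i + 2 C_i \quad \text{if } j \leq i,
\]
and $W_i(t_j) \leq t_i U_i + 2 C_i + (t_j - t_i)U_i / \text{(similar bookkeeping)}$ for $j>i$, one recovers a $k$-point effective schedulability test of the form in Definition~\ref{def:kpoints} with $C_k$ replaced by $C_k + S_k$, coefficients $\alpha_i \leq 2/M$ and $\beta_i \leq 1/M$, exactly as in the proof of Theorem~\ref{thm:multiprocessor-GRM}.

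With the test in the required form, I would apply Lemma~\ref{lemma:framework-constrained} with $\alpha = 2/M$ and $\beta = 1/M$, so that $\alpha/\beta = 2$ and $(\alpha/\beta)+1 = 3$. Substituting $C_k + S_k$ for $C_k$ and $t_k = T_k$ into Eq.~\eqref{eq:schedulability-constrained}, schedulability of $\tau_k$ is guaranteed whenever
\[
\frac{C_k + S_k}{T_k} \leq \frac{3}{\prod_{j=1}^{k-1}(U_j/M + 1)} - 2,
\]
which rearranges directly into Eq.~\eqref{eq:schedulability-suspension}.

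I do not anticipate a serious obstacle here; the entire argument is a direct re-instantiation of the framework that was already used for the DAG case. The only subtle bookkeeping step is to verify cleanly that, at the $k$ specially chosen testing points, the carry-in-inflated interference $W_i(t)/M$ admits the decomposition $\alpha_i t_i U_i + \beta_i t_i U_i$ with $\alpha_i \leq 2/M$ and $\beta_i \leq 1/M$; this is the same check already implicit in Theorem~\ref{thm:multiprocessor-GRM} and so can be imported almost verbatim. Note that, unlike the DAG case, there is no analogue of the critical-path trick (since there is no parallel structure internal to $\tau_k$ to exploit), so no additional $(U_k/M + 1)$ factor appears and the product on the left-hand side ranges only over $j=1,\ldots,k-1$.
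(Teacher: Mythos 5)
Your proposal is correct and follows essentially the same route as the paper: the paper's proof of Theorem~\ref{thm:multiprocessor-suspension} likewise invokes Lemma~\ref{lemma:gRM-suspension-sufficient} and then performs the same transformation as in Theorem~\ref{thm:multiprocessor-GRM}, with $C_k+S_k$ as the constant term and $\alpha_i \leq \frac{2}{M}$, $\beta_i \leq \frac{1}{M}$, before applying Lemma~\ref{lemma:framework-constrained}. Your closing remark that no $(\frac{U_k}{M}+1)$ factor arises (unlike the DAG case) is also consistent with the paper.
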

\begin{proof}
By Lemma~\ref{lemma:gRM-suspension-sufficient}, we can perform a
similar transformation as in Theorem~\ref{thm:multiprocessor-GRM} with $\alpha_i \leq
  \frac{2}{M}$ and $\beta_i \leq \frac{1}{M}$.
\end{proof}







\vspace{-0.05in}
\section{Conclusion}
\vspace{-0.05in}

With the presented applications, we believe that the general schedulability analysis \framework{}  framework for fixed-priority scheduling has high potential to be adopted for analyzing other task models in real-time systems. We constrain ourselves by demonstrating the applications for simple scheduling policies, like global RM in multiprocessor scheduling. The framework can be used, once the $k$-point effective scheduling test can be constructed. Although the emphasis of this paper is not to show that the resulting tests for different task models by applying the \framework{} framework are better than existing work, some analysis results by applying the  \framework{}  framework have been shown superior to the state of the art. 
For completeness, another document has been prepared in
\cite{DBLP:journals/corr/framework-compare} to present the similarity,
the difference and the characteristics of \framework{} and
\frameworkkq{}. With our frameworks, some difficult
schedulability test and response time analysis problems may be solved
by building a good (or exact) exponential-time test and applying these
frameworks.

Appendix D\citetechreport{} provides some case studies with evaluation
results of some selected utilization-based schedulability
tests. Appendix E\citetechreport{} further provides some additional
properties that come directly from the \framework{} framework.  More
applications can be found in partitioned scheduling \cite{DBLP:journals/corr/Chen15k}, non-preemptive
scheduling \cite{DBLP:conf/ecrts/BruggenCH15}, etc.

\begin{spacing}{0.9}
\noindent{\small 
{\bf Acknowledgement}: This paper has been supported by DFG, as
    part of the Collaborative Research Center SFB876
    (http://sfb876.tu-dortmund.de/), and the priority program
    "Dependable Embedded Systems" (SPP 1500 -
    http://spp1500.itec.kit.edu). We would also like to thank Dr. Vincenzo
    Bonifaci for his valuable input to improve the presentation of the paper.
}
\end{spacing}

\footnotesize

\vspace{-0.05in}
\ifbool{techreport}{
\begin{spacing}{0.98}
}{
\begin{spacing}{0.85}
\vspace{-0.03in}
}
\def\IEEEbibitemsep{-0.3pt}
\bibliographystyle{abbrv}
\bibliography{ref,real-time}

\begin{thebibliography}{10}

\bibitem{DBLP:conf/rtss/AnderssonBJ01}
B.~Andersson, S.~K. Baruah, and J.~Jonsson.
\newblock Static-priority scheduling on multiprocessors.
\newblock In {\em Real-Time Systems Symposium {(RTSS)}}, pages 193--202, 2001.

\bibitem{DBLP:conf/rtss/Baker03}
T.~P. Baker.
\newblock Multiprocessor {EDF} and deadline monotonic schedulability analysis.
\newblock In {\em IEEE Real-Time Systems Symposium}, pages 120--129, 2003.

\bibitem{DBLP:conf/ecrts/BaruahBM15}
S.~Baruah, V.~Bonifaci, and A.~Marchetti{-}Spaccamela.
\newblock The global {EDF} scheduling of systems of conditional sporadic {DAG}
  tasks.
\newblock In {\em ECRTS}, pages 222--231, 2015.

\bibitem{DBLP:conf/rtss/BaruahMR90}
S.~K. Baruah, A.~K. Mok, and L.~E. Rosier.
\newblock Preemptively scheduling hard-real-time sporadic tasks on one
  processor.
\newblock In {\em IEEE Real-Time Systems Symposium}, pages 182--190, 1990.

\bibitem{DBLP:conf/opodis/BertognaCL05}
M.~Bertogna, M.~Cirinei, and G.~Lipari.
\newblock New schedulability tests for real-time task sets scheduled by
  deadline monotonic on multiprocessors.
\newblock In {\em Principles of Distributed Systems, 9th International
  Conference, {OPODIS}}, pages 306--321, 2005.

\bibitem{bini2005measuring}
E.~Bini and G.~C. Buttazzo.
\newblock Measuring the performance of schedulability tests.
\newblock {\em Real-Time Systems}, 30(1-2):129--154, 2005.

\bibitem{bini2003rate}
E.~Bini, G.~C. Buttazzo, and G.~M. Buttazzo.
\newblock Rate monotonic analysis: the hyperbolic bound.
\newblock {\em Computers, IEEE Transactions on}, 52(7):933--942, 2003.

\bibitem{bini2009response}
E.~Bini, T.~H.~C. Nguyen, P.~Richard, and S.~K. Baruah.
\newblock A response-time bound in fixed-priority scheduling with arbitrary
  deadlines.
\newblock {\em IEEE Transactions on Computers}, 58(2):279, 2009.

\bibitem{DBLP:conf/ecrts/BonifaciMSW13}
V.~Bonifaci, A.~Marchetti-Spaccamela, S.~Stiller, and A.~Wiese.
\newblock Feasibility analysis in the sporadic dag task model.
\newblock In {\em ECRTS}, pages 225--233, 2013.

\bibitem{DBLP:dblp_journals/tc/BurchardLOS95}
A.~Burchard, J.~Liebeherr, Y.~Oh, and S.~H. Son.
\newblock New strategies for assigning real-time tasks to multiprocessor
  systems.
\newblock pages 1429--1442, 1995.

\bibitem{DBLP:conf/rtss/ChakrabortyKT02}
S.~Chakraborty, S.~K{\"u}nzli, and L.~Thiele.
\newblock Approximate schedulability analysis.
\newblock In {\em IEEE Real-Time Systems Symposium}, pages 159--168, 2002.

\bibitem{DBLP:journals/corr/Chen15k}
J.-J. Chen.
\newblock Partitioned multiprocessor fixed-priority scheduling of sporadic
  real-time tasks.
\newblock {\em Computing Research Repository ({CoRR})}, abs/1505.04693, 2015.

\bibitem{Chen+Agrawal2014}
J.-J. Chen and K.~Agrawal.
\newblock Capacity augmentation bounds for parallel dag tasks under {G-EDF} and
  {G-RM}.
\newblock Technical Report 845, Faculty for Informatik at TU Dortmund, 2014.

\bibitem{DBLP:journals/corr/framework-compare}
J.-J. Chen, W.-H. Huang, and C.~Liu.
\newblock Evaluate and compare two utilization-based schedulability-test
  frameworks for real-time systems.
\newblock {\em Computing Research Repository ({CoRR})}, abs/1505.02155, 2015.

\bibitem{DBLP:journals/corr/abs-k2q}
J.-J. Chen, W.-H. Huang, and C.~Liu.
\newblock {$k^2Q$}: A quadratic-form response time and schedulability analysis
  framework for utilization-based analysis.
\newblock {\em Computing Research Repository ({CoRR})}, abs/1505.03883, 2015.

\bibitem{DavisSurvey2011}
R.~Davis and A.~Burns.
\newblock A survey of hard real-time scheduling for multiprocessor systems.
\newblock {\em Journal of ACM Computing Surveys}, 43(4)(35), 2011.

\bibitem{conf:/rtns09/Davis}
R.~Davis, T.~Rothvo\ss, S.~Baruah, and A.~Burns.
\newblock Quantifying the sub-optimality of uniprocessor fixed priority
  pre-emptive scheduling for sporadic tasksets with arbitrary deadlines.
\newblock In {\em Real-Time and Network Systems (RTNS)}, pages 23--31, 2009.

\bibitem{DBLP:journals/rts/DavisRBB09}
R.~I. Davis, T.~Rothvo{\ss}, S.~K. Baruah, and A.~Burns.
\newblock Exact quantification of the sub-optimality of uniprocessor fixed
  priority pre-emptive scheduling.
\newblock {\em Real-Time Systems}, 43(3):211--258, 2009.

\bibitem{davis2008efficient}
R.~I. Davis, A.~Zabos, and A.~Burns.
\newblock Efficient exact schedulability tests for fixed priority real-time
  systems.
\newblock {\em Computers, IEEE Transactions on}, 57(9):1261--1276, 2008.

\bibitem{DBLP:conf/ecrts/FisherB05}
N.~Fisher and S.~K. Baruah.
\newblock A fully polynomial-time approximation scheme for feasibility analysis
  in static-priority systems with arbitrary relative deadlines.
\newblock In {\em ECRTS}, pages 117--126, 2005.

\bibitem{DBLP:conf/rtss/GuanSYY09}
N.~Guan, M.~Stigge, W.~Yi, and G.~Yu.
\newblock New response time bounds for fixed priority multiprocessor
  scheduling.
\newblock In {\em {IEEE} Real-Time Systems Symposium}, pages 387--397, 2009.

\bibitem{HanTyan-RTSS97}
C.-C. Han and H.-Y. Tyan.
\newblock A better polynomial-time schedulability test for real-time
  fixed-priority scheduling algorithms.
\newblock In {\em Real-Time Systems Symposium (RTSS)}, pages 36--45, 1997.

\bibitem{kuo2003efficient}
T.-W. Kuo, L.-P. Chang, Y.-H. Liu, and K.-J. Lin.
\newblock Efficient online schedulability tests for real-time systems.
\newblock {\em Software Engineering, IEEE Transactions on}, 29(8):734--751,
  2003.

\bibitem{journals/tc/LeeSP04}
C.-G. Lee, L.~Sha, and A.~Peddi.
\newblock Enhanced utilization bounds for {QoS} management.
\newblock {\em IEEE Trans. Computers}, 53(2):187--200, 2004.

\bibitem{DBLP:conf/rtss/Lehoczky90}
J.~P. Lehoczky.
\newblock Fixed priority scheduling of periodic task sets with arbitrary
  deadlines.
\newblock In {\em RTSS}, pages 201--209, 1990.

\bibitem{DBLP:conf/rtss/LehoczkySD89}
J.~P. Lehoczky, L.~Sha, and Y.~Ding.
\newblock The rate monotonic scheduling algorithm: Exact characterization and
  average case behavior.
\newblock In {\em IEEE Real-Time Systems Symposium}, pages 166--171, 1989.

\bibitem{journals/pe/LeungW82}
J.~Y.-T. Leung and J.~Whitehead.
\newblock On the complexity of fixed-priority scheduling of periodic, real-time
  tasks.
\newblock {\em Perform. Eval.}, 2(4):237--250, 1982.

\bibitem{Li:ECRTS14}
J.~Li, J.~Chen, K.~Agrawal, C.~Lu, C.~Gill, and A.~Saifullah.
\newblock Analysis of federated and global scheduling for parallel real-time
  tasks.
\newblock In {\em Euromicro Conference on Real-Time Systems}, 2014.

\bibitem{suspension}
C.~Liu and J.~Anderson.
\newblock Task scheduling with self-suspensions in soft real-time
  multiprocessor systems.
\newblock In {\em Proceedings of the 30th Real-Time Systems Symposium}, pages
  425--436, 2009.

\bibitem{RTSS14a}
C.~Liu and J.-J. Chen.
\newblock Bursty-interference analysis techniques for analyzing complex
  real-time task models.
\newblock In {\em IEEE Real-Time Systems Symposium}, 2014.

\bibitem{liu1973scheduling}
C.~L. Liu and J.~W. Layland.
\newblock Scheduling algorithms for multiprogramming in a hard-real-time
  environment.
\newblock {\em Journal of the ACM (JACM)}, 20(1):46--61, 1973.

\bibitem{lu2007new}
W.-C. Lu, K.-J. Lin, H.-W. Wei, and W.-K. Shih.
\newblock New schedulability conditions for real-time multiframe tasks.
\newblock In {\em ECRTS}, pages 39--50. IEEE, 2007.

\bibitem{luenberger2008linear}
D.~G. Luenberger and Y.~Ye.
\newblock {\em Linear and nonlinear programming}, volume 116.
\newblock Springer, 2008.

\bibitem{mok1983fundamental}
A.~K. Mok.
\newblock Fundamental design problems of distributed systems for the
  hard-real-time environment.
\newblock 1983.

\bibitem{DBLP:dblp_journals/tse/MokC97}
A.~K. Mok and D.~Chen.
\newblock A multiframe model for real-time tasks.
\newblock {\em IEEE Trans. Software Eng.}, pages 635--645, 1997.

\bibitem{DBLP:conf/ecrts/BruggenCH15}
G.~von~der Bruggen, J.-J. Chen, and W.~Huang.
\newblock Schedulability and optimization analysis for non-preemptive static
  priority scheduling based on task utilization and blocking factors.
\newblock In {\em ECRTS}, pages 90--101, 2015.

\bibitem{DBLP:conf/rtas/WuLZ05}
J.~Wu, J.~Liu, and W.~Zhao.
\newblock On schedulability bounds of static priority schedulers.
\newblock In {\em Real-Time and Embedded Technology and Applications Symposium
  {(RTAS)}}, pages 529--540, 2005.

\end{thebibliography}
\end{spacing}

\normalsize

\vspace{-0.15in}
\section*{Appendix A: Monotonic Schedulability Test}

The tests presented in the theorems or corollaries do not guarantee to have the monotonicity with respect to the $k$-th highest-priority task.  However, by sacrificing the quality of the schedulability tests, we can still obtain monotonicity, with which the schedulability test of a task set can be done with linear-time complexity. These tests can be used for on-line admission control.  For example, the test in Theorem \ref{thm:multiprocessor-DAG} can be modified to the following theorem:

\begin{theorem}
\label{thm:gRM-DAG-fast}
An implicit-deadline DAG system $\tau$ is schedulable by global RM on
$M$ processors if
\begin{equation}
 (\Delta_{\max}+2) \prod_{\tau_i} (\frac{U_i}{M} + 1)\leq 3,
\end{equation}
where $\Delta_{\max}$ is $max_{\tau_i \in \tau} \frac{\Psi_i}{T_i}$.
\end{theorem}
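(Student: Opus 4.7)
The plan is to derive Theorem~\ref{thm:gRM-DAG-fast} as a direct monotonic strengthening of the per-task schedulability test in Theorem~\ref{thm:multiprocessor-DAG}. The key observation is that the condition in Theorem~\ref{thm:multiprocessor-DAG} has two task-specific quantities that one can safely dominate: $\frac{\Psi_k}{T_k}$ can be replaced by $\Delta_{\max}$, and the product $\prod_{j=1}^{k}(\frac{U_j}{M}+1)$ over the first $k$ priority levels can be replaced by the product $\prod_{\tau_i\in\tau}(\frac{U_i}{M}+1)$ over all tasks, since every factor $\frac{U_j}{M}+1 \geq 1$ only enlarges the product.

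First I would pick an arbitrary task $\tau_k\in\tau$, and use the hypothesis $(\Delta_{\max}+2)\prod_{\tau_i\in\tau}(\frac{U_i}{M}+1)\leq 3$ to write the chain
\begin{equation*}
\left(\tfrac{\Psi_k}{T_k}+2\right)\prod_{j=1}^{k}\left(\tfrac{U_j}{M}+1\right)
\leq (\Delta_{\max}+2)\prod_{\tau_i\in\tau}\left(\tfrac{U_i}{M}+1\right) \leq 3,
\end{equation*}
where the first inequality uses $\Psi_k/T_k \leq \Delta_{\max}$ by the definition of $\Delta_{\max}$, and the set $\{\tau_1,\dots,\tau_k\}$ is a subset of $\tau$ so the omitted factors are all $\geq 1$. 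Then by Theorem~\ref{thm:multiprocessor-DAG} task $\tau_k$ is schedulable under global RM on $M$ processors. Since $\tau_k$ was arbitrary, every task in $\tau$ meets its deadline, so $\tau$ is schedulable.

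There is no real obstacle here; the content of the theorem is essentially that the per-task bound of Theorem~\ref{thm:multiprocessor-DAG} can be slackened into a single task-set-wide bound at a (typically small) cost in tightness, and this slackening is exactly what yields monotonicity with respect to adding tasks: adjoining a new task $\tau_{k+1}$ with positive utilization only multiplies the left-hand side by the factor $\frac{U_{k+1}}{M}+1 \geq 1$ and can only increase $\Delta_{\max}$, so testing the condition for the augmented set remains a well-defined sufficient test and supports constant-time on-line admission once the running product and running maximum are maintained.
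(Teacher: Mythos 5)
Your proposal is correct and matches the paper's intent exactly: the paper states Theorem~\ref{thm:gRM-DAG-fast} in Appendix~A without an explicit proof, presenting it precisely as the monotonic weakening of Theorem~\ref{thm:multiprocessor-DAG} obtained by dominating $\frac{\Psi_k}{T_k}$ with $\Delta_{\max}$ and enlarging the product over all tasks using the fact that each factor $\frac{U_i}{M}+1 \geq 1$. The only cosmetic refinement would be to note that ``$\tau_k$ was arbitrary'' is formally an induction in priority order, since the per-task test presupposes schedulability of the higher-priority tasks, but the uniform condition makes that induction immediate.
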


\vspace{-0.1in}
\section*{Appendix B: Proofs}


\begin{appProof}{Lemma~\ref{lemma:framework-totalU-constrained}}
  This lemma is proved by sketch with Lagrange Multiplier to find the infimum $\frac{C_k}{t_k}+\sum_{i=1}^{k-1} U_i$ such that Eq.~\eqref{eq:schedulability-constrained} does not hold, which is a non-linear programming problem. Due to the fact that  $(1+\beta U_1)(1+\beta U_2) \leq (1+\beta\frac{U_1+U_2}{2})^2$ when $\beta \geq 0, U_1 \geq 0, U_2 \geq 0$,  the infimum 
$\frac{C_k}{t_k}+\sum_{i=1}^{k-1} U_i$ happens when $U_1=U_2=\cdots=U_{k-1}$. So, there are only two variables $\frac{C_k}{t_k}$ and $U_1$ to minimize $\frac{C_k}{t_k}+(k-1)U_1$ such that $(\frac{C_k}{t_k}+\frac{\alpha}{\beta})(\beta U_1+1)^{k-1} \geq \frac{\alpha}{\beta}+1$.

Let $\lambda$ be the Lagrange Multiplier and $F$ be $\frac{C_k}{t_k}+(k-1)U_1-\lambda\left((\frac{C_k}{t_k}+\frac{\alpha}{\beta})(\beta U_1+1)^{k-1} -( \frac{\alpha}{\beta}+1)\right)$.
The minimum $\frac{C_k}{t_k}+(k-1)U_1$ happens when
$\frac{\partial F}{\partial U_1}=(k-1)-\lambda\beta(k-1) (\frac{C_k}{t_k}+\frac{\alpha}{\beta}) (\beta U_1+1)^{k-2}=0$ and $\frac{\partial  F}{\partial \frac{C_k}{t_k}}=1-\lambda(\beta U_1+1)^{k-1}=0$.
When $k \geq 2$, by reorganizing the above two equations, we have
$1-\frac{\beta(\frac{\alpha}{\beta} + \frac{C_k}{t_k})}{\beta U_1+1} = 0$.
By the Lagrange Multiplier method, the minimum happens when $(\frac{C_k}{t_k}+\frac{\alpha}{\beta})(\beta U_1+1)^{k-1} = \frac{\alpha}{\beta}+1$ and 
$1-\frac{\beta(\frac{\alpha}{\beta} + \frac{C_k}{t_k})}{\beta U_1+1} = 0$. By solving the above equation, 
the non-linear programming is minimized when $U_1$ is $\frac{(\alpha+\beta)^{\frac{1}{k}}-1}{\beta}$ and $\frac{C_k}{t_k}$ is 
$\frac{(\alpha+\beta)^{\frac{1}{k}}-\alpha}{\beta}$.

We also need to consider the boundary cases when $(\alpha+\beta)^{\frac{1}{k}}-1 < 0$ or $(\alpha+\beta)^{\frac{1}{k}}-\alpha < 0$ with Karush Kuhn Tucker (KKT) conditions.
The Lagrange Multiplier method may result in a solution with a negative $U_1$ when $(\alpha+\beta)^{\frac{1}{k}}-1 < 0$. If this happens, we know that the extreme case happens when $U_1$ is $0$ by using KKT condition. Moreover, if $(\alpha+\beta)^{\frac{1}{k}}-\alpha < 0$, then we know that $\frac{C_k}{t_k}$ should be set to $0$ in the extreme case by using KKT condition.
By the above analysis, we reach the conclusion in Eq.~\eqref{eq:schedulability-totalU-constrained}.
 \end{appProof}

\begin{appProof}{Lemma~\ref{lemma:framework-totalU-exclusive}}
  This comes directly from Eq.~\eqref{eq:schedulability-constrained}
  in Lemma~\ref{lemma:framework-constrained} with a simpler Lagrange
  Multiplier procedure as in the proof of
  Lemma~\ref{lemma:framework-totalU-constrained}, in which the infimum
  total utilization under
 $
\prod_{j=1}^{k-1} (\beta U_j + 1) >  \frac{\frac{\alpha}{\beta}+1}{\frac{C_k}{t_k}+\frac{\alpha}{\beta}}
  $
  happens when all the $k-1$ tasks have the same
  utilization.   
\end{appProof}

\begin{appProof}{Lemma~\ref{lemma:framework-general}}
  The first part of the proof by constructing the corresponding linear programming to minimize $C_k^*$ as follows is the same as in the proof of Lemma~\ref{lemma:framework-constrained} by setting $t_k^*$ as $t_k +s$ with $s \geq 0$:
\begin{subequations}
    \label{eq:lp-framework-general-vary-alpha}
  \begin{align}
    \mbox{min } &  
    t_k^* - \sum_{i=1}^{k-1}(\alpha_i +\beta_i) U_it_i^*\\
    \mbox{s.t.} \;\;&
t_k^*- \sum_{i=j}^{k-1} \beta_i t_i^* U_i \geq t_j^*
      & \forall 1 \leq j \leq k - 1, \label{eq:lp-framework-general-vary-alpha-constraints}\\
& t_j^* \geq 0
      & \forall 1 \leq j \leq k - 1. \label{eq:lp-framework-general-vary-alpha-boundaryconstraints}\\
&     t_k^* \geq t_k.
  \end{align}    
  \end{subequations}
Similarly, a feasible extreme point solution can be represented by two sets ${\bf T}_1$ and ${\bf T}_2$ of the $k-1$ higher-priority tasks, in which $t_j^* = 0$ if $\tau_j$ is in ${\bf T}_1$ and $t_j^* >0 $ if task $\tau_j$ is in ${\bf T}_2$.

One specific extreme point solution is to have ${\bf T}_1 = \emptyset$. For such a case, we use the same steps from Eq.~\eqref{eq:periodrelation} to Eq.~\eqref{eq:3rdperiodrelation}:
\begin{equation}
\label{eq:3rdperiodrelation-general}
\dfrac{t_i^*}{t_k^*}= \prod_{j=i}^{k-1}\frac{t_j^*}{t_{j+1}^*} = \frac{1}{\prod_{j=i}^{k-1} (\beta_j U_j + 1)}.
\end{equation}
The resulting objective function of this extreme point solution for Eq.~\eqref{eq:lp-framework-general-vary-alpha} is $ t_k^*(1-\sum_{i=1}^{k-1}  \frac{U_i(\alpha_i
  +\beta_i)}{\prod_{j=i}^{k-1} (\beta_jU_j + 1)})$.
The above steps are identical to Step 1 and Step 2 in the proof of Lemma~\ref{lemma:framework-constrained}. 
We will show, similarly to Step 3 in the proof of Lemma~\ref{lemma:framework-constrained}, for the rest of the proof, that the above extreme point solution is either optimal for the objective function of Eq.~\eqref{eq:lp-framework-general-vary-alpha} or $ 1-\sum_{i=1}^{k-1}  \frac{U_i(\alpha_i
  +\beta_i)}{\prod_{j=i}^{k-1} (\beta_jU_j + 1)} \leq 0$.

 For a feasible extreme point solution with ${\bf T}_1\neq \emptyset$, we will convert it to the above extreme point solution with ${\bf T}_1 = \emptyset$ by steps, in which each step moves one task from ${\bf T}_1$ to ${\bf T}_2$ by decreasing the objective function in the linear programming in Eq.~\eqref{eq:lp-framework-general-vary-alpha}.  For the rest of the proof, we start from a feasible extreme point solution, specified by $S=<{\bf T}_1, {\bf T}_2>$. Suppose that $\tau_\ell$ is the \emph{first} task in this extreme point solution $S$ with $t_\ell^*$ set to $0$, i.e., $t_k^*- \sum_{i=j}^{k-1} \beta_i t_i^* U_i = t_j^*$ for $j=1,2,\ldots,\ell-1$.

Assume that $\omega > \ell$ is the index of the next task with $t_{\omega}^* > 0$ in the extreme point solution $S$, i.e., $t_{\ell}^* = t_{\ell+1}^* = \cdots= t_{\omega-1}^*=0$. If all the remaining tasks are with $t_i^*=0$ for $\ell \leq i \leq k-1$, then $\omega$ is set to $k$ and $t_{\omega}^*$ is $t_k^*$. 
If $\ell$ is $1$, we can easily set $t_1^*$ to $\frac{t_{\omega}^*}{1+\beta_1 U_1}$, which is $> 0$ and the objective function of the linear programming becomes smaller. We focus on the cases where $\ell > 1$.
We can use the same steps from Eq.~\eqref{eq:periodrelation} to Eq.~\eqref{eq:3rdperiodrelation}:
$t_{i+1}^*-t_i^* = \beta_i U_i$ for $i=1,2,\ldots,\ell-2$ and $t_{\omega}^*-t_{\ell-1}^* = \beta_{\ell-1}U_{\ell-1}$. Therefore, $\sum_{i=1}^{\ell-1} (\alpha_i + \beta_i)U_i t_i^* = t_{\omega}^*(\sum_{i=1}^{\ell-1}  \frac{U_i(\alpha_i
  +\beta_i)}{\prod_{j=i}^{\ell-1} (\beta_jU_j + 1)})$. There are two cases:

{\bf Case 1:} If $\sum_{i=1}^{\ell-1}  \frac{U_i(\alpha_i
  +\beta_i)}{\prod_{j=i}^{\ell-1} (\beta_jU_j + 1)} \geq 1$, then we can conclude
{\footnotesize \begin{align*}
  &\sum_{i=1}^{k-1}  \frac{U_i(\alpha_i
  +\beta_i)}{\prod_{j=i}^{k-1} (\beta_jU_j + 1)} \\
= \;\; &
  \sum_{i=1}^{\ell-1}  \frac{U_i(\alpha_i
  +\beta_i)}{(\prod_{j=i}^{\ell-1} (\beta_jU_j + 1)) (\prod_{j=\ell}^{k-1} (\beta_jU_j + 1))} + 
  (\sum_{i=\ell}^{k-1}  \frac{U_i(\alpha_i
  +\beta_i)}{\prod_{j=i}^{k-1} (\beta_jU_j + 1)})\\
\overset{1}{\geq}\;  &
  \frac{1}{\prod_{j=\ell}^{k-1} (\beta_jU_j + 1)} + 
  (\sum_{i=\ell}^{k-1}  \frac{U_i \beta_i}{\prod_{j=i}^{k-1} (\beta_jU_j + 1)})\\
= \;\;&   \frac{1}{\prod_{j=\ell}^{k-1} (\beta_jU_j + 1)} + (1 - 
\frac{1}{\prod_{j=\ell}^{k-1} (\beta_jU_j + 1)})\\
= \;\;& 1,
\end{align*}
}
where $\overset{1}{\geq}$ comes from the assumption $\sum_{i=1}^{\ell-1}  \frac{U_i(\alpha_i
  +\beta_i)}{\prod_{j=i}^{\ell-1} (\beta_jU_j + 1)} \geq 1$ and $\alpha_\ell > 0$.

{\bf Case 2:} If $\sum_{i=1}^{\ell-1}  \frac{U_i(\alpha_i
  +\beta_i)}{\prod_{j=i}^{\ell-1} (\beta_jU_j + 1)} <1$, then we can greedily set $t_{\ell}^* > 0$ (i.e., move task $\tau_\ell$ from ${\bf T}_1$ to ${\bf T}_2$). Such a change of $\tau_\ell$ from ${\bf T}_1$ to ${\bf T}_2$ has no impact on task $\tau_i$ with $i > \ell$, but has impact on all the tasks $\tau_i$ with $i \leq \ell$. 
That is, after changing, by 
 using the same steps from Eq.~\eqref{eq:periodrelation} to Eq.~\eqref{eq:3rdperiodrelation}, we have
$t_{i+1}^*-t_i^* = \beta_i U_i$ for $i=1,2,\ldots,\ell-1$ and $t_{\omega}^*-t_{\ell}^* = \beta_{\ell}U_{\ell}$.
The change of the objective function in Eq.~\eqref{eq:lp-framework-general-vary-alpha} after moving task $\tau_\ell$  from ${\bf T}_1$ to ${\bf T}_2$ is
{\footnotesize 
\begin{align*}
&t_{\omega}^*\left(-\frac{U_{\ell}(\alpha_\ell+\beta_{\ell}) +  \sum_{i=1}^{\ell-1}  \frac{U_i(\alpha_i
  +\beta_i)}{\prod_{j=i}^{\ell-1} (\beta_jU_j + 1)}}{1+\beta_{\ell}U_{\ell} } + \sum_{i=1}^{\ell-1}  \frac{U_i(\alpha_i
  +\beta_i)}{\prod_{j=i}^{\ell-1} (\beta_jU_j + 1)}\right)\\
&=\;\;\; -t_{\omega}^*\left(\frac{U_{\ell}(\alpha_\ell+\beta_{\ell}) -  \beta_\ell U_\ell\sum_{i=1}^{\ell-1}  \frac{U_i(\alpha_i
  +\beta_i)}{\prod_{j=i}^{\ell-1} (\beta_jU_j + 1)}}{1+\beta_{\ell}U_{\ell} } \right)\\
&\overset{2}{<}\;\;  -t_{\omega}^*\left(\frac{U_{\ell}(\alpha_\ell+\beta_{\ell}) -  \beta_\ell U_\ell}{1+\beta_{\ell}U_{\ell} } \right) < 0,
\end{align*}
}
where $\overset{2}{<}$  comes from the condition $\sum_{i=1}^{\ell-1}  \frac{U_i(\alpha_i
  +\beta_i)}{\prod_{j=i}^{\ell-1} (\beta_jU_j + 1)} <1$.

With the above two cases, either (1) we can repeatedly move one task
from ${\bf T}_1$ to ${\bf T}_2$ by changing the extreme point solution
$S$ to another extreme point solution $S'$ to improve the objective
function in Eq.~\eqref{eq:lp-framework-general-vary-alpha} or (2) $
1-\sum_{i=1}^{k-1} \frac{U_i(\alpha_i +\beta_i)}{\prod_{j=i}^{k-1}
  (\beta_jU_j + 1)} \leq 0$. Similar to the argument in the proof of Lemma~\ref{lemma:framework-constrained}, the minimum $C_k^*$ happens when $s$ is $0$ if $
1-\sum_{i=1}^{k-1} \frac{U_i(\alpha_i +\beta_i)}{\prod_{j=i}^{k-1}
  (\beta_jU_j + 1)} > 0$.
\end{appProof}

\begin{appProof}{Theorem~\ref{thm:speedup-DM}}
Clearly, if $\prod_{j=1}^{k-1} (U_j +
  1) \geq 2$, we can already conclude that $\sum_{j=1}^{k-1} U_j \geq
  \ln{2}$ by following the same analysis in \cite{liu1973scheduling,bini2003rate}, and the speed-up factor is $1/\ln{2} < 1.76322$ for such a case. We
  focus on the other case with $\prod_{j=1}^{k-1} (U_j +
  1) < 2$. 

To understand whether the task set is schedulable under any scheduling
policy, we only have to test the feasibility of preemptive EDF
schedule, as preemptive EDF is an optimal scheduling policy to meet
the deadlines in uniprocessor systems. Baruah et al. \cite{DBLP:conf/rtss/BaruahMR90} provide
a demand-bound function (dbf) test to verify such a case. That is,
the demand bound function $dbf_i(t)$ of task $\tau_i$ with interval length $t$ is
\[
dbf_i(t) = \max\left\{0,
  \floor{\frac{t-D_i}{T_i}}+1\right\}C_i.
\]
A system of independent, preemptable, sporadic tasks can be feasibly
scheduled (under EDF) on a processor if and only if
\[
\forall t \geq 0\;\; \sum_{i} dbf_i(t) \leq t.
\]
Therefore, if there exists $t$ such that $\frac{\sum_{i} dbf_i(t)}{s} > t$ or $\sum_{i} U_i > s$, then the task set is not schedulable by EDF on a uniprocessor platform with speed $s$.

  Recall that we can construct the corresponding $k$-point effective schedulability test defined in Definition~\ref{def:kpoints} with $\alpha_i=1$ and $\beta_i \leq 1$ as shown in the proof of Theorem~\ref{theorem:sporadic-general}. 
 Now, we take a look of the proof in Lemma~\ref{lemma:framework-constrained} again. The same proof can also be applied to show that the extreme point solution that leads to the solution in Eq.~\eqref{eq:alpha+beta} is also an optimal solution for the following linear programming when $\prod_{j=1}^{k-1} (U_j +
  1) < 2$:
  \begin{subequations}
\small  \begin{align*}
    \mbox{infimum \;\;} & C_k^* + \sum_{i=1}^{k-1}  t_i^* U_i\\
    \mbox{s.t.\;\;} &     C_k^* + \sum_{i=1}^{k-1}  t_i^* U_i + \sum_{i=1}^{j-1}  t_i^* U_i > t_j^* \geq 0,\;\forall j=1,2,\ldots, k.
       \end{align*}    
  \end{subequations}
  That is, the corresponding objective function of Eq.~\eqref{eq:lp-framework-constrained} is $t_k+ s-  \beta\sum_{i=1}^{k-1} U_it_i^*$, by setting $\alpha=1$ and $\beta=1$.
  Let $C_k^*$ be the optimal $C_k^*$ of the above linear programming when $\prod_{j=1}^{k-1} (U_j +
  1) < 2$. By the above argument, we know that $(1+\frac{C_k^*}{D_k})\prod_{j=1}^{k-1} (U_j +
  1) = 2$. Therefore, $C_k^* > 0$.  Moreover,
  \begin{equation}
    \label{eq:extreme-constrained-utilization}
    \prod_{j=1}^{k-1} (U_j + 1) = \frac{2}{1+\frac{C_k^*}{D_k}}.
  \end{equation}

  For the rest of the proof, let $C_k^*/D_k$ be $x$. 
 If task $\tau_k$ is not
  schedulable by DM (or does not pass Eq.~\eqref{eq:schedulability-sporadic-any-a}),  then,
  \begin{align*}
    &\frac{C_k'+
      \sum_{i=1}^{k-1} dbf_i(D_k)}{D_k} \geq \frac{C_k' +
      \sum_{i=1}^{k-1} t_i U_i}{D_k}  \\
> \;\;&  \frac{C_k^* + \sum_{i=1}^{k-1}
  t_i^* U_i}{t_k} 
= \frac{t_1^*}{t_k}\\
  \overset{1}{=}\;\; &\frac{1}{\prod_{j=1}^{k-1} (U_j + 1)}
  = \frac{1+\frac{C_k^*}{D_k}}{2} = \frac{1+x}{2}
  \end{align*}
  where $\overset{1}{=}$ comes from the relation $\frac{t_1^*}{t_k}$ in
  Eq.~(\ref{eq:3rdperiodrelation}) when $s$ is $0$.
  Moreover, with Lemma~\ref{lemma:framework-totalU-exclusive}, we have
  \begin{equation}
    \label{eq:DM-utilization}
    \sum_{i=1}^{k-1} U_i >\ln(\frac{2}{1+x}).
  \end{equation}
  Due to the fact that $\frac{1+x}{2}$ is an increasing function of
  $x$ and $\ln(\frac{2}{1+x})$ is a decreasing function of x, we know
  that $\inf_{0 \leq x < 1}\max\left\{\frac{1+x}{2}, \ln(\frac{2}{1+x})\right\}$ is the intersection  
of $\frac{1+x}{2}$ and $\ln(\frac{2}{1+x})$, which is
  $1/1.76322$. Therefore, 
  \begin{align*}
& \max\left\{ \frac{C_k'+ \sum_{i=1}^{k-1}
    dbf_i(D_k)}{D_k}, \sum_{i=1}^{k-1} U_i\right\}\\
> &   \max\left\{\frac{1+x}{2}, \ln(\frac{2}{1+x})\right\} \geq \frac{1}{1.76322}.
  \end{align*}
 As a result, the speed-up factor of the test in Eq.~\eqref{eq:schedulability-sporadic-any-a} for DM scheduling for constrained-deadline systems is $1.76322$.
\end{appProof}

\ifbool{techreport}{}{\end{document}}
\begin{appProof}{Lemma \ref{lemma:gRM-DAG-sufficient}}
  This is based on the simple observations in the previous results,
  e.g.,
  \cite{DBLP:conf/rtss/Baker03,DBLP:conf/rtss/GuanSYY09,Li:ECRTS14}.
  We prove by contrapositive. Suppose that a job of task $\tau_k$
  misses its deadline. Let the arrival time of this job be $a$ and the
  absolute deadline be $a+D_k$.  Let $X$ be the total amount of time
  in $(a, a+D_k]$, in which at least one processor is not executing
  any job. Due to the assumption that $\tau_k$ misses its deadline,
  the DAG structure of task $\tau_k$, and the global RM scheduling
  policy, we know that $X \leq \Psi_k$. The workload resulting from
  the higher-priority tasks in $(a, a+t]$ is at most $W_i(t)$, by
  greedily considering that the job of $\tau_i$ released before $a$ is
  completely not executed before $a$. This part is pessimistic enough
  to be independent upon the DAG structure. Therefore, we know that
  the unschedulability of task $\tau_k$ implies that
\begin{equation}
\forall t \mbox{ with } 0 < t \leq T_k {\;\; and \;\;} \Psi_k+\frac{C_k-\Psi_k}{M}+
\sum_{i=1}^{k-1} 
\frac{W_i(t)}{M}> t,     
  \end{equation}
  which concludes the proof.
\end{appProof}

\begin{appProof}{Lemma~\ref{lemma:gRM-suspension-sufficient}}
This lemma can be proved in a similar manner as shown in our previous work~\cite{RTSS14a}. We prove by contrapositive. Suppose that a job of task $\tau_k$, $\tau_{k,j}$, misses its deadline. Let the arrival time of this job be $a$ and the absolute deadline be $a+D_k$. 

We first construct a task set $\tau'$ from $\tau$, where the only difference between the two task sets is on $\tau_{k}$. In $\tau'$, we convert all suspensions of jobs released by $\tau_k$ into computation. That is, we treat $\tau_k$ as an ordinary sporadic task by factoring its suspension length into the worst-case execution time parameter. Thus, $\tau_k$ executes just like an ordinary sporadic task (without suspensions) in the corresponding schedule, with an execution time of $C_k+S_k$. Note that $\tau_k$'s computation (both the original computation and the computation converted from suspensions) will be preempted by higher-priority tasks. If $\tau_{k,j}$ in the original task set $\tau$ is not schedulable, then in the interval $(a, a+D_k]$, the system can idle or execute tasks with lower priority than $\tau_k$ by at most $S_k$ amount of time; otherwise, job $\tau_{k, j}$ has to suspend more than $S_k$ amount of time in this interval. In the setting of $\tau'$, we can consider the same pattern for the other jobs, but only convert the suspensions of task $\tau_k$ in $\tau$ to computation time. The additional $S_k$ amount of computation time of $\tau_{k, j}$ in $\tau'$ can only be granted when the processor is idle or executes tasks with lower priority than $\tau_k$, which is in total at most $S_k$ as explained above. Therefore, $\tau_k$ in $\tau'$ is also not schedulable under global RM.  

Within $(a, a+t_o] \in (a, a + D_k]$, the work done by any high-priority task $\tau_i$ ($i < k$) in the worst case can be divided into three parts: (\textit{i}) body jobs: jobs of $\tau_i$ with both release time and absolute deadline in $(a, a+t_o]$, (\textit{ii}) carry-in job: a job of $\tau_i$ with release time earlier than $a$ and absolute deadline in $(a, a+t_o]$, and (\textit{iii}) carry-out job: a job of $\tau_i$ with release time in $(a, a+t_o]$ and absolute deadline after $a+t_o$. Since the carry-in and the carry-out job can each contribute at most $C_i$ workload in $[a, a+t_o]$, a safe upper bound of the interference due to task $\tau_i$ in $(a, a+t_o]$ is obtained by assuming that the carry-in and carry-out jobs of $\tau_i$ both contribute $C_i$ each in $(a, a+t_o]$. Thus, the workload resulting from any higher-priority task $\tau_i$ in $(a, a+t_o]$ is at most $W_i(t_o)$ (defined in Eq.~\eqref{eq:W_i-multiprocessor}). Therefore, in order for $\tau_{k,j}$ in $\tau'$ to miss its deadline at $a+D_k$, we know that
  \begin{equation}
\forall t \mbox{ with } 0 < t \leq T_k {\;\; and \;\;} C_k+S_k +\sum_{i=1}^{k-1} 
\frac{W_i(t)}{M} > t, \nonumber
  \end{equation}
must hold, which concludes the proof.
\end{appProof}

\section*{Appendix C: Application for Multi-frame Tasks}

This section adopts the schedulability test framework in
Sec.~\ref{sec:framework} for multi-frame real-time tasks, proposed by Mok and Chen \cite{DBLP:dblp_journals/tse/MokC97}.
A multi-frame real-time task $\tau_i$ with $m_i$ frames is defined as a
sporadic task with period $T_i$ with an array
$C_{i,0}, C_{i,1}, \ldots, C_{i,m_i-1}$ of different execution
times. The execution time of the $j$-th job of task
$\tau_i$ is defined as $C_{i,(j\mod m_i)}$.

Mok and Chen~\cite{DBLP:dblp_journals/tse/MokC97} propose a utilization-based schedulability under \emph{rate monotonic} (RM) scheduling by generalizing the Liu \& Layland bound~\cite{liu1973scheduling} for the multiframe task.
Kuo et al. ~\cite{kuo2003efficient} present a more precise scheduability test by merging the tasks with harmonic periods before inspecting the Mok~\&~Chen bound.
The researches in \cite{journals/tc/LeeSP04,DBLP:conf/rtas/WuLZ05} also demonstrate how to apply their methods to handle the multi-frame task model.
Lu et al.~\cite{lu2007new} further consider the ratio between periods to improve the existing utilziation-based test.

For a multi-frame task, we define the utilization $U_i$ of task $\tau_i$ 
based on its peak utilization, i.e.,
$U_i=\frac{\max_{j=0}^{m_i-1}C_{i,j}}{T_i}$.
 Without loss of generality, we assume that each task has at least two frames,
i.e., $m_i \geq 2$. If a task has only one frame, we can artificially
create a corresponding multi-frame task with $2$ frames and with the
same execution time.
We will limit our attention in uniprocessor systems.

Let $\phi_i(\ell)$ be the maximum of the sum of the execution time of
any $\ell$ consecutive frames of task $\tau_i$. It is clear that
$\phi_i(1)$ is $\max_{j=0}^{m_i-1}C_{i,j}$ and $\phi_i(2)$ is
$\max_{j=0}^{m_i-1} (C_{i,j} + C_{i,((j+1) \mod m_i)})$. Therefore, we know
that $U_i$ is equal to $\frac{\phi_i(1)}{T_i}$. For brevity, we define
$\phi_i(0)$ as $0$.
It is not difficult to see that $\phi_i(\ell)$ is equal to
$\phi_i(\ell \mod m_i) +
\floor{\frac{\ell}{m_i}}\sum_{j=0}^{m_i-1}C_{i,j}$ when $\ell > m_i$,
where $\phi_i(0)$ is set to $0$ for notational brevity. Therefore, we
only need to build a table for the first $m_i$ entries to construct
$\phi_i(\ell)$.  Deriving $\phi_i(\ell)$ can be done in $O(m_i^2)$ for
$\ell=1,2,\ldots,m_i-1$.


Again, we consider testing the schedulability of task $\tau_k$ under
RM scheduling, in which there are $k-1$ higher-priority
multi-frame tasks $\tau_1, \tau_2, \ldots, \tau_{k-1}$.  We have the
following schedulability condition for RM.
\begin{lemma}
  \label{lemma:multi-frame-sufficient}
  Suppose that all the multi-frame tasks with higher priority than
  $\tau_k$, i.e., $\tau_1, \tau_2, \ldots, \tau_{k-1}$, are
  schedulable by RM. Multi-frame task $\tau_k$ is schedulable under RM
  on a uniprocessor, if
  \begin{equation}
    \label{eq:multiframe-sufficient}
\exists t \mbox{ with } 0 < t \leq T_k {\;\; and \;\;} \phi_k(1) +\sum_{i=1}^{k-1} 
\phi_i(\ceiling{\frac{t}{T_i}})\leq t.
  \end{equation}
\end{lemma}
\begin{proof}
  This comes from Theorem 5 and Lemma 6 by Mok and Chen in
  \cite{DBLP:dblp_journals/tse/MokC97}.
\end{proof}
We now present a more pessimistic schedulability test than
Eq.~\eqref{eq:multiframe-sufficient} to construct a $k$-point
effective schedulability test.  Let $\delta_i(j)$ be
$\phi_i(j)-\phi_i(j-1)$. That is, $\delta_i(j)$ is the additional
workload released from the $j$-th invocation of task $\tau_i$ in the
definition of $\phi_i()$. Moreover, let $\delta_i^{\min}(\ell)$ be
$\min_{j=1,2,\ldots,\ell} \delta_i(j)$, i.e., $\delta_i^{\min}(\ell)$
is the minimum $\delta_i(j)$ among the first $\ell$ release of task
$\tau_i$. 

We further define $\phi_i'(\ell)$ as follows:
\begin{equation}
\label{eq:phi-upper}
  \phi_i'(\ell) = 
   \phi_i(\ell+1)-\delta_i^{min}(\ell+1)
\end{equation}
The definition of $\phi_i'(\ell)$ comes from the operation by swapping
the increased workload of the $\ell+1$-th release of task $\tau_i$
with the workload $\delta_i^{min}(\ell+1)$.

Again, let $t_i$ be $\floor{\frac{T_k}{T_i}}T_i$ for $i=1,2,\ldots,k$,
and reindex the tasks such that $t_1 \leq t_2 \leq \ldots \leq t_k$.
Instead of testing all the $t$ values in
Eq.~\eqref{eq:multiframe-sufficient} by referring to $\phi_i()$, we
only apply the test for these $k$ different $t_i$ values by referring
to $\phi_i'()$ as shown in the following lemma.

\begin{lemma}
  \label{lemma:multi-frame-k-point-sufficient}
Multi-frame task $\tau_k$ is schedulable under RM
  on a uniprocessor, if there exists $t_j$ such that
\begin{equation}
  \label{eq:multiframe-general-sufficient-k-points-2}
  \phi_k'(1)
  + \sum_{i=1}^{k-1} \phi_i'\left(\frac{t_i}{T_i}\right) + \sum_{i=1}^{j-1}
  \delta_i^{min}\left(\frac{t_i}{T_i}+1\right)\leq t_j,
\end{equation}
where $t_i$ is $\floor{\frac{T_k}{T_i}}T_i$ for $i=1,2,\ldots,k$ and
$t_i \leq t_{i+1}$.
\end{lemma}
\begin{proof}
   This property is due to the fact that $\phi_i'(\ell) \geq
   \phi_i(\ell)$ for any $0 \leq \ell \leq \ceiling{\frac{T_k}{T_i}}$
   according to the definition of $\phi_i'()$.
\end{proof}

\begin{theorem}
\label{thm:multiframe}
Let $f_i$ be
$\frac{\phi_i(1) \cdot \ell_i}{\delta_i^{\min}(\ell_i+1)}$, where
$\ell_i$ is defined as $\frac{t_i}{T_i}=\floor{{\frac{T_k}{T_i}}}$ for notational brevity.
Task $\tau_k$ is schedulable under RM if 
\begin{equation}
\label{eq:schedulability-multiframe-rm-ratio-a}
(\frac{U_k}{f}+1) \prod_{j=1}^{k-1} (\frac{U_j}{f} + 1)\leq \frac{f+1}{f}
\end{equation}
or
\begin{equation}
  \label{eq:schedulability-multiframe-rm-ratio-b}
\sum_{j=1}^{k}U_j \leq  f k\left(\left({\frac{f+1}{f}}\right)^{\frac{1}{k}}-1\right),
\end{equation}
where $f$ is $\min_{i=1}^{k-1}
f_i$. Moreover, $f_i \geq \frac{\phi_i(1)}{\phi_i(2)-\phi_i(1)}$.
\end{theorem}
\begin{proof}
By Lemma~\ref{lemma:multi-frame-k-point-sufficient},
task $\tau_k$ is
schedulable by RM if there exists $t_j$ such that
\begin{equation}
  \phi_k(1)
  + \sum_{i=1}^{k-1} \alpha_i t_i U_i + \sum_{i=1}^{j-1} \beta_i t_i U_i \leq t_j,
\end{equation}
where, for a higher-priority task $\tau_i$, we know $\alpha_i =
\frac{\phi_i'(\frac{t_i}{T_i})}{\phi_i(1)\cdot \ell_i} \leq
1$ and $\beta_i =
\frac{\delta_i^{\min}(\ceiling{\frac{T_k}{T_i}})}{\phi_i(1)\cdot
  \frac{t_i}{T_i}} =
\frac{\delta_i^{\min}(\ell_i+1)}{\phi_i(1)\cdot \ell_i} = \frac{1}{f_i}$.  Then, suppose that
$\beta_i \leq \frac{1}{f} = \beta$. The rest of the proof is the same
as in the proof of Corollary \ref{col:implicit-general-ratio} to reach
the schedulability conditions in
Eqs.~\eqref{eq:schedulability-multiframe-rm-ratio-a} and
\eqref{eq:schedulability-multiframe-rm-ratio-b}.

Due to RM scheduling policy, we know that $\ell_i \geq 1$.
Based on the definition of function $\delta_i^{\min}(\ell)$, we know
that $\delta_i^{\min}(\ell)$ is a non-increasing function with respect
to $\ell$. Therefore, since $\ell_i \geq 1$, we know
that $\delta_i^{\min}(\ell_i+1) \leq
\delta_i^{\min}(2) = \phi_i(2)-\phi_i(1)$. With $\ell_i \geq 1$ and 
$\delta_i^{\min}(\ell_i+1) \leq  \phi_i(2)-\phi_i(1)$, we know that
$f_i
\geq \frac{\phi_i(1)}{\phi_i(2)-\phi_i(1)}$.
\end{proof}    

We can also have the following monotonic schedulability test:
\begin{theorem}
\label{thm:multiframe-fast}
An implicit-deadline multiframe system $\tau$ is schedulable under RM if
 \begin{equation}
\prod_{\tau_i \in \tau} (\frac{U_i}{f} + 1)\leq (1+\frac{1}{f})
      \end{equation}
where $f = min_{\tau_i \in \tau} \frac{\phi_i(1)}{\phi_i(2)-\phi_i(1)}$.
\end{theorem}

The hyperbolic bound test in
Eq.~\eqref{eq:schedulability-multiframe-rm-ratio-a} is the first one
for multi-frame tasks.  The result of the utilization bound in
Eq.~\eqref{eq:schedulability-multiframe-rm-ratio-b} is the same as the
result by Mok and Chen \cite{DBLP:dblp_journals/tse/MokC97} when $f$
is set to $\min_{i=1}^{k-1}
\frac{\phi_i(1)}{\phi_i(2)-\phi_i(1)}$.

\section*{Appendix D: Experiments}

This section presents evaluation results by measuring the \emph{success ratio} of the proposed tests with respect to a given goal of task set utilization. We will present the evaluations of utilization-based tests derived from our \framework{} framework and the existing tests for multiframe systems, DAG systems, and implicit-deadline systems. For each specified total utilization configuration, we generated 100 task sets.  The success ratio of a configuration is the number of task sets that are schedulable under RM (or global RM in DAG systems) divided by the number of task sets for this configuration, i.e., 100.

We first generated a set of sporadic tasks, and then the corresponding tasks were converted from this set according to different task models, e.g., multiframe and DAG tasks. 
The UUniFast method~\cite{bini2005measuring} was adopted to generate a set of utilization values with the given goal.
We here used the approach suggested by Davis and Burns~\cite{davis2008efficient} to generate the task periods according to an exponential distribution. 
The order of magnitude $p$ to control the period values between largest and smallest periods  is parameterized in evaluations. (E.g., $1-10ms$ for $p=1$, $1-100ms$ for $p=2$, etc.). 
The worst-case execution time was set accordingly, i.e., $C_{i,0}=T_iU_i$ for multiframe systems and $C_i=T_iU_i$ for DAG and uniprocessor implicit-deadline systems. Note that all the task systems are with implicit deadlines in our tests.

\subsection*{Evaluations for Multiframe}

The multiframe tasks were then converted from the sporadic tasks as follows:
The frame was generated in a similar manner to the method in~\cite{lu2007new}.
The size of frame types $m_i$ was randomly drawn from the the interval $[2,20]$.
For each frame we randomly chose a scaling factor $r_{i,j}$ in the range $(2,5)$ to assign its execution time based on that of the first frame, i.e. $C_{i,j}=C_{i,0}/r_{i,j}$.
The cardinality of the task set was 10.

\begin{figure}[t]
 \centering
    \includegraphics[width=\columnwidth]{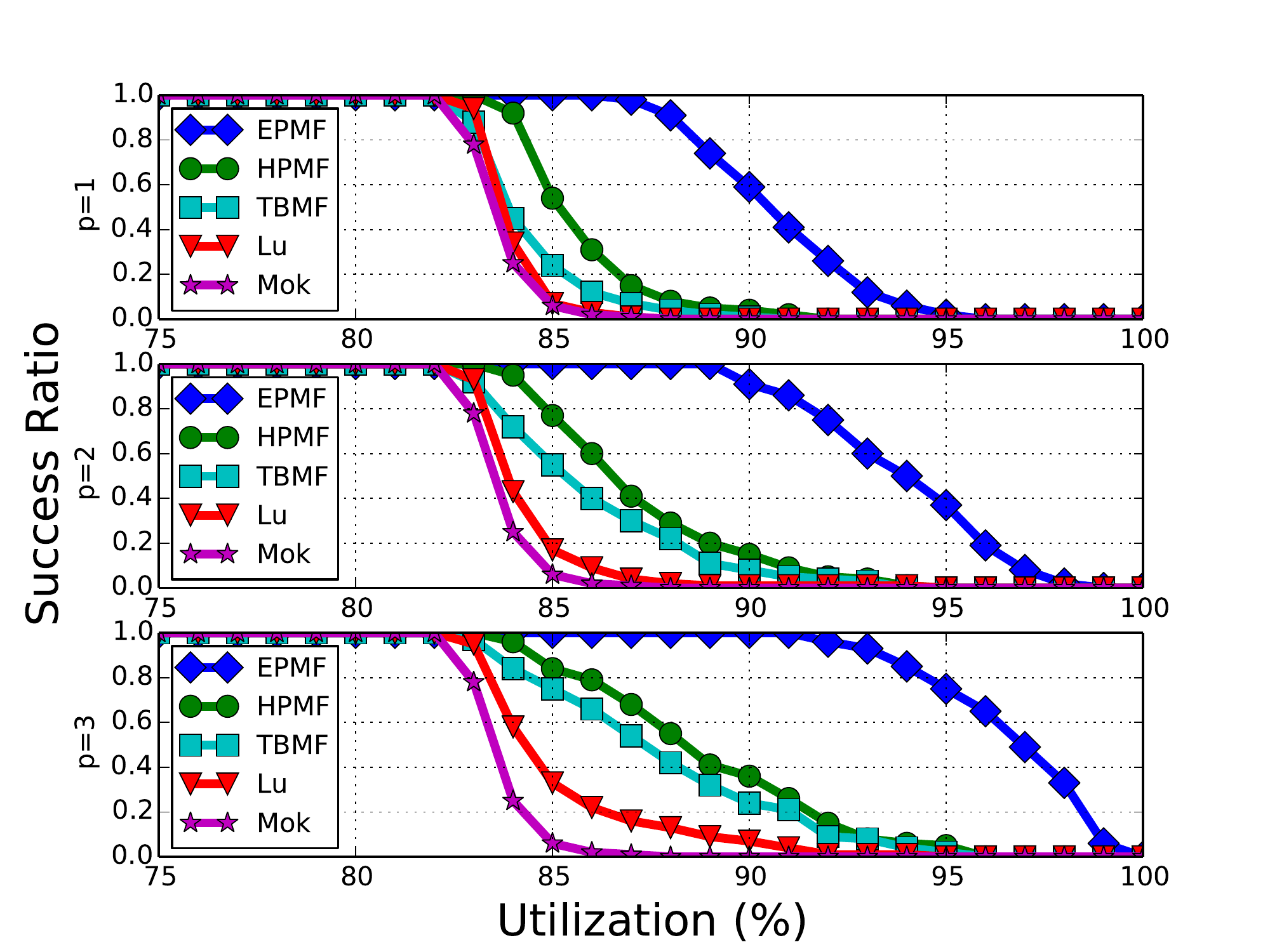}
    \caption{Success ratio comparison in multiframe systems}
    \label{fig:proportion-exp}
\end{figure}

In this experiment, the proposed tests (the first three) and the existing tests are listed as follows:
\vspace{-2mm}
 \begin{itemize}
  \item  Extreme Points Multiframe test (EPMF): by using Lemma~\ref{lemma:framework-general} in Theorem~\ref{thm:multiframe}.
   \item Hyperbolic Bound MultiFrame (HPMF): Eq.~\eqref{eq:schedulability-multiframe-rm-ratio-a}.
   \item Total utilization Bound MultiFrame (TBMF):  Eq.~\eqref{eq:schedulability-multiframe-rm-ratio-b}.
   \item Mok: Theorem 7 by Mok and Chen in \cite{DBLP:dblp_journals/tse/MokC97}.
  \item Lu: Theorem 3 by Lu et al. in \cite{lu2007new}. 
 \end{itemize}
\vspace{-2mm}

Figure~\ref{fig:proportion-exp} presents the result for the performance in terms of the success ratio.
For all tests, the success ratio are slightly better when the order of magnitude is greater. Our proposed tests are superior than the others for all different settings of $p$.

Note that the experiment conducted in~\cite{lu2007new} applies the technique of task merging proposed in~\cite{kuo2003efficient} as a preprocess and then tests the utilization bound.  Apparently, the former can be also used in our proposed tests.  However, we do not adopt this prepocess in our evaluations but focus on the effectiveness of utilization bounds themselves instead. The conclusion remains the same after adopting the preprocess on both sides.


\subsection*{Evaluations for DAG Task Systems}

Similarly, the DAG tasks were converted from the sporadic tasks as follows:
 The \emph{critical-path length} $\Psi_i$ of task $\tau_i$ was set by multiplying its WCET by uniform random 
values in the range $[0.75, 1]$. The following tests for global RM are evaluated:
\vspace{-2mm}
 \begin{itemize}
\item  Extreme Points DAG test (EPDAG): by using the following testing
$\frac{\Psi_k + \frac{C_k-\Psi_k}{M}}{T_k}  \leq 1 -  \sum_{i=1}^{k-1}  \frac{U_i(\alpha_i
  +\beta_i)}{\prod_{j=i}^{k-1} (\beta_jU_j + 1)}$ derived from Lemma~\ref{lemma:framework-general},
where $\alpha_i$ is defined as 
$\frac{2}{M}$
and $\beta_i$ as $\frac{1}{M}/\floor{\frac{T_k}{T_i}}$.
\item Hyperbolic Bound DAG (HPDAG): in Theorem~\ref{thm:multiprocessor-DAG}.
\item Chen and Agrawal Bound (CAB): in Corollary 4 in \cite{Chen+Agrawal2014}. CAB has the best known capacity augmentation bound for DAG systems under global RM.\footnote{The report in \cite{Chen+Agrawal2014} has more comprehensive bounds than its conference version in \cite{Li:ECRTS14}.}
 \end{itemize}
\vspace{-2mm}
The cardinality of the task set was 50.

Figure~\ref{fig:dag-exp} depicts results with different numbers of processors, i.e., $M=2,4,8$.
For these algorithms, the success ratios are better when the number of processors is less.
Apparently, our proposed tests are better than CAB, especially when the number of processors is large.
\begin{figure}[t]
 \centering
    \includegraphics[width=\columnwidth]{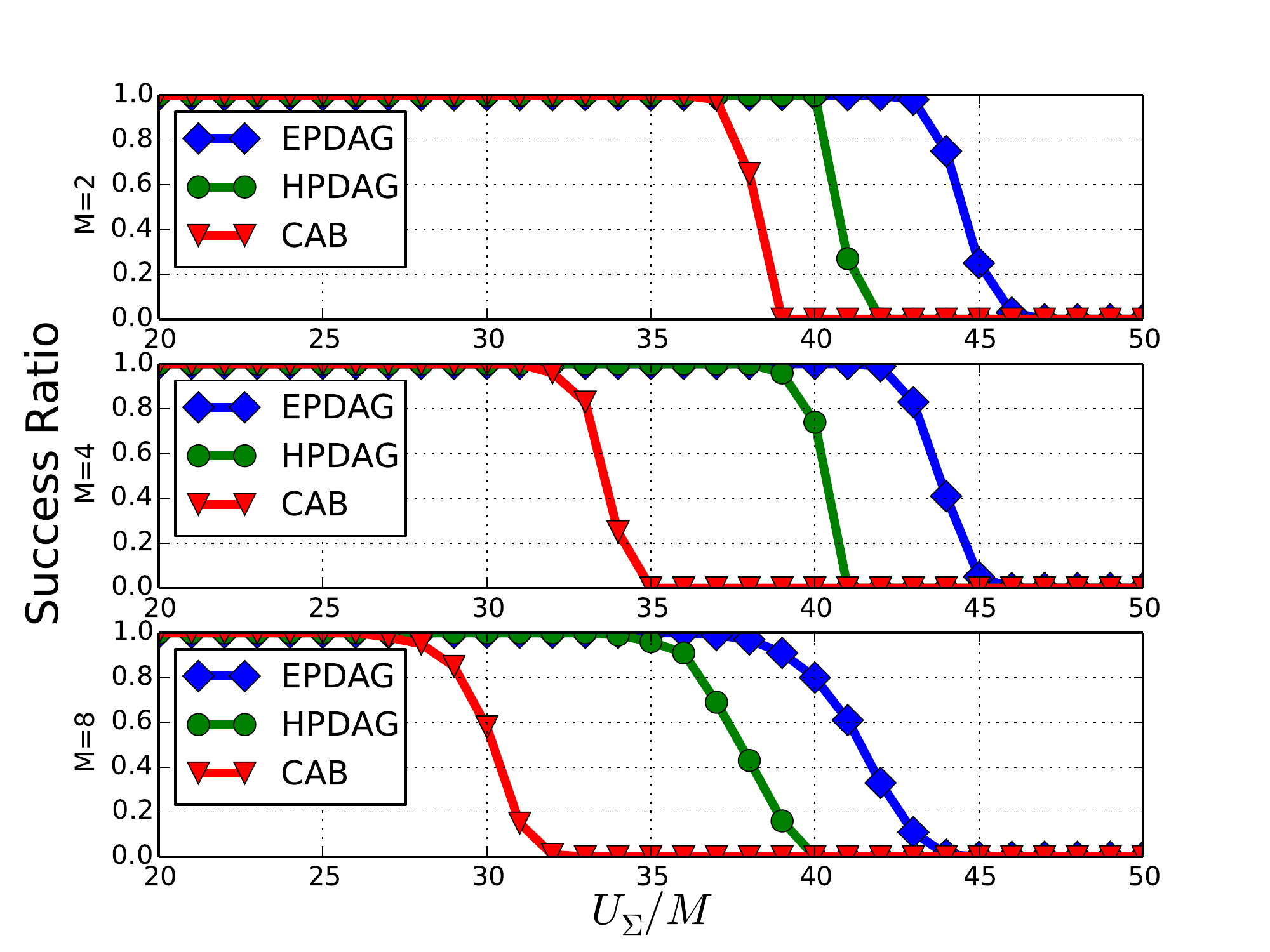}
    \caption{Success ratio comparison in DAG systems where $U_\Sigma$ is the total utilization of a DAG task set.}
    \label{fig:dag-exp}
\end{figure}
\subsection*{Evaluations for Implicit-Deadline Systems}
These tests in this experiment are as follows:
\vspace{-2mm}
  \begin{itemize}
 \item Extreme Points test (EP): We use Lemma~\ref{lemma:framework-general} in the analysis in Sec.~\ref{sec:application-constrained}.
 \item Hyperbolic Bound (HP): in Corollary~\ref{col:implicit-general-ratio}.
  \item Bini: in Corollary 2 in \cite{bini2009response}.
 \end{itemize}
\vspace{-2mm}
The cardinality of the task set was 10.

Figure~\ref{fig:exp-arbitrary} depicts the results for 3 different orders of magnitude, i.e., $p=1,2,3$.
For there tests, the success ratio is better if the order of magnitude is greater. The EP dominates all the other tests for all different orders of magnitude. On the other hand, the results from Bini et al. in \cite{bini2009response} and the proposed hyperbolic bound in 
Corollary~\ref{col:implicit-general-ratio} are comparable.
The performance by the proposed hyperbolic bound is better than by Bini et al. in \cite{bini2009response} for a smaller $p$ whereas Bini outperforms the proposed hyperbolic bound for a larger $p$. 

\begin{figure}[t]
 \centering
    \includegraphics[width=\columnwidth]{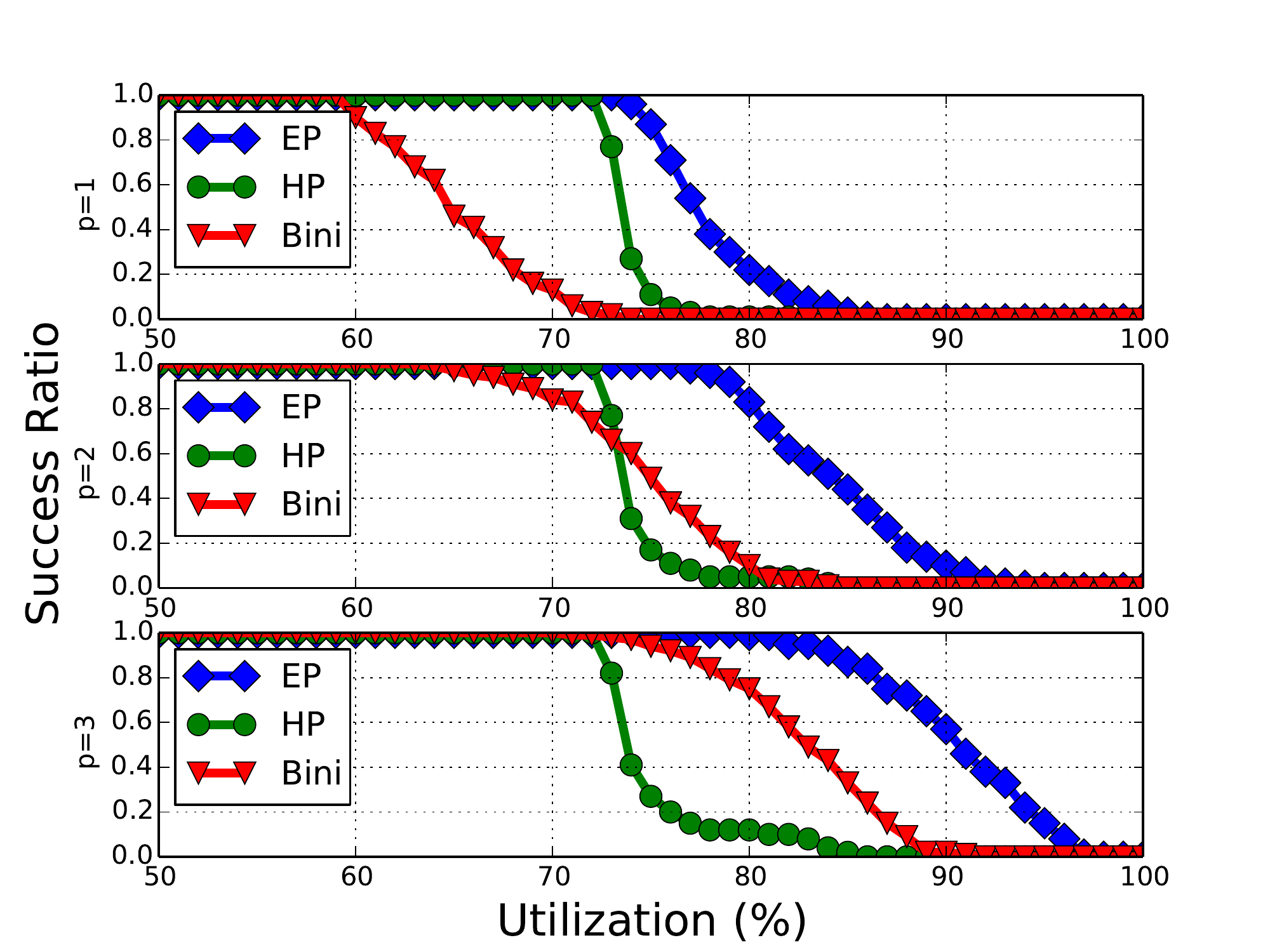}
    \caption{Success ratio comparison in implicit-deadline systems.}
    \label{fig:exp-arbitrary}
\end{figure}

\section*{Appendix E: Additional Properties in \framework{}}

We provide some additional properties that come directly from the \framework{} framework. These properties were not directly used in any of the above demonstrated examples. The first lemma is useful when the index of the $k-1$ higher priority tasks is not provided and cannot be determined while applying the schedulability tests. The results in Section~\ref{sec:framework} highly rely on the given order of the $k-1$ tasks. Therefore, without the given ordering, to be safe, we have to test all the permutations of the ordering of the $k-1$ tasks. Fortunately, the following lemma, as an extension of Lemma~\ref{lemma:framework-general}, shows that testing only one particular ordering is enough to provide a safe schedulability test.
\begin{lemma}
  \label{lemma:general-sorting}
  Suppose that the given $k$-point effective schedulability test, defined in Eq.~\eqref{eq:precodition-schedulability}, of a fixed-priority scheduling algorithm does not have a predefined order to index the $k-1$ higher-priority tasks.  Task $\tau_k$ is schedulable by the scheduling algorithm
 if
the following condition holds
\begin{equation}
\label{eq:schedulability-sorting}
0 < \frac{C_k}{t_k} \leq 1 -  \sum_{i=1}^{k-1}  \frac{U_i(\alpha_i
  +\beta_i)}{\prod_{j=i}^{k-1} (\beta_jU_j + 1)},
\end{equation}
by indexing the $k-1$ higher-priority tasks in a non-decreasing order of $\frac{\alpha_i}{\beta_i}$,
 in which  $0 < t_k$ and $0 < \alpha_i$ and $0 < \beta_i$ for any $i=1,2,\ldots,k-1$.
\end{lemma}
\begin{proof}
This lemma is proved by showing that the schedulability condition in Lemma~\ref{lemma:framework-general}, i.e., $1-\sum_{i=1}^{k-1}  \frac{U_i(\alpha_i
  +\beta_i)}{\prod_{j=i}^{k-1} (\beta_jU_j + 1)}$, is minimized, when the $k-1$ higher-priority tasks are indexed  in a non-decreasing order of $\frac{\alpha_i}{\beta_i}$. 
Suppose that there are two adjacent tasks $\tau_\ell$ and $\tau_{\ell+1}$ with $\frac{\alpha_\ell}{\beta_\ell} > \frac{\alpha_{\ell+1}}{\beta_{\ell+1}}$.
Let's now examine the difference of $\sum_{i=1}^{k-1}  \frac{U_i(\alpha_i
  +\beta_i)}{\prod_{j=i}^{k-1} (\beta_jU_j + 1)}$ by swapping the index of task $\tau_\ell$ and task $\tau_{\ell+1}$.  It can be easily observed that the other tasks $\tau_i$ with $i\neq \ell$ and $i\neq \ell+1$ do not change their corresponding values $\frac{U_i(\alpha_i
  +\beta_i)}{\prod_{j=i}^{k-1} (\beta_jU_j + 1)}$ in both orderings (before and after swapping $\tau_\ell$ and $\tau_{\ell+1}$). Suppose that $\frac{1}{\prod_{j=\ell}^{k-1} (\beta_jU_j + 1)}$ is $Q$, in which $Q > 0$. The difference in the term $\sum_{i=1}^{k-1}  \frac{U_i(\alpha_i
  +\beta_i)}{\prod_{j=i}^{k-1} (\beta_jU_j + 1)}$ before and after swapping tasks $\tau_\ell$ and $\tau_{\ell+1}$ is
  \begin{align*}\footnotesize
& \left( (\alpha_\ell + \beta_\ell)  U_\ell Q+ (\alpha_{\ell+1} + \beta_{\ell+1})  U_{\ell+1}  Q (1 + \beta_\ell U_\ell) \right) \\
& - \left( (\alpha_{\ell+1} + \beta_{\ell+1})  U_{\ell+1}  Q + (\alpha_\ell + \beta_\ell)  U_\ell  Q  (1 + \beta_{\ell+1}U_{\ell+1}) \right)\\
= & U_\ell   U_{\ell+1}   Q   (\beta_\ell \alpha_{\ell+1} - \beta_{\ell+1} \alpha_\ell) \\
 = &\beta_\ell\beta_{\ell+1}U_\ell   U_{\ell+1}   Q \left(\frac{\alpha_{\ell+1}}{\beta_{\ell+1}} - \frac{\alpha_\ell}{\beta_\ell}\right) <  0.
  \end{align*}\normalsize
Therefore, we reach the conclusion 
by repetitively swapping the tasks to achieve a  non-decreasing order of $\frac{\alpha_i}{\beta_i}$ for maximizing 
$\sum_{i=1}^{k-1}  \frac{U_i(\alpha_i
  +\beta_i)}{\prod_{j=i}^{k-1} (\beta_jU_j + 1)}$.
\end{proof}

Based on Lemma~\ref{lemma:general-sorting}, if all the $k-1$ higher-priority tasks have a constant $\beta_i=\beta > 0$, we know that they should be indexed  in a non-decreasing order of $\alpha_i$. There are also cases, in which we only know that the $k-1$ higher-priority tasks can be classified such that $h$ of them are associated with $\alpha_i = \alpha^\sharp$ and the other $k-1-h$ tasks are associated with $\alpha_i=\alpha^\flat$. This means that we are not sure whether task $\tau_i$ should be associated with $\alpha^\sharp$ or $\alpha^\flat$. Instead of testing all possible combinations, the following lemma shows that we only have to index the $k-1$ higher-priority tasks by their utilization non-decreasingly.

\begin{lemma}
  \label{lemma:general-fixed-beta-varied-alpha}
  Suppose that the $k$-point effective schedulability test, defined in Eq.~\eqref{eq:precodition-schedulability}, of a fixed-priority scheduling algorithm is defined with (1) a constant $\beta_i=\beta > 0$ for all the $k-1$ higher priority tasks and (2) uncertain $h$ higher-priority tasks ($h < k-1$) associated with $\alpha^\sharp$ and the remaining $k-1-h$ tasks with $\alpha^\flat$, where $\alpha^\sharp > \alpha^\flat > 0$. Task $\tau_k$ is schedulable by the scheduling algorithm
 if
the following condition holds
\begin{equation}
\label{eq:schedulability-sorting-varying-alpha}
0 < \frac{C_k}{t_k} \leq 1 -  \sum_{i=1}^{k-1}  \frac{U_i(\alpha_i
  +\beta)}{\prod_{j=i}^{k-1} (\beta U_j + 1)},
\end{equation}
by indexing the $k-1$ higher-priority tasks in a non-decreasing order of $U_i$ and assigning $\alpha_1, \alpha_2, \ldots, \alpha_{k-1-h}$ to $\alpha^\flat$ and $\alpha_{k-h}, \alpha_{k-h+1}, \ldots, \alpha_{k-1}$ to $\alpha^\sharp$.
\end{lemma}
\begin{proof}
  The assignment of the $\alpha_i$ values is due to Lemma~\ref{lemma:general-sorting}. Suppose $U_\ell > U_{\ell+1}$ for a certain $\ell$. There are three cases: (1) $\ell < k-1-h$, (2) $\ell = k-1-h$, and (3) $\ell > k-1-h$. For the first and the third cases, swapping the assignment of the utilization does not change the right-hand side of Eq.~\eqref{eq:schedulability-sorting-varying-alpha}. Only the second case matters. When $\ell = k-1-h$, it can be easily observed that due to the assumption $U_\ell > U_{\ell+1}$ and $\alpha_\ell < \alpha_{\ell+1}$, the right-hand side of Eq.~\eqref{eq:schedulability-sorting-varying-alpha} after swapping the assignment of the utilization (without swapping $\alpha_\ell$ and $\alpha_{\ell+1}$) is reduced. Therefore, the swapping makes the test harder. As a result, by repeating the above procedure, we reach the conclusion.
 \end{proof}
We will demonstrate how to use Lemma~\ref{lemma:general-fixed-beta-varied-alpha} in Appendix F for deriving a more precise hyperbolic bound with respect to global RM scheduling. The above lemma can be further generalized by allowing more levels of $\alpha_i$ values with a minor extension. However, there is no concrete clue whether such an extension may be useful. 
The following lemmas provide the utilization bound of Lemma~\ref{lemma:framework-totalU-constrained} for the case $\alpha+\beta > 1$.
\begin{lemma}
\label{lemma:framework-totalU-constrained-complete}
Suppose that $\alpha+\beta > 1$.
For a given $k$-point effective schedulability test of a scheduling
algorithm,  defined in
Definition~\ref{def:kpoints},
in which $0 < t_k$ and $0 < \alpha_i \leq \alpha$ and $0 < \beta_i \leq \beta$ for any
$i=1,2,\ldots,k-1$, task $\tau_k$ is schedulable by the scheduling
algorithm if 
\begin{equation}\small 
\label{eq:schedulability-totalU-constrained-complete}
\frac{C_k}{t_k} + \sum_{i=1}^{k-1}U_i \leq 
\begin{cases}
  \frac{\ln (1+\frac{\beta}{\alpha})}{\beta}, &  (\alpha+\beta)^{\frac{1}{k}} < \alpha\\  
  \frac{\ln (\alpha+\beta)-\alpha+1}{\beta}, & \mbox{otherwise.}\\  
\end{cases}
\end{equation}
\end{lemma}
\begin{proof}
  With the assumption $\alpha+\beta > 1$, we only have to consider the last two cases in Eq.~\eqref{eq:schedulability-totalU-constrained}. In both cases, the minimum bound happens when $k$ goes to $\infty$. For the case with $\frac{(k-1)((1+\frac{\beta}{\alpha})^{\frac{1}{k-1}}-1)}{\beta}$,  this is lower bounded by $\frac{\ln (1+\frac{\beta}{\alpha})}{\beta}$ when $k \rightarrow \infty$. For the other case, we have $\frac{(k-1)((\alpha+\beta)^{\frac{1}{k}}-1)+((\alpha+\beta)^{\frac{1}{k}}-\alpha)}{\beta} =  \frac{k((\alpha+\beta)^{\frac{1}{k}}-1)+1-\alpha}{\beta}$.
\end{proof}

\begin{lemma}
\label{lemma:framework-totalU-constrained-complete2}
Suppose that $\alpha+\beta > 1$.
For a given $k$-point effective schedulability test of a scheduling
algorithm,  defined in
Definition~\ref{def:kpoints},
in which $0 < t_k$ and $0 < \alpha_i \leq \alpha$ and $0 < \beta_i \leq \beta$ for any
$i=1,2,\ldots,k-1$, task $\tau_k$ is schedulable by the scheduling
algorithm if 
\begin{equation}\small 
\label{eq:schedulability-totalU-constrained-complete2}
\frac{1}{\alpha}\frac{C_k}{t_k} + \sum_{i=1}^{k-1}U_i \leq\frac{ k((1+\frac{\beta}{\alpha})^{\frac{1}{k}}-1)}{\beta}, 
\end{equation}
in which the right-hand side of Eq.~\eqref{eq:schedulability-totalU-constrained-complete2} is lower bounded by  $\frac{\ln (1+\frac{\beta}{\alpha})}{\beta}$ when $k\rightarrow \infty$.
\end{lemma}
\begin{proof}
  This comes directly from Eq.~\eqref{eq:schedulability-constrained}
  in Lemma~\ref{lemma:framework-constrained} with a simpler Lagrange
  Multiplier procedure. That is, the infimum $\frac{1}{\alpha}\frac{C_k}{t_k} + \sum_{i=1}^{k-1}U_i$ with $(\frac{C_k}{t_k}+\frac{\alpha}{\beta})(\beta U_1+1)^{k-1} > \frac{\alpha}{\beta}+1$ happens when $\frac{\beta}{\alpha}\frac{C_k}{t_k}= \beta U_1 = \beta U_2 = \cdots = \beta U_{k-1} = ((1+\frac{\beta}{\alpha})^{\frac{1}{k}}-1)$. 
\end{proof}
With Lemma~\ref{lemma:framework-totalU-constrained-complete2}, the first case in Eq.~\eqref{eq:schedulability-totalU-constrained-complete} already pessimistically holds when $\alpha \geq 1$. The above two lemmas generalize the results in \cite{RTSS14a} for the case $1 \leq \alpha \leq 2$ and $\beta = 1$. 

\vspace{-0.13in}
\section*{Appendix F: Improved Global RM Test}

The test in Section~\ref{sec:multiprocessor} of global RM scheduling for sporadic tasks can be improved by using a tighter schedulability test from Guan et al. \cite{DBLP:conf/rtss/GuanSYY09}. It has been concluded by Guan et al. \cite{DBLP:conf/rtss/GuanSYY09} that we only have to consider $M-1$ tasks with carry-in jobs, for constrained-deadline task sets, when considering task $\tau_k$ with $k > M$. For implicit-deadline task sets, this means that we only need to set $\alpha_i$ of some tasks to $\frac{2}{M}$, rather than all the $k-1$ tasks in Eq.~\eqref{eq:W_i-multiprocessor}. More precisely, we can define two different time-demand functions, depending on whether task $\tau_i$ is with a carry-in job or not:\footnote{We still use the step-wise function here, which is an over-approximation of the linear function used by Guan et al. \cite{DBLP:conf/rtss/GuanSYY09}. The procedure here is a bit different as we only present a concept to transform the test to the \framework{} framework. In \cite{DBLP:conf/rtss/GuanSYY09}, the analysis for constrained-deadline task sets first requires to stretch the window of interest with a length $\phi$. However, such a length $\phi$ should be $0$ in the worst case \cite{DBLP:conf/rtss/GuanSYY09}. Therefore, we are more pessimistic here.}
\begin{equation}
  \label{eq:W_i-carryin}
W_i^{carry}(t) =
\begin{cases}
  C_i & 0 < t < C_i\\
  C_i + \ceiling{\frac{t-C_i}{T_i}}C_i & otherwise,
\end{cases}
\end{equation}
and
\begin{equation}
  \label{eq:W_i-normal}
W_i^{normal}(t) = \ceiling{\frac{t}{T_i}}C_i.
\end{equation}
Moreover, we can further over-approximate $W_i^{carry}(t)$, since  $W_i^{carry}(t) \leq W_i^{normal}(t)+C_i$. Therefore, a sufficient schedulability test for testing task $\tau_k$ with $k > M$ for global RM is to verify whether 
\begin{equation}
  \label{eq:grm-multiprocessor-M-1-carryin}
\exists 0 < t \leq T_k, C_k + \frac{(\sum_{\tau_i \in {\bf T}'} C_i) +  (\sum_{i=1}^{k-1}W_i^{normal}(t)) }{M} \leq t.  
\end{equation}
for all ${\bf T}' \subseteq hp(\tau_k)$ with $|{\bf T}'| = M-1$. That is, if
the above test in Eq.~\eqref{eq:grm-multiprocessor-M-1-carryin} passes for any selection of $M-1$ higher-priority tasks to have carry-in jobs, then task $\tau_k$ is schedulable by global RM.  For a given carry-in task set ${\bf T}'$ with $|{\bf T}'|=M-1$, the translation to the \framework{} framework is as follows: (1) the parameters are $0 < \alpha_i \leq \frac{2}{M}$ and $0 < \beta_i \leq \frac{1}{M}$ by using Eq.~\eqref{eq:grm-multiprocessor-M-1-carryin} if $\tau_i$ is in ${\bf T}'$, and (2) 
the parameters are $\alpha_i = \frac{1}{M}$ and $0 < \beta_i \leq \frac{1}{M}$ by using Eq.~\eqref{eq:grm-multiprocessor-M-1-carryin} if $\tau_i$ is not in ${\bf T}'$. Therefore, this is exactly the case when Lemma~\ref{lemma:general-fixed-beta-varied-alpha} can be adopted by selecting $M-1$ higher-priority tasks with $\alpha_i=\frac{2}{M}$ and the remaining $k-M$ higher-priority tasks with $\alpha_i=\frac{1}{M}$, where $\beta_i$ is set safely to $\frac{1}{M}$. Therefore, we reach the conclusion with the following theorem by the above analysis and Lemma~\ref{lemma:general-fixed-beta-varied-alpha}. 

\begin{theorem}
\label{thm:multiprocessor-GRM-M-1-carry}
Task $\tau_k$ with $k > M$ in a sporadic implicit-deadline task system is
schedulable by global RM on $M$ processors if
\begin{equation}
\label{eq:schedulability-GRM-M-1-carry}
0 < U_k\leq 1 -  \sum_{i=1}^{k-1}  \frac{U_i(\alpha_i
  +\frac{1}{M})}{\prod_{j=i}^{k-1} (\frac{1}{M} U_j + 1)},
\end{equation}
by indexing the $k-1$ higher-priority tasks in a non-decreasing order of $U_i$ and assigning $\alpha_1, \alpha_2, \ldots, \alpha_{k-M}$ to $\frac{1}{M}$ and $\alpha_{k-M+1}, \alpha_{k-M+2}, \ldots, \alpha_{k-1}$ to $\frac{2}{M}$.
\end{theorem}

\end{document}